\documentclass{lmcs} 
\pdfoutput=1
\usepackage[utf8]{inputenc}

\usepackage{lastpage}
\lmcsdoi{19}{4}{15}
\lmcsheading{}{\pageref{LastPage}}{}{}%
{Oct.~29,~2021}{Nov.~27,~2023}{}

\usepackage{amsmath}
\usepackage{amssymb}
\usepackage{amsfonts}
\usepackage{amsthm}
\usepackage{multirow}
\usepackage{booktabs}

\usepackage{wasysym}
\usepackage{array}

\newcommand{\DBfont}[1]{\mathcal{#1}}

\newcommand{\dataCom}[1]{\bigO^{\textsf{dc}}(#1)}

\newcommand{\DBtimesRE}[2]{G_{\boxtimes}(#1, #2)}
\newcommand{\DBtimesREV}[2]{V_{\boxtimes}(#1, #2)}
\newcommand{\DBtimesREE}[2]{E_{\boxtimes}(#1, #2)}

\DeclareMathOperator{\boole}{\textsf{B}}

\DeclareMathOperator{\degree}{\Delta}
\DeclareMathOperator{\avgdegree}{\overline{\degree}}

\DeclareMathOperator{\RE}{\mathsf{RE}}

\DeclareMathOperator{\poly}{\mathsf{poly}}

\DeclareMathOperator{\OMv}{\textsf{OMv}}
\DeclareMathOperator{\OV}{\textsf{OV}}

\DeclareMathOperator{\TriProb}{\textsf{Triangle}}
\DeclareMathOperator{\combBMMProb}{\textsf{com-BMM}}

\DeclareMathOperator{\DBD}{\DBfont{D}}

\DeclareMathOperator{\npclass}{\mathsf{NP}}

\DeclareMathOperator{\pspaceclass}{\mathsf{PSPACE}}

\DeclareMathOperator{\bigO}{O}

\DeclareMathOperator{\smallO}{o}

\DeclareMathOperator{\evalProb}{\textsc{RPQ-Eval}}
\DeclareMathOperator{\witnessProb}{\textsc{RPQ-Witness}}
\DeclareMathOperator{\booleProb}{\textsc{RPQ-Boole}}
\DeclareMathOperator{\enumProb}{\textsc{RPQ-Enum}}
\DeclareMathOperator{\checkProb}{\textsc{RPQ-Test}}
\DeclareMathOperator{\countProb}{\textsc{RPQ-Count}}

\newcommand{\enumProbClass}[1]{\enumProbShort(#1)}

\DeclareMathOperator{\evalProbShort}{\textsc{Eval}}
\DeclareMathOperator{\witnessProbShort}{\textsc{Witness}}
\DeclareMathOperator{\booleProbShort}{\textsc{Boole}}
\DeclareMathOperator{\enumProbShort}{\textsc{Enum}}
\DeclareMathOperator{\checkProbShort}{\textsc{Test}}
\DeclareMathOperator{\countProbShort}{\textsc{Count}}

\DeclareMathOperator{\approxEnumProb}{\textsc{App-RPQ-Enum}}

\newcommand{\RPQ}{\mathsf{RPQ}}
\newcommand{\CRPQ}{\mathsf{CRPQ}}

\newcommand{\CQ}{\mathsf{CQ}}

\newcommand{\BTRPQ}{\mathsf{BT\text{-}RPQ}}
\newcommand{\SRPQ}{\mathsf{S\text{-}RPQ}}

\DeclareMathOperator{\BMMProb}{\textsf{BMM}}

\DeclareMathOperator{\SBMMProb}{\textsf{SBMM}}

\DeclareMathOperator{\init}{0}

\DeclareMathOperator{\NFA}{\mathsf{NFA}}

\DeclareMathOperator{\eword}{\varepsilon}

\DeclareMathOperator{\lang}{\mathcal{L}}

\DeclareMathOperator{\altop}{\vee}

\DeclareMathOperator{\preprocess}{\mathsf{preprocess}}

\DeclareMathOperator{\update}{\mathsf{update}}
\DeclareMathOperator{\enum}{\mathsf{enum}}

\newcommand{\ta}{\ensuremath{\mathtt{a}}}
\newcommand{\tb}{\ensuremath{\mathtt{b}}}
\newcommand{\tc}{\ensuremath{\mathtt{c}}}
\newcommand{\td}{\ensuremath{\mathtt{d}}}

\keywords{Graph Databases, Regular Path Queries, Enumeration, Fine-Grained Complexity}

\usepackage{hyperref}

\begin{document}

\title[Fine-Grained Complexity of Regular Path Queries]{Fine-Grained Complexity of Regular Path Queries}
\titlecomment{This is the full version of the article~\cite{CaselSchmid2021}.
The first author has been funded by the Federal Ministry of Education and Research of Germany (BMBF) in the KI-LAB-ITSE framework -- project number 01IS19066. The second author has been funded by the German Research Foundation (Deutsche Forschungsgemeinschaft, DFG) -- project number 416776735 (gef\"ordert durch die Deutsche Forschungsgemeinschaft (DFG) -- Projektnummer 416776735).\\
\indent \emph{2012 ACM Subject Classification}: Theory of computation $\rightarrow$ Regular languages; Theory of computation $\rightarrow$ Problems, reductions and completeness; Theory of computation $\rightarrow$ Database query languages (principles); Theory of computation $\rightarrow$ Data structures and algorithms for data management}

\author[K. Casel]{Katrin Casel\lmcsorcid{0000-0001-6146-8684}}[a]	

\author[M. L.\ Schmid]{Markus L.\ Schmid\lmcsorcid{0000-0001-5137-1504}}[b]

\address{Hasso Plattner Institute, University of Potsdam, Potsdam, Germany}	
\email{Katrin.Casel@hpi.de}

\address{Humboldt-Universit\"at zu Berlin, Unter den Linden 6, D-10099, Berlin, Germany}	
\email{MLSchmid@MLSchmid.de}

\begin{abstract}
\noindent A regular path query (RPQ) is a regular expression $q$ that returns all node pairs $(u, v)$ from a graph database that are connected by an arbitrary path labelled with a word from $L(q)$. The obvious algorithmic approach to RPQ-evaluation (called PG-approach), i.\,e., constructing the product graph between an NFA for $q$ and the graph database, is appealing due to its simplicity and also leads to efficient algorithms. However, it is unclear whether the PG-approach is optimal. We address this question by thoroughly investigating which upper complexity bounds can be achieved by the PG-approach, and we complement these with conditional lower bounds (in the sense of the fine-grained complexity framework). A special focus is put on enumeration and delay bounds, as well as the data complexity perspective. A main insight is that we can achieve optimal (or near optimal) algorithms with the PG-approach, but the delay for enumeration is rather high (linear in the database). We explore three successful approaches towards enumeration with sub-linear delay: super-linear preprocessing, approximations of the solution sets, and restricted classes of RPQs.
\end{abstract}

\maketitle

\section{Introduction}

An essential component of graph query languages (to be found both in academical prototypes as well as in industrial solutions) are \emph{regular path queries} ($\RPQ$s). Abstractly speaking, a regular expression $q$ over some alphabet $\Sigma$ is interpreted as query that returns from a $\Sigma$-edge-labelled, directed graph $\DBD$ (i.\,e., a \emph{graph database}) the set $q(\DBD)$ of all node pairs $(u, v)$ that are connected by a $q$-path, i.\,e., a path labelled with a word from $q$'s language (and possibly also a witness path per node pair, or even all such paths). This simple, yet relevant concept has heavily been studied in database theory (the following list is somewhat biased towards recent work): results on $\RPQ$s~\cite{CruzEtAl1987,BaganEtAl2020,LosemannMartens2013,MartensEtAl2020,MartensTrautner19_2,BienvenuThomazo2016}, conjunctive $\RPQ$s~\cite{BienvenuEtAl2015,ReutterEtAl17,BagetEtAl2017} and extensions thereof~\cite{LibkinEtAl2016,BarceloEtAl2012,LibkinEtAl2016,FreydenbergerSchweikardt2013}, questions of static analysis~\cite{FigueiraEtAl2020,BarceloEtAl2019,Figueira2020,GluchEtAl2019,RomeroEtAl2017}, experimental analyses~\cite{BonifatiEtAl2017,BonifatiEtAl2020,MartensTrautner2019}, and surveys of this research area~\cite{Barcelo2013,Wood2012,AnglesEtAl2017,CalvaneseEtAl2003}.\par
In the simplest setting, where we are only interested in the node pairs (but no paths) connected by  \emph{arbitrary} $q$-paths (instead of, e.\,g., simple paths), evaluation can be done efficiently. Deviating from this simple setting, however, leads to intractability: if we ask for nodes connected by \emph{simple paths} (no repeated nodes), or connected by \emph{trails} (no repeated arcs), then $\RPQ$ evaluation is $\npclass$-hard even in data-complexity (see~\cite{MendelzonWood1995, BaganEtAl2020, MartensTrautner2018}~and~\cite{MartensEtAl2020}, respectively). Note that the simple path and trail semantics are mostly motivated by the fact that under these semantics there is only a finite number of $q$-paths per node pair. If we move to conjunctions of $\RPQ$s ($\CRPQ$s) or even more powerful extensions motivated by practical requirements, then also with the arbitrary path semantics evaluation becomes intractable in combined complexity (i.\,e., they inherit hardness from relational conjunctive queries ($\CQ$s)). \par
In order to guide practical developments in the area of graph databases, the computational hard cases of $\RPQ$ (and $\CRPQ$) evaluation have been thoroughly investigated in database theory. However, with respect to arbitrary $q$-paths, research seems to have stopped at the conclusion that efficient evaluation is possible by the following simple \emph{PG-approach}: given graph database $\DBD = (V_{\DBD}, E_{\DBD})$ and $\NFA$ $M_q$ for $q$ with state set $V_q$, construct the \emph{product graph} $G_{\DBD, q}$ with nodes $V_{\DBD} \times V_q$ and an arc $((u, p), (v, p'))$ iff, for some $\ta \in \Sigma$, $\DBD$ has an arc $(u, \ta, v)$ and $M_q$ has an arc $(p, \ta, p')$, and then use simple graph-searching techniques on $G_{\DBD, q}$. \par
The PG-approach is explicitly defined in several papers, e.\,g.,~\cite{MendelzonWood1995, Barcelo2013, MartensTrautner2018}, and mainly used to prove a worst-case upper bound (actually $\bigO(|q||\DBD|)$ for Boolean evaluation; in~\cite{MartensTrautner2018} it is used for enumerating $q$-paths between two given nodes. But it is also very appealing from a practical point of view due to its simplicity: we are just coupling well-understood algorithmic concepts like finite automata and graph reachability algorithms. Arguably, implementing the PG-approach is an exercise suitable for a first year programming course (making it feasible and cost-efficient for industrial systems). As it seems, putting $\RPQ$ evaluation with arbitrary path semantics and the respective PG-approach into the focus of a thorough theoretical study has not yet been done. This paper is devoted to this task. In particular, we wish to investigate the following two (somewhat overlapping) aspects:

\begin{enumerate}
\item Applicability of the PG-approach: the PG-approach is suited for solving simple evaluation problems like checking $q(\DBD) = \emptyset$ or $(u, v) \in q(\DBD)$ (for given $u, v \in V_{\DBD}$), but is it also appropriate for more relevant tasks like computing, counting or enumerating $q(\DBD)$?
\item Optimality of the PG-approach: Does the PG-approach lead to optimal algorithms, or can it be beaten by conceptionally different techniques?
\end{enumerate}

Answering these questions provides a better theoretical understanding of $\RPQ$-evaluation (which, as mentioned above, are at the heart of many graph query languages). But also for the more powerful $\CRPQ$s and more practically motivated graph query languages, we can derive valuable insights from our investigation. Let us mention two such examples (a complete summary of our results follows further down). As noted in~\cite{Barcelo2013}, we can reduce $\CRPQ$ evaluation to the evaluation of relational $\CQ$s by first constructing all tables represented by the single $\RPQ$s and then evaluating a $\CQ$ over this database. To do this, we first have to compute the results of all $\RPQ$s, so it seems helpful to know the best algorithms for this intermediate task. Moreover, if we want to benefit from the existing $\CQ$ evaluation techniques (e.\,g., exploiting acyclicity etc.) we are more or less forced to this two-step approach. With respect to \emph{enumerations} of $\CQ$s, it is known that linear preprocessing and constant delay enumeration is possible provided that the $\CQ$s satisfy certain acyclicity properties (see~\cite{BaganEtAl2007, BerkholzEtAl2017}, or the surveys~\cite{BerkholzEtAl2020, Segoufin2015}). Unfortunately, these techniques do not carry over to $\CRPQ$s since, as we show, linear preprocessing and constant delay enumeration is not possible even for single $\RPQ$s (conditional to some complexity assumptions).

Since the problem we investigate can be solved in polynomial time (also in combined complexity), we cannot show lower bounds in terms of hardness results for complexity classes like $\npclass$ or $\pspaceclass$. Instead, we make use of the framework of \emph{fine-grained complexity}, which allows to prove lower bounds that are conditional on some algorithmic assumptions (see the surveys~\cite{Williams2015,Bringmann2019,Williams2018_2}). In particular, fine-grained complexity is a rather successful toolbox for giving evidence that the \emph{obvious} algorithmic approach to some basic problem, is also the \emph{optimal} one. This is exactly our setting here, with respect to $\RPQ$-evaluation and the PG-approach. To the knowledge of the authors, such conditional lower bounds are not yet a well-established technique in database theory (however, see~\cite[Section~6]{BerkholzEtAl2020} for a survey of conditional lower bounds in the context of $\CQ$ enumeration). \par
A main challenge is that fine-grained complexity is not exactly tailored to either the data-complexity perspective or to enumeration problems. We will next outline our results.

\subsection{Our Contribution}

\begin{table}

\renewcommand{\arraystretch}{1.2}    
\begin{tabular}{|l|l|l|l|l|}\hline
\multicolumn{2}{|l|}{\textbf{Non-enum. Results}}  &  \begin{minipage}[c][0.8cm][c]{3cm}$\booleProbShort$, $\checkProbShort$, $\witnessProbShort$\end{minipage} & $\evalProbShort$ & $\countProbShort$ \\\noalign{\hrule height 1.5pt}

\multicolumn{2}{|l|}{\begin{minipage}[t][0.8cm][t]{2cm}upper bounds\end{minipage}} & $\bigO(|\DBD||q|)$ & \begin{minipage}[t][0.8cm][t]{3cm}$\bigO(|V_{\DBD}||\DBD||q|)$\\$\bigO((|V_{\DBD}||q|)^{\omega})$\end{minipage} & \begin{minipage}[t][0.8cm][t]{2.7cm}$\bigO(|V_{\DBD}||\DBD||q|)$\\$\bigO((|V_{\DBD}||q|)^{\omega})$\end{minipage}\\\noalign{\hrule height 1.5pt}

\multirow{3}{1.4cm}{\begin{minipage}{1cm}lower\\ bounds\end{minipage}} & \begin{minipage}{2.6cm} $\OV \& \combBMMProb$ \end{minipage} & \begin{minipage}{3cm}$\bigO((|\DBD||q|)^{1-\epsilon})$ \end{minipage} & --- & ---\\\cline{2-5}

& $\OV$ & --- &  ---  & \begin{minipage}{3cm}$\dataCom{(|V_{\DBD}||\DBD|)^{1-\epsilon}}$\end{minipage} \\\cline{2-5}

& $\SBMMProb$ & --- & \begin{minipage}{3cm}$\dataCom{|q(\DBD)| + |\DBD|}$ \end{minipage} & --- \\\cline{2-5}

& \begin{minipage}{1.9cm}$\combBMMProb$\end{minipage} &  --- & \begin{minipage}{2.7cm}$\dataCom{(|V_{\DBD}||\DBD|)^{1-\epsilon}}$ \end{minipage} & --- \\\hline

\end{tabular}
\caption{All upper bounds can be achieved as running times of some algorithm, while the lower bounds cannot be achieved as running time by any algorithm, unless the displayed hypothesis fails. The exponent $\omega$ denotes the best known matrix multiplication exponent.}
\label{ResultsTableNonEnum}
\end{table}

All investigated $\RPQ$-evaluation problems are summarised on page~\pageref{ProblemsTable} (see especially Table~\ref{ProblemsTable}). In the following, $\DBD = (V_{\DBD}, E_{\DBD})$ is the graph database, $q$ is the $\RPQ$, and $\epsilon > 0$. With the notation $\dataCom{\cdot}$, we hide factors $f(|q|)$ for some function $f$ (i.\,e., it is used for stating data-complexities). All lower bounds mentioned in the following are conditional to some of the algorithmic assumptions summarised in Section~\ref{sec:fineGrainedComplexity} (we encourage the reader less familiar with fine-grained complexity hypotheses to have a look at this section first, which can be read independently). For presentational reasons, we do not always explicitly mention this in the rest of the introduction and when we say that a certain running time is ``not possible'', this statement is always conditional in this sense (see Tables~\ref{ResultsTableNonEnum}~and~\ref{ResultsTableEnum} for the actual hypotheses). As common in fine-grained complexity, we rule out \emph{true} sub-linear ($\bigO(n^{1-\epsilon})$), sub-quadratic ($\bigO(n^{2-\epsilon})$), or sub-cubic ($\bigO(n^{3-\epsilon})$) running times, but not possible running time improvements by logarithmic factors, e.\,g., $\bigO(\frac{n^{3}}{\log(n)})$.

\subsubsection{Non-Enumeration Variants} 

The following results are summarised in Table~\ref{ResultsTableNonEnum}. For the simple problems $\booleProbShort$ (checking $q(\DBD) = \emptyset$), $\checkProbShort$ (checking $(u, v) \in q(\DBD)$) and $\witnessProbShort$ (computing some element from $q(\DBD)$), the PG-approach yields an upper bound of $\bigO(|\DBD||q|)$, which is optimal (since linear) in data complexity, and we can show lower bounds demonstrating its optimality also in combined complexity. For $\evalProbShort$ (computing the set $q(\DBD)$) the PG-approach yields a data complexity upper bound of $\dataCom{|V_{\DBD}||\DBD|}$, which cannot be improved by \emph{combinatorial} algorithms, although $\dataCom{|V_{\DBD}|^{2.37}}$ is possible by fast matrix multiplication (see Section~\ref{sec:fineGrainedComplexity} for a discussion of the meaning of the term ``combinatorial''). In addition, we can show that linear time data complexity, i.\,e., $\dataCom{|q(\DBD)| + |\DBD|}$, is not possible even for non-combinatorial algorithms. For $\countProbShort$ (computing $|q(\DBD)|$), we get $\dataCom{|V_{\DBD}||\DBD|}$ as upper and lower bound, not restricted to combinatorial algorithms.

\subsubsection{Enumeration}

Our results for $\RPQ$-enumeration are summarised in Table~\ref{ResultsTableEnum}. An entry ``$\bigO(\text{delay})$'' in column ``preprocessing'' means that the preprocessing is bounded by the delay (which means that no preprocessing is required). The column ``sorted'' indicates whether the enumeration is produced lexicographically sorted.\par
In comparison to the non-enumeration problem variants, the picture is less clear and deserves more explanation. The PG-approach yields a simple enumeration algorithm with delay $\bigO(|\DBD||q|)$, that also trivially supports updates in constant time, since the preprocessing fits into the delay bound. Our lower bounds for $\booleProbShort$ also mean that this delay cannot be improved in terms of \emph{combined complexity}. While this lower bound was interesting for problems like $\booleProbShort$ etc., it now gives a correct answer to the wrong question. The main goal now should be to find out whether we can remedy the linear dependency of the delay on $|\DBD|$, at the expense of spending more time in terms of $|q|$, or of losing the ability of handling updates, or even of allowing a slightly super-linear preprocessing.\par
In this regard, the strongest result would be linear preprocessing $\bigO(|\DBD|f(|q|))$ and constant delay $\bigO(f(|q|))$. However, we can rule this out even for algorithms not capable of handling updates. Then, the next question is which non-constant delays can be achieved that are strictly better than linear. For example, none of our lower bounds for the non-enumeration variants suggest that linear preprocessing and a delay bounded by, e.\,g., $|V_{\DBD}|$ or the degree of $\DBD$, should not be possible. We are \emph{not} able to answer this question in its general form (and believe it to be very challenging), but we are able to provide several noteworthy insights. \par
For linear preprocessing, a delay of $\bigO(|V_{\DBD}|)$ (if possible at all) cannot be beaten by combinatorial algorithms (even without updates). This can be strengthened considerably, if we also require updates in some reasonable time: for general algorithms (i.\,e., not necessarily combinatorial) delay \emph{and} update time strictly better than $\dataCom{|V_{\DBD}|}$ is not possible even with arbitrary preprocessing, and for combinatorial algorithms with linear preprocessing even delay \emph{and} update time of $\bigO(|\DBD|)$ cannot be beaten.
This last result nicely complements the upper bound at least for combinatorial algorithms and in the dynamic case.\par
In summary, for linear preprocessing, $\bigO(|V_{\DBD}|)$ is a lower bound for the delay and if we can beat $\bigO(|\DBD|)$, we should not be able to also support updates.

\begin{table}
\renewcommand{\arraystretch}{1.2}
\begin{tabular}{|l|l|l|l|l|l|}\hline

\multicolumn{2}{|l|}{\textbf{Enum. Results}} & \multicolumn{4}{l|}{$\enumProbShort$} \\\cline{3-6}

\multicolumn{2}{|l|}{\begin{minipage}{2cm}\end{minipage}} & \begin{minipage}{2.7cm}preprocessing\end{minipage} & \begin{minipage}{2.4cm}delay\end{minipage} & sorted & updates\\\noalign{\hrule height 1.5pt} 

\multicolumn{2}{|l|}{\begin{minipage}[t][0.8cm][t]{2cm}upper bounds\end{minipage}} & \begin{minipage}{2cm}$\bigO(\text{delay})$\end{minipage} & \begin{minipage}{2cm}$\bigO(|\DBD||q|)$\end{minipage} & \begin{minipage}{1cm}\checked \end{minipage}&  \begin{minipage}{2.35cm}$\bigO(1)$\end{minipage} \\\noalign{\hrule height 1.5pt}

\multirow{3}{1.3cm}{\begin{minipage}{1cm}lower\\ bounds\end{minipage}} & \begin{minipage}{2.8cm} $\OV \& \combBMMProb$ \end{minipage} &  \begin{minipage}{2cm}$\bigO(\text{delay})$\end{minipage}&\begin{minipage}{2cm}$\bigO((|\DBD||q|)^{1-\epsilon})$ \end{minipage} & $\times$ & $\times$\\\cline{2-6}

& $\SBMMProb$ & \begin{minipage}{1cm}$\dataCom{|\DBD|}$\end{minipage} &  $\dataCom{1}$ & $\times$ & $\times$\\\cline{2-6}

& \begin{minipage}{1cm}$\combBMMProb$\end{minipage}  & $\dataCom{|\DBD|}$ &  $\dataCom{|V_{\DBD}|^{1-\epsilon}}$ & $\times$ & $\times$ \\\cline{2-6}

& $\OMv$ & arbitrary & $\dataCom{|V_{\DBD}|^{1-\epsilon}}$ & $\times$ & $\dataCom{|V_{\DBD}|^{1-\epsilon}}$\\\cline{2-6}

& $\combBMMProb$ & $\dataCom{|\DBD|}$ & $\dataCom{|V_{\DBD}|^{2-\epsilon}}$ & $\times$ & $\dataCom{|V_{\DBD}|^{2-\epsilon}}$\\\hline

\end{tabular}
\caption{All upper bounds can be achieved as running times of some algorithm, while the lower bounds cannot be achieved as running time by any algorithm, unless the displayed hypothesis fails. The exponent $\omega$ denotes the best known matrix multiplication exponent.}
\label{ResultsTableEnum}
\end{table}

\subsubsection{Enumeration of Restricted Variants} 

Finally, we obtain restricted problem variants that can be solved with delay strictly better than $\bigO(|\DBD|)$ (in data complexity). We explore three different approaches: 
\begin{enumerate}
\item by allowing super-linear preprocessing of $\dataCom{\avgdegree(\DBD) \log(\avgdegree(\DBD))|\DBD|}$ (where $\avgdegree(\DBD)$ is the average degree of $\DBD$), we can achieve a delay of $\bigO(|V_{\DBD}|)$; 
\item in linear preprocessing and constant delay, we can enumerate a representative subset of $q(\DBD)$ instead of the whole set $q(\DBD)$; 
\item for a subclass of $\RPQ$s, we can solve $\enumProb$ with linear preprocessing and delay $\bigO(\degree(\DBD))$ (where $\degree(\DBD)$ is the maximum degree of $\DBD$). 
\end{enumerate}

\section{Main Definitions}\label{sec:mainDefs}

Let $\mathbb{N} = \{1, 2, 3, \ldots\}$ and $[n] = \{1, 2, \ldots, n\}$ for $n \in \mathbb{N}$. For a finite alphabet $A$, $A^+$ denotes the set of non-empty words over $A$ and $A^* = A^+ \cup \{\eword\}$ (where $\eword$ is the empty word). For a word $w \in A^*$, $|w|$ denotes its length; $w^1 = w$ and $w^k = w w^{k-1}$ for every $k \geq 2$. For $L, K \subseteq A^*$, let $L \cdot K = \{w_1 \cdot w_2 \mid w_1 \in L, w_2 \in K\}$, let $L^1 = L$ and $L^k = L \cdot L^{k-1}$ for every $k \geq 2$, let $L^+ = \bigcup_{k \geq 1} L^k$ and $L^* = L^+ \cup \{\eword\}$.

\subsection{\texorpdfstring{$\Sigma$}{Σ}-Graphs} 

We now define the central graph model that is used to represent graph databases as well as finite automata. Let $\Sigma$ be a finite alphabet of constant size. A \emph{$\Sigma$-graph} is a directed, edge labelled multigraph $G = (V, E)$, where $V$ is the set of \emph{vertices} (or \emph{nodes}) and $E \subseteq V \times (\Sigma \cup \{\eword\}) \times V$ is the set of \emph{edges} (or \emph{arcs}). For $u  \in V$ and $x \in \Sigma \cup \{\eword\}$, $E_x(u) = \{v \mid (u, x, v) \in E\}$ is the set of \emph{$x$-successors of $u$}. A path from $w_0 \in V$ to $w_k \in V$ of length $k \geq 0$ is a sequence $p = (w_0, a_1, w_1, a_2, w_2 \ldots, w_{k-1}, a_{k}, w_k)$ with $(w_{i-1}, a_{i}, w_{i}) \in E$ for every $i \in [k]$. We say that $p$ is \emph{labelled} with the \emph{word} $a_1 a_2 \ldots a_k \in \Sigma^*$. According to this definition, for every $v \in V$, $(v)$ is a path from $v$ to $v$ of length $0$ that is labelled by $\eword$. Hence, every node $v$ of a $\Sigma$-graph has an $\eword$-labelled path to itself, even though there might not be an $\eword$-arc from $v$ to $v$. Moreover, due to $\eword$ as a possible edge-label, paths of length $k$ may be labelled with words $w$ with $|w| < k$. The size of $G = (V, E)$ is $|G| = \max\{|V|, |E|\}$.  \par
For any $\Sigma$-graph $G = (V, E)$, we call $(V, \{(u, v) \mid u \neq v \wedge \exists x \in \Sigma \cup \{\eword\}: (u, x, v) \in E\})$ the \emph{underlying graph} of $G$ (note that the underlying graph is simple, non-labelled and has no loops). In particular, by a slight abuse of notation, we denote by $E^*$ the reflexive-transitive closure of the underlying graph of $G$. Since we always assume $|\Sigma|$ to be a constant, we have that $|G| = \Theta(\max\{|V|, |\{(u, v) \mid u \neq v \wedge \exists x \in \Sigma \cup \{\eword\}: (u, x, v) \in E\}|\})$ (i.\,e., $|G|$ is asymptotically equal to the size of its underlying graph). For every $u \in V$, the \emph{degree of $u$} is $\degree(u) = |\bigcup_{x \in \Sigma \cup \{\eword\}}E_x(u)|$ (so $\degree(u)$ is actually the out-degree), and the \emph{maximum degree} of $G$ is $\degree(G) = \max\{|\degree(u)| \mid u \in V\}$. The \emph{average degree} of $G$ is $\avgdegree(G) = \tfrac{1}{|V|}\sum_{u \in V} |\degree(u)|$. Obviously, $\avgdegree(G) \leq \degree(G) \leq |V|$.

Since $\Sigma$-graphs are the central data structures for our algorithms, we have to discuss implementational aspects of $\Sigma$-graphs in more detail. The set $V$ of a $\Sigma$-graph $G = (V, E)$ is represented as a list, and, for every $u \in V$ and for every $x \in \Sigma \cup \{\eword\}$, we store a list of all $x$-successors of $u$, which is called the \emph{$x$-adjacency list} for $u$. We assume that we can check in constant time whether a list is empty and we can insert elements in constant time. However, finding and deleting an element from a list requires linear time. Furthermore, we assume that we always store together with a node a pointer to its adjacency list (thus, we can always retrieve the $x$-adjacency list for a given node in constant time).

\begin{rem}
All lower bounds presented in this paper hold for any graph representation that can be constructed in time linear in $|G| = \max\{|V|, |E|\}$. For the upper bounds, we chose the simple representation with adjacency lists as it emerged as the natural structure for our enumeration approach; let us point out here that since we always store pointers to the adjacency lists along with the nodes, we can perform a breadth-first search (BFS) from any given start node $u$ in time $\bigO(|G|)$.  It is a plausible assumption that most specific graph representations can be transformed into our list-based representation without much effort. This ensures a certain generality of our upper complexity bounds in the sense that the corresponding algorithms are, to a large extent, independent from implementational details. Note also that the list-based structure only requires space linear in $|G|$.\par
In the adjacency list representation, we do not have random access to specific nodes in the graph database, or to specific neighbours of a given node. Thus, we have to measure a non-constant running-time for performing such operations. However, the algorithms for our upper bounds are independent from this aspect, i.\,e., the total running times would not change if we assume random access to nodes in constant time. \par
An exception to this is Theorem~\ref{crossProductEnumSuperlinearPreprocTheorem}, for which we can obtain some small improvement by applying the technique of lazy array initialization (see Remark~\ref{lazyStuff}).
\end{rem}

For a $\Sigma$-graph $G = (V, E)$, we denote by $G^R = (V, E^R)$ the $\Sigma$-graph obtained from $G$ by reversing all arcs, i.\,e., $E^R = \{(v, x, u) \mid (u, x, v) \in E\}$. 

\begin{lem}\label{reversalLemma}
Let $G = (V, E)$ be a $\Sigma$-graph. Then $G^R$ can be computed in time $\bigO(|G|)$.
\end{lem}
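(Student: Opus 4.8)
The plan is to construct $G^R = (V, E^R)$ directly from the adjacency list representation of $G$ by a single pass over all arcs of $G$. First I would create, for every node $u \in V$ and every label $x \in \Sigma \cup \{\eword\}$, a fresh empty $x$-adjacency list for $u$ in the data structure for $G^R$; since $V$ is stored as a list and $|\Sigma \cup \{\eword\}|$ is a constant, this initialization takes time $\bigO(|V|)$, and we keep a pointer from each node to its new adjacency lists as required by our representation.

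Next I would iterate over the nodes $u$ of $G$ in list order, and for each $u$ scan its $x$-adjacency list for every $x \in \Sigma \cup \{\eword\}$. For each arc $(u, x, v)$ encountered, I append $u$ to the $x$-adjacency list of $v$ in $G^R$. By our assumptions on the list representation, checking whether a list is empty and \emph{inserting} an element both take constant time, so each arc of $G$ is processed in constant time; note that we only ever insert, never search or delete, so the linear-time cost of those operations is irrelevant here. The total work for this phase is therefore $\bigO(|E|)$ plus the $\bigO(|V|)$ for iterating over the nodes whose adjacency lists happen to be empty, i.e., $\bigO(|V| + |E|) = \bigO(|G|)$.

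Finally I would argue correctness: the set of arcs inserted is exactly $\{(v, x, u) \mid (u, x, v) \in E\} = E^R$, because every arc of $E$ is scanned exactly once (each $(u,x,v)$ appears in precisely one adjacency list, namely the $x$-adjacency list of $u$) and contributes exactly the arc $(v, x, u)$ to the new structure. Hence the constructed $\Sigma$-graph is $G^R$, and the vertex set is unchanged, which matches the definition $G^R = (V, E^R)$. Combining the two phases gives the bound $\bigO(|G|)$.

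This argument is essentially bookkeeping; the only point that requires a little care is the running-time accounting, namely that we must not accidentally incur per-arc costs that are non-constant. In particular one must check that appending to the target node's adjacency list in $G^R$ is genuinely constant time — which it is, since we retain pointers to adjacency lists alongside nodes and list insertion is constant time by assumption — and that the pass itself does not need random access to nodes, which it does not, since we traverse $V$ and the adjacency lists sequentially. So the only real ``obstacle'' is ensuring the claimed constant-time primitives suffice, which the data structure assumptions stated just before the lemma are tailored to provide.
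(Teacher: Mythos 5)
Your proof is correct and follows essentially the same approach as the paper's: initialize empty reversed adjacency lists in $\bigO(|V|)$, then scan each original adjacency list once and append $u$ to $v$'s $x$-list in $G^R$ for every arc $(u,x,v)$, giving $\bigO(|V|+|E|) = \bigO(|G|)$ overall, relying on constant-time insertion and the stored pointers to adjacency lists.
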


\begin{proof}
Since $G^R = (V, E^R)$, it is sufficient to show how the adjacency lists can be computed that represent $E^R$. To avoid confusion with respect to whether we talk about the $\Sigma$-graph $G$ or the $\Sigma$-graph $G^R$ to be constructed, we denote the $x$-adjacency lists by $x$-$G$-adjacency lists or $x$-$G^R$-adjacency lists, respectively. \par 
We first move through the list for $V$ and, for every $u \in V$, we store this node in an array along with an empty $x$-$G^R$-adjacency list for every $x \in \Sigma \cup \{\eword\}$. This requires time $\bigO(|V|)$. Then, for every $u \in V$ and $x \in \Sigma \cup \{\eword\}$, we move through the $x$-$G$-adjacency list for $u$, and for every element $v$ that we encounter, we add $u$ to the $x$-$G^R$-adjacency list for $v$. 
Since we can access all $x$-$G^R$-adjacency lists in constant time, we only have to add as many elements to some $x$-$G^R$-adjacency as there are edges in $E$, i.\,e., the second step can be done in time $\bigO(|E|)$. Consequently, we can construct all $x$-$G^R$-adjacency lists in total time $\bigO(|V| + |E|) = \bigO(|G|)$.
\end{proof}

\subsection{Graph Databases and Regular Path Queries} 

A \emph{nondeterministic finite automaton} ($\NFA$ for short) is a tuple $M = (G, S, T)$, where $G = (V, E)$ is a $\Sigma$-graph (the nodes $q \in V$ are also called \emph{states}), $S \subseteq V$ with $S \neq \emptyset$ is the set of \emph{start states} and $T \subseteq V$ with $T \neq \emptyset$ is the set of \emph{final states}. The language $\lang(M)$ of an $\NFA$ $M$ is the set of all labels of paths from some start state to some final state. 
For a $\Sigma$-graph $G = (V, E)$, any subsets $S, T \subseteq V$ with $S \neq \emptyset \neq T$ induce the $\NFA$ $(G, S, T)$. If $S = \{s\}$ and $T = \{t\}$ are singletons, then we also write $(G, s, t)$ instead of $(G, \{s\}, \{t\})$. \par
The set $\RE_{\Sigma}$ of \emph{regular expressions} (\emph{over $\Sigma$}) is recursively defined as follows: $a \in \RE_{\Sigma}$ for every $a \in \Sigma \cup \{\eword\}$; $(\alpha \cdot \beta) \in \RE_{\Sigma}$, $(\alpha \altop \beta) \in \RE_{\Sigma}$, and $(\alpha)^+ \in \RE_{\Sigma}$, for every $\alpha,\beta\in \RE_{\Sigma}$. For any $\alpha \in \RE_{\Sigma}$, let $\lang(\alpha)$ be the regular language described by the regular expression $\alpha$ defined as usual:\footnote{As usual, we use the same notation $\lang(\cdot)$ both for regular expressions as well as finite automata.} for every $a \in \Sigma \cup \{\eword\}$, $\lang(a) = \{a\}$, and for every $\alpha, \beta \in \RE_{\Sigma}$, $\lang(\alpha \cdot \beta) = \lang(\alpha) \cdot \lang(\beta)$, $\lang(\alpha \altop \beta) = \lang(\alpha) \cup \lang(\beta)$ and $\lang(\alpha^+) = \lang(\alpha)^+$. We also use $\alpha^*$ as short hand form for $\alpha^+ \altop \eword$. By $|\alpha|$, we denote the length of $\alpha$ represented as a string.

\begin{prop}\label{constructNFAProposition}
Every regular expression $\alpha$ can be transformed in time $\bigO(|\alpha|)$ into an equivalent $\NFA$ $M = (G, p_0, p_f)$ with $|G| = \bigO(|\alpha|)$.
\end{prop}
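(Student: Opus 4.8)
The plan is to establish Proposition~\ref{constructNFAProposition} by the classical Thompson-style inductive construction, adapted to the $\Sigma$-graph/$\NFA$ formalism of this paper (with single start state $p_0$ and single final state $p_f$, and with $\eword$-arcs available as edge labels). First I would set up the induction hypothesis in the precise form needed: for every regular expression $\alpha$ there is an $\NFA$ $M_\alpha = (G_\alpha, p_0^\alpha, p_f^\alpha)$ with $\lang(M_\alpha) = \lang(\alpha)$, with a single start and single final state, with $p_0^\alpha \neq p_f^\alpha$, with no incoming arc to $p_0^\alpha$ and no outgoing arc from $p_f^\alpha$, and such that $|G_\alpha| = \bigO(|\alpha|)$ — more precisely $|V_\alpha|$ and $|E_\alpha|$ are both linear in $|\alpha|$, say bounded by $c|\alpha|$ for a small constant $c$. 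The structural invariants (isolated start/final states) are what make the inductive gluing steps clean; I would state them explicitly since they are used in the construction even though only the size bound appears in the proposition statement.

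Next I would carry out the base case and the three inductive cases. For the base case $\alpha = a$ with $a \in \Sigma \cup \{\eword\}$, take two states $p_0, p_f$ and the single arc $(p_0, a, p_f)$; this trivially satisfies all invariants and has constant size. For $\alpha = (\beta \cdot \gamma)$: take $M_\beta$ and $M_\gamma$ on disjoint vertex sets, and add an $\eword$-arc from $p_f^\beta$ to $p_0^\gamma$, with new start $p_0^\beta$ and new final $p_f^\gamma$. (One can also identify $p_f^\beta$ with $p_0^\gamma$ to save a node, but using a fresh $\eword$-arc keeps the no-incoming/no-outgoing invariants trivially maintained.) For $\alpha = (\beta \altop \gamma)$: take disjoint copies of $M_\beta$ and $M_\gamma$, add fresh states $p_0, p_f$, and $\eword$-arcs $p_0 \to p_0^\beta$, $p_0 \to p_0^\gamma$, $p_f^\beta \to p_f$, $p_f^\gamma \to p_f$. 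For $\alpha = (\beta)^+$: take $M_\beta$, add fresh states $p_0, p_f$, and $\eword$-arcs $p_0 \to p_0^\beta$, $p_f^\beta \to p_f$, and the loop-back arc $p_f^\beta \to p_0^\beta$ (giving $L(\beta)^+$ rather than $L(\beta)^*$, which matches the paper's primitive). In each case I would note in one line why the language is correct — this is the standard path-surgery argument using the $\eword$-labelled paths, and it is genuinely routine given the definition of $\lang(M)$ as the set of labels of start-to-final paths, so I would not belabor it.

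For the size and time bounds, the key observation is that each of the three inductive steps adds only $\bigO(1)$ new states and $\bigO(1)$ new arcs on top of the components, and the number of such steps in building $M_\alpha$ is at most the number of operators plus atoms in $\alpha$, which is $\bigO(|\alpha|)$. Hence by a straightforward induction $|V_\alpha| \le c|\alpha|$ and $|E_\alpha| \le c|\alpha|$ for a suitable constant $c$, so $|G_\alpha| = \max\{|V_\alpha|, |E_\alpha|\} = \bigO(|\alpha|)$. For the running time I would first parse $\alpha$ in linear time into its syntax tree (over the fixed alphabet $\Sigma$, parsing a fully-parenthesized regular expression is linear), and then perform a single bottom-up pass: at each node we do $\bigO(1)$ list operations (creating the $\bigO(1)$ new nodes with their empty adjacency lists and inserting the $\bigO(1)$ new $\eword$-arcs), all of which are constant-time under the assumed list-based $\Sigma$-graph representation (insertion into a list is constant time; we never need to search or delete). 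Summing over the $\bigO(|\alpha|)$ nodes of the syntax tree gives total time $\bigO(|\alpha|)$, and we finally return the start state $p_0 := p_0^\alpha$ and final state $p_f := p_f^\alpha$ of the root component.

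I do not expect a genuine obstacle here; the only points requiring a little care are (i) maintaining disjointness of vertex sets when combining subautomata — handled by working on the single syntax tree of $\alpha$ so that every created state is fresh by construction — and (ii) ensuring the construction stays within the paper's conventions, namely that $\NFA$s here use $\eword$ as an ordinary edge label (so no separate $\eword$-transition machinery is needed) and that the Kleene primitive is $^+$ rather than $^*$ (so the loop-back construction for $(\beta)^+$ must not also add a way to bypass $M_\beta$ entirely). Both are bookkeeping matters rather than mathematical difficulties, so the bulk of the proof is simply writing out the four cases and the size induction cleanly.
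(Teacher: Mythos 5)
Your proposal is correct and follows essentially the same approach as the paper: both are Thompson's construction, producing an $\NFA$ with $\bigO(1)$ fresh states and $\eword$-arcs per operator/atom, built in a single linear pass over the syntax tree of $\alpha$. The only cosmetic differences are that the paper pre-allocates a fixed pair of states $t_1, t_2$ per syntax-tree node and places the $+$ loop-back arc on the outer pair (so it does not maintain your ``no outgoing arc from the final state'' invariant, which is not needed for correctness), whereas you glue fresh components bottom-up and keep the loop-back on the inner pair.
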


\begin{proof}
Let $\alpha \in \RE_{\Sigma}$. We first construct the syntax tree $T_{\alpha}$ of $\alpha$ with node set $V_{T_{\alpha}}$. Obviously, $T_{\alpha}$ has size $\bigO(|\alpha|)$ and we can obtain $V_{T_{\alpha}}$ from $\alpha$ in time $\bigO(|\alpha|)$ (for example, we can transform the expression $\alpha$ to prefix notation and then construct $T_{\alpha}$ while moving through the prefix notation of $\alpha$ from left to right). We now construct an $\NFA$ $M_{\alpha} = (G = (V, E), \{p_0\}, \{p_f\})$ from $T_{\alpha}$ as follows (recall that $G$ is a $\Sigma$-graph and therefore it should adhere to our representations of $\Sigma$-graphs). We first construct an array of size $2|V_{T_{\alpha}}|$ that contains the nodes of $V = \{t_1, t_2 \mid t \in V_{T_{\alpha}}\}$ and we initialise empty $x$-adjacency lists for all these nodes and for every $x \in \Sigma \cup \{\eword\}$. Then we move through $T_{\alpha}$ top-down and if the current node $t$ is an inner node with two children $r$ and $s$, then we do the following:
\begin{itemize}
\item If $t$ corresponds to a concatenation $\cdot$, then we add $r_1$ to the $\eword$-adjacency list of $t_1$, we add $s_1$ to the $\eword$-adjacency list of $r_2$, and we add $t_2$ to the $\eword$-adjacency list of $s_2$.
\item If $t$ corresponds to an alternation $\altop$, then we add $r_1$ to the $\eword$-adjacency list of $t_1$, we add $s_1$ to the $\eword$-adjacency list of $t_1$, we add $t_2$ to the $\eword$-adjacency list of $r_2$, and we add $t_2$ to the $\eword$-adjacency list of $s_2$.
\end{itemize}
If $t$ is an inner node with one child $r$ (which means it necessarily corresponds to a $+$), then we add $r_1$ to the $\eword$-adjacency list of $t_1$, we add $t_2$ to the $\eword$-adjacency list of $r_2$, and we add $t_1$ to the $\eword$-adjacency list of $t_2$. If $t$ is a leaf labelled with $x \in \Sigma \cup \{\eword\}$, then we add $t_2$ to the $x$-adjacency list of $t_1$. Finally, if $t$ is the root of $T_{\alpha}$, we relabel $t_1$ by $p_0$ and we relabel $t_2$ by~$p_f$. \par
It can be easily verified that $\lang(M_{\alpha}) = \lang(\alpha)$. By construction $|V| = \bigO(|T_{\alpha}|) = \bigO(|\alpha|)$ and, since $G$ has constant degree, we also have that $|G| = \bigO(|V|) = \bigO(|\alpha|)$. Moreover, in the construction we spend constant time per arc that is added, so the whole construction of $M_{\alpha}$ can be done in time $\bigO(|\alpha|)$.
\end{proof}

In the following, when we speak about an \emph{automaton} (or an \emph{$\NFA$}) \emph{for a regular expression $\alpha$}, we always mean an $\NFA$ equivalent to $\alpha$ with the properties asserted by Proposition~\ref{constructNFAProposition}.\par
A $\Sigma$-graph without $\eword$-arcs is also called a \emph{graph database} (\emph{over $\Sigma$}); in the following, we denote graph databases by $\DBD = (V_{\DBD}, E_{\DBD})$. Since $V_{\DBD}$ is represented as a list, any graph database implicitly represents a linear order on $V_{\DBD}$ (i.\,e., the order induced by the list that represents $V_{\DBD}$), which we denote by $\preceq_{\DBD}$, or simply $\preceq$ if $\DBD$ is clear from the context. A class $\mathcal{C}$ of graph databases is called \emph{sparse} if there is a constant $c$ such that $|E_{\DBD}| \leq c |V_{\DBD}|$ for every $\DBD \in \mathcal{C}$. Slightly abusing notation, we shall also call single graph databases sparse to denote that we are dealing with a graph database from a sparse class of graph databases. \par
Regular expressions $q$ (over alphabet $\Sigma$) are interpreted as \emph{regular path queries} ($\RPQ$) for graph databases (over $\Sigma$). The \emph{result} $q(\DBD)$ of an $\RPQ$ $q$ on a graph database $\DBD = (V_{\DBD}, E_{\DBD})$ over $\Sigma$ is the set $q(\DBD) = \{(u, v) \mid u, v \in V_{\DBD}, \lang((\DBD, u, v)) \cap \lang(q) \neq \emptyset\}$.\par
If we interpret $q$ as a \emph{Boolean} $\RPQ$, then the result is $q_{\boole}(\DBD) = \mathsf{true}$ if $q(\DBD) \neq \emptyset$ and $q_{\boole}(\DBD) = \mathsf{false}$ otherwise. %
We consider the $\RPQ$-evaluation problems summarised in Table~\ref{ProblemsTable}.
By \emph{sorted $\enumProb$} (or \emph{semi-sorted} $\enumProb$), we denote the variant of $\enumProb$, where the pairs of $q(\DBD)$ are to be enumerated in lexicographical order with respect to $\preceq_{\DBD}$ (or ordered only with respect to their left elements, while successive pairs with the same right element can be ordered arbitrarily, respectively).

\begin{table}
\begin{tabular}{l|l|l}
Name & Input & Task\\\hline
$\booleProb$ & $\DBD$, $q$ & Decide whether $q_{\boole}(\DBD) = \mathsf{true}$. \\
$\checkProb$ & $\DBD$, $q$, $u, v$ & Decide whether $(u, v) \in q(\DBD)$. \\
$\witnessProb$ & $\DBD$, $q$ & Compute a witness $(u, v) \in q(\DBD)$ or report that none exists. \\\hline
$\evalProb$ & $\DBD$, $q$ & Compute the whole set $q(\DBD)$ \\
$\countProb$ & $\DBD$, $q$ & Compute $|q(\DBD)|$. \\\hline
$\enumProb$ & $\DBD$, $q$ & Enumerate the whole set $q(\DBD)$.\medskip
\end{tabular}
\caption{The investigated $\RPQ$-evaluation problems ($\DBD$ is a graph database, $q$ an $\RPQ$ and $u, v$ two nodes from $\DBD$).}
\label{ProblemsTable}
\end{table}

\begin{rem}
If an order $\preceq'$ on $V_{\DBD}$ is explicitly given as a bijection $\pi : V_{\DBD} \to \{1, \ldots, n\}$, then we can modify $\DBD$ (in $\bigO(|V_{\DBD}|)$) such that $\preceq_{\DBD}\; = \; \preceq'$. In this regard, sorted $\enumProb$ just models the case where we wish the enumeration to be sorted according to some order. In particular, by assuming the order $\preceq_{\DBD}$ to be implicitly represented by $\DBD$, we \emph{do not} hide the complexity of sorting $n$ elements.
\end{rem}

A graph database $\DBD$ is \emph{well-formed} if $V_{\DBD} = [n]$ for some $n \in \mathbb{N}$ and $\preceq_{\DBD}$ corresponds to $\leq$ on $[n]$.

\begin{lem}\label{makeWellFormedLemma}
Let $\DBD$ be a graph database with $|V_{\DBD}| = n$. Then we can construct in time $\bigO(|\DBD|)$ a well-formed graph database $\DBD'$ and an isomorphism $\pi : [n] \to V_{\DBD}$ between $\DBD'$ and~$\DBD$. 
\end{lem}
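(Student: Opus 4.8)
The plan is to construct $\DBD'$ by relabelling the nodes of $\DBD$ with the integers $1, \ldots, n$ in the order in which they appear in the list representing $V_{\DBD}$, and to build the inverse bijection $\pi$ simultaneously. First I would traverse the list representing $V_{\DBD}$ once, and for the $i$-th node $u$ encountered I would set $\pi(i) = u$ and record, in a field stored together with $u$ (or in an auxiliary array indexed by a handle of $u$), the new name $i$; this pass costs $\bigO(|V_{\DBD}|)$ and produces both the list $[n]$ that will serve as $V_{\DBD'}$ (in the natural order, so that $\preceq_{\DBD'}$ coincides with $\leq$ on $[n]$, making $\DBD'$ well-formed) and the table giving the old-name-to-new-name translation in constant time per lookup.

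Next I would build the adjacency lists of $\DBD'$. I initialise, for the $n$ new nodes and every $x \in \Sigma$, empty $x$-adjacency lists (recall graph databases have no $\eword$-arcs), which is $\bigO(|V_{\DBD}|)$ since $|\Sigma|$ is constant. Then I traverse, for every old node $u$ and every $x \in \Sigma$, the $x$-adjacency list of $u$ in $\DBD$; for each entry $v$ I look up the new names $i$ of $u$ and $j$ of $v$ (constant time each, using the table from the first pass and the stored pointer from a node to its adjacency list) and append $j$ to the $x$-adjacency list of new node $i$. Since each append is constant time and we touch each arc of $\DBD$ exactly once, this second pass costs $\bigO(|E_{\DBD}|)$. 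In total the construction runs in time $\bigO(|V_{\DBD}| + |E_{\DBD}|) = \bigO(|\DBD|)$, and by construction $\pi$ is a graph isomorphism from $\DBD'$ to $\DBD$: $(i, x, j)$ is an arc of $\DBD'$ iff $(\pi(i), x, \pi(j))$ is an arc of $\DBD$.

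There is no serious obstacle here; the only point requiring a little care is the bookkeeping needed to translate old node identifiers to new ones in constant time, which is exactly the reason for doing the first pass before the second — this mirrors the two-pass scheme already used in the proof of Lemma~\ref{reversalLemma}. One should also note that $\pi$ is well-defined and bijective because the first pass visits each node of $V_{\DBD}$ exactly once, and that the resulting representation of $\DBD'$ obeys all the conventions on $\Sigma$-graphs fixed above (nodes stored as a list carrying pointers to their adjacency lists).
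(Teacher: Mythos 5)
Your proof is correct and follows essentially the same two-pass scheme as the paper: a first pass over the list for $V_{\DBD}$ to build the renaming table (the paper's arrays $A$ and $B$), and a second pass over the adjacency lists to produce the relabelled copy $\DBD'$. The paper states the second pass more tersely ("make a copy of $\DBD$ replacing each $u$ by $A[u]$"), but the underlying argument and the $\bigO(|V_{\DBD}| + |E_{\DBD}|) = \bigO(|\DBD|)$ accounting are identical.
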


\begin{proof}
Let $\DBD$ be a graph database with $|V_{\DBD}| = n$. We initialise an array $A$ of size $n$ that can be addressed with the elements of $V_{\DBD}$ and the entries of which can store numbers of $[n]$, and an array $B$ of size $n$ that can be addressed with the elements of $[n]$ and the entries of which can store elements of $V_{\DBD}$. Furthermore, we initialise a counter $c = 1$. Then we move through the list for $V_{\DBD}$ from left to right and for each element $u$ that we encounter, we set $A[u] = c$, $B[c] = u$ and increment $c$. Obviously, $A$ describes an isomorphism $\pi: V_{\DBD} \to [n]$ and $B$ describes $\pi^{-1}$. Moreover, this can be done in time $\bigO(|V_{\DBD}|)$.\par
Next, we make a copy $\DBD'$ of $\DBD$ but replace each $u \in V_{\DBD}$ by the number $A[u]$ (note that $\DBD$ is just a collection of lists that store elements of $V_{\DBD}$ along with pointers to lists). This can be done in time $\bigO(|\DBD|)$. Moreover, $\DBD'$ is obviously isomorphic to $\DBD$ and $B$ describes an isomorphism $[n] \to V_{\DBD}$.
\end{proof}

Lemma~\ref{makeWellFormedLemma} means that with an overhead of time $\bigO(|\DBD|)$, we can always assume that our input graph databases are well-formed; in particular, note that with the isomorphism $\pi$ ensured by the lemma, we can always translate elements from $q(\DBD')$, where $\DBD'$ is the well-formed graph database isomorphic to $\DBD$, back to the corresponding elements of $q(\DBD)$. Thus, Lemma~\ref{makeWellFormedLemma} justifies that whenever we spend at least $\bigO(|\DBD|)$ in some preprocessing, we can always assume that the input graph database is well-formed.

\subsection{General Algorithmic Framework for \texorpdfstring{$\RPQ$}{\textsf{RPQ}}-Evaluation} 

We assume the RAM-model with logarithmic word-size as our computational model. Let us next discuss our algorithmic framework for $\RPQ$-evaluation. The input to our algorithms is a graph database $\DBD = (V_{\DBD}, E_{\DBD})$ and an $\RPQ$ $q$ (and, for solving the problem $\checkProb$, also a pair $(u, v) \in V_{\DBD}$). \par
In the case of $\enumProb$, the algorithms have routines $\preprocess$ and $\enum$. Initially, $\preprocess$ performs some preliminary computations on the input or constructs some auxiliary data-structures; the performance of $\preprocess$ is measured in its running-time depending on the input size as usual (i.\,e., we treat $\preprocess$ as an individual algorithm). Then $\enum$ will produce an enumeration $(u_1, v_1), (u_2, v_2), \ldots, (u_\ell, v_\ell)$ such that $q(\DBD) = \{(u_i, v_i) \mid 1 \leq i \leq \ell\}$, no element occurs twice, and the algorithm reports when the enumeration is done. We measure the performance of $\enum$ in terms of its \emph{delay}, which describes the time that (in the worst-case) elapses between enumerating two consecutive elements, between the start of the enumeration and the first element, and between the last element and the end of the enumeration (or between start and end in case that $q(\DBD) = \emptyset$). We say that (variants of) $\enumProb$ can be solved \emph{with preprocessing $p$ and delay $d$}, where $p$ and $d$ are functions bounding the preprocessing running-time and the delay. In the case that $p = \bigO(d)$, the preprocessing complexity is absorbed by the delay; in this case, we say that (variants of) $\enumProb$ can be solved \emph{with delay $d$} and do not mention any bound on the preprocessing. \par
We also consider $\enumProb$ in the \emph{dynamic setting}, i.\,e., there is the possibility to perform \emph{update} operations on the input graph database $\DBD$, which trigger a routine $\update$. After an update and termination of the $\update$ routine, invoking $\enum$ is supposed to enumerate $q(\DBD')$, where $\DBD'$ is the updated graph database. The performance of an algorithm for $\enumProb$ is then measured in the running-times of routines $\preprocess$ (to be initially carried out only once) and $\update$, as well as the delay. We only consider the following types of individual updates: inserting a new arc between existing nodes, deleting an arc, adding a new (isolated) node, deleting an (isolated) node. In particular, deleting or adding a single non-isolated node $u$ may require a non-constant number of updates.

\section{The Product Graph Approach}\label{sec:algoPrelim}

The PG-approach has already been informally described in the introduction; for our fine-grained perspective, we need to define it in detail. Let $\DBD = (V_{\DBD}, E_{\DBD})$ be a graph database over some alphabet $\Sigma$ and let $q$ be an $\RPQ$ over~$\Sigma$. Furthermore, let $(G_q, p_{\init}, p_f)$ with $G_q = (V_{q}, E_{q})$ be an automaton for $q$. Recall that, according to Proposition~\ref{constructNFAProposition}, $G_q$ can be obtained in time $\bigO(|q|)$ and it has $\bigO(|q|)$ states and $\bigO(|q|)$ arcs. The \emph{product graph} of $\DBD$ and $G_q$ is the $\Sigma$-graph $\DBtimesRE{\DBD}{q} = (\DBtimesREV{\DBD}{q}, \DBtimesREE{\DBD}{q})$, where $\DBtimesREV{\DBD}{q} = \{(u, p) \mid u \in V_{\DBD}, p \in V_{q}\}$ and 
\begin{align*}
\DBtimesREE{\DBD}{q} =\:&\{((u, p), x, (v, p')) \mid (u, x, v) \in E_{\DBD}, (p, x, p') \in E_{q}\} \:\cup \\
&\{((u, p), \eword, (u, p')) \mid u \in V_{\DBD}, (p, \eword, p') \in E_{q}\}\,.
\end{align*}

\begin{rem}
The arc labels in $\DBtimesRE{\DBD}{q}$ are superfluous in the sense that we only need the underlying graph of $\DBtimesRE{\DBD}{q}$ (see Lemma~\ref{reduceEvalToSTTCLemma}). We define it nevertheless as $\Sigma$-graph, since then all our definitions and terminology for $\Sigma$-graphs introduced above apply as well. 
\end{rem}

\begin{lem}\label{crossProductSizeLemma}
$|\DBtimesREV{\DBD}{q}| = \bigO(|V_{\DBD}| |q|)$, $|\DBtimesREE{\DBD}{q}| = \bigO(|\DBD| |q|)$ and $\DBtimesRE{\DBD}{q}$ can be computed in time $\bigO(|\DBtimesRE{\DBD}{q}|) = \bigO(|\DBD||q|)$.
\end{lem}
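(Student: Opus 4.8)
The plan is to prove the three claims in sequence, each following directly from the definition of $\DBtimesRE{\DBD}{q}$ and the size bounds on the component $\Sigma$-graphs. First I would establish the vertex bound: by definition $\DBtimesREV{\DBD}{q} = V_{\DBD} \times V_q$, so $|\DBtimesREV{\DBD}{q}| = |V_{\DBD}| \cdot |V_q|$, and since the automaton $(G_q, p_{\init}, p_f)$ for $q$ satisfies $|V_q| = \bigO(|q|)$ by Proposition~\ref{constructNFAProposition}, we get $|\DBtimesREV{\DBD}{q}| = \bigO(|V_{\DBD}||q|)$.

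Next I would bound the edge set. The set $\DBtimesREE{\DBD}{q}$ is the union of two parts. The first part contributes at most one arc $((u,p), x, (v,p'))$ for each pair consisting of an arc $(u,x,v) \in E_{\DBD}$ and an arc $(p,x,p') \in E_q$ with matching label $x$; crudely bounding, this is at most $|E_{\DBD}| \cdot |E_q| = \bigO(|\DBD| \cdot |q|)$ arcs (using $|E_q| = \bigO(|q|)$, again from Proposition~\ref{constructNFAProposition}). The second part contributes at most $|V_{\DBD}| \cdot |\{(p, \eword, p') \in E_q\}| \le |V_{\DBD}| \cdot |E_q| = \bigO(|V_{\DBD}||q|)$ arcs. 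Since $|V_{\DBD}| \le |\DBD|$ (as $|\DBD| = \max\{|V_{\DBD}|, |E_{\DBD}|\}$), both contributions are $\bigO(|\DBD||q|)$, so $|\DBtimesREE{\DBD}{q}| = \bigO(|\DBD||q|)$; combined with the vertex bound this also gives $|\DBtimesRE{\DBD}{q}| = \bigO(|\DBD||q|)$.

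Finally, for the construction time I would describe the obvious procedure adhering to the adjacency-list representation: allocate an array indexed by $\DBtimesREV{\DBD}{q}$ (size $\bigO(|V_{\DBD}||q|)$) holding a node and empty $x$-adjacency lists for each $x \in \Sigma \cup \{\eword\}$; then, for the first part of $\DBtimesREE{\DBD}{q}$, iterate over $u \in V_{\DBD}$, then over $x \in \Sigma$, walk the $x$-$\DBD$-adjacency list of $u$ and the (constantly many) $x$-arcs of $G_q$ in parallel, appending the resulting product arc to the appropriate $x$-adjacency list; for the second part, iterate over $u \in V_{\DBD}$ and over the $\eword$-arcs $(p, \eword, p')$ of $G_q$ and append $((u,p), \eword, (u,p'))$. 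Since insertion into a list is constant time and we perform one insertion per arc of $\DBtimesRE{\DBD}{q}$, plus $\bigO(|V_{\DBD}||q|)$ work for initialization, the total is $\bigO(|\DBtimesRE{\DBD}{q}|) = \bigO(|\DBD||q|)$.

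None of the steps is a genuine obstacle; the only point requiring minor care is keeping the bookkeeping consistent with the paper's convention that $|\DBD| = \max\{|V_{\DBD}|, |E_{\DBD}|\}$ and $|G_q| = \bigO(|q|)$ (so that $\Sigma$ is treated as constant size), and ensuring the construction is phrased in terms of adjacency lists rather than assuming random access, so that it genuinely runs in time linear in the output size.
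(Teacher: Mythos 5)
Your proposal takes essentially the same approach as the paper: bound the vertex set by $|V_{\DBD}| \cdot |V_q|$, split the edge set into the $\eword$-part and the label-matched part and bound each by $\bigO(|\DBD||q|)$, then construct the adjacency-list representation by nested iteration with constant-time insertions. The one slip is the parenthetical ``(constantly many) $x$-arcs of $G_q$'': Proposition~\ref{constructNFAProposition} guarantees that $G_q$ has constant \emph{degree} (so each state has $\bigO(1)$ outgoing arcs), but the total number of $x$-labelled arcs in $G_q$ is $\bigO(|q|)$, not constant; likewise ``walk \dots in parallel'' should be a nested iteration producing a Cartesian product of arcs, not a zip. Neither issue affects the final bound, since iterating over all $\bigO(|q|)$ many $x$-arcs of $G_q$ inside the loop over $E_{\DBD}$ still yields $\bigO(|\DBD||q|)$.
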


\begin{proof}
We first note that $|\DBtimesREV{\DBD}{q}| \leq |V_{\DBD}| |q|$ directly follows from the definition. Moreover, the following is also immediate by definition:
\begin{align*}
&|\{((u, p), \eword, (u, p')) \mid u \in V_{\DBD}, (p, \eword, p') \in E_{q}\}| = \bigO(|V_{\DBD}||E_{q}|) = \bigO(|\DBD||q|) \text{ and }\\
&|\{((u, p), x, (v, p')) \mid (u, x, v) \in E_{\DBD}, (p, x, p') \in E_{q}\}| = 
\bigO(|E_{\DBD}||E_{q}|) = \bigO(|\DBD||q|)\,.
\end{align*}
Consequently, $|\DBtimesREE{\DBD}{q}| = \bigO(|\DBD| |q|)$.\par
For the question how $\DBtimesRE{\DBD}{q}$ can be computed, we have to keep in mind our list-based implementation of $\Sigma$-graphs (see Section~\ref{sec:mainDefs}). For every $u \in V_{\DBD}$ and every $p \in V_q$, we add $(u, p)$ to the list that stores $\DBtimesREV{\DBD}{q}$. This can be done by moving $|V_{\DBD}|$ times through the list for $V_q$. Thus, time $\bigO(|V_{\DBD}||V_q|) = \bigO(|\DBD||q|)$ is sufficient.\par
In order to construct the adjacency lists, we proceed as follows. Let $u \in V_{\DBD}$, $p \in V_q$ and $x \in \Sigma$. Then we add all $(v, p')$ to the $x$-adjacency list of $(u, p)$, where $v$ is an element of the $x$-adjacency list of $u$, and $p'$ is an element of the $x$-adjacency list of $p$. Moreover, we add all $(u, p')$ to the $\eword$-adjacency list of $(u, p)$, where $p'$ is an element of the $\eword$-adjacency list of $p$. This can be done by moving once through the lists of adjacency lists for $E_{\DBD}$ and, for each encountered element, to move through the lists of adjacency lists for $E_{q}$. Since each insertion to a list can be done in constant time, the whole procedure can be carried out in time $\bigO(|\Sigma||\DBD||q|) = \bigO(|\DBD||q|)$.
\end{proof}

The following lemma, which is an immediate consequence of the construction, shows how $\DBtimesRE{\DBD}{q}$ can be used for solving $\RPQ$-evaluation tasks (recall that $E^*$ is the reflexive-transitive closure of the underlying unlabelled graph).

\begin{lem}\label{reduceEvalToSTTCLemma}
For every $u, v \in V_{\DBD}$, $(u, v) \in q(\DBD)$ if and only if $((u, p_{\init}), (v, p_{f})) \in (\DBtimesREE{\DBD}{q})^*$.
\end{lem}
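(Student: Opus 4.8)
The plan is to prove both directions by exhibiting explicit correspondences between $q$-paths in $\DBD$ and paths in $\DBtimesRE{\DBD}{q}$. The key observation is that a path in the product graph projects componentwise onto a walk in $\DBD$ (ignoring nodes reached via $\eword$-arcs of the first kind) and onto a walk in the automaton $G_q$, and that these two projections read the \emph{same} word. Conversely, two walks reading the same word — one in $\DBD$ from $u$ to $v$ and one in $G_q$ from $p_{\init}$ to $p_f$ — can be interleaved into a single path in $\DBtimesRE{\DBD}{q}$ from $(u, p_{\init})$ to $(v, p_f)$.

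For the ``if'' direction: suppose $((u, p_{\init}), (v, p_f)) \in (\DBtimesREE{\DBD}{q})^*$, i.e.\ there is a path $P$ in $\DBtimesRE{\DBD}{q}$ from $(u, p_{\init})$ to $(v, p_f)$. I would induct on the length of $P$ (or simply read off the sequence of edge labels) to produce a word $w$ that labels $P$; by the definition of $\DBtimesREE{\DBD}{q}$, every edge of $P$ with label $x \in \Sigma$ comes from an arc $(u_i, x, u_{i+1}) \in E_{\DBD}$ and an arc $(p_i, x, p_{i+1}) \in E_q$, while every $\eword$-edge of the first kind keeps the $\DBD$-component fixed and advances only in $G_q$. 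Hence the $\DBD$-projection of $P$ is a path from $u$ to $v$ labelled $w$, and the $G_q$-projection is a path from $p_{\init}$ to $p_f$ labelled $w$. The latter witnesses $w \in \lang((G_q, p_{\init}, p_f)) = \lang(q)$, the former witnesses $w \in \lang((\DBD, u, v))$, so $w$ lies in the intersection and $(u, v) \in q(\DBD)$.

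For the ``only if'' direction: suppose $(u, v) \in q(\DBD)$, so there is a word $w = a_1 \cdots a_k$ together with a path $(u = w_0, b_1, w_1, \ldots, w_m) $ in $\DBD$ labelled $w$ and a path $(p_{\init} = p_0, c_1, p_1, \ldots, p_n)$ in $G_q$ with $p_n = p_f$, also labelled $w$ (recall $\DBD$ has no $\eword$-arcs, but $G_q$ may, so in general $m = k \leq n$). I would build a path in $\DBtimesRE{\DBD}{q}$ by a simultaneous scan of both paths: repeatedly, if the next arc in the $G_q$-path is an $\eword$-arc $(p_j, \eword, p_{j+1})$, take the product $\eword$-edge $((w_i, p_j), \eword, (w_i, p_{j+1}))$ (second kind, $\DBD$-component unchanged); otherwise both paths are about to consume the same letter $a$, so take the product edge $((w_i, p_j), a, (w_{i+1}, p_{j+1}))$ (first kind), which exists precisely because $(w_i, a, w_{i+1}) \in E_{\DBD}$ and $(p_j, a, p_{j+1}) \in E_q$. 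This process terminates at $(w_m, p_n) = (v, p_f)$ and yields the required path. The formal bookkeeping — arguing that this interleaving is well-defined and that a simple induction (say on $m + n$) closes it — is routine.

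I do not expect a serious obstacle here; the statement is, as the text says, an immediate consequence of the construction of $\DBtimesRE{\DBD}{q}$. The only point that needs a little care is the treatment of $\eword$: because $\DBD$ is a genuine graph database (no $\eword$-arcs) but the automaton $G_q$ produced by Proposition~\ref{constructNFAProposition} is riddled with them, the two walks have different lengths in general, and one must be precise about which component advances at each step. Framing the argument in terms of ``the word labelling the path'' in both directions, rather than trying to set up a length-preserving bijection, is what keeps the proof short. Also worth noting explicitly: the length-$0$ case of paths in a $\Sigma$-graph means $u = v$ with $w = \eword$ corresponds exactly to the reflexive part of $(\DBtimesREE{\DBD}{q})^*$, so the statement is correct including the trivial case $p_{\init} = p_f$ reachable by $\eword$-arcs.
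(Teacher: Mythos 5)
The paper states this lemma without proof, calling it ``an immediate consequence of the construction,'' so there is no proof to compare against in detail. Your argument supplies exactly the omitted bookkeeping — projecting a product-graph path onto its $\DBD$- and $G_q$-components for the forward direction, and interleaving two walks reading the same word for the converse — and it is correct (the lone slip is calling an $\eword$-edge ``of the first kind'' in the forward direction, which contradicts the labelling you yourself use in the converse and the order of the two sets in the definition of $\DBtimesREE{\DBD}{q}$, but the meaning is unambiguous).
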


\section{Fine-grained Complexity and Conditional Lower Bounds}\label{sec:fineGrainedComplexity}

We now state several computational problems along with hypotheses regarding their complexity, which are commonly used in the framework of fine-grained complexity to obtain conditional lower bounds. We discuss some details and give background information later on.

\begin{itemize}
\item \underline{\textsf{Orthogonal Vectors} ($\OV$)}: Given sets $A, B$ each containing $n$ Boolean-vectors of dimension~$d$, check whether there are vectors $\vec{a} \in A$ and $\vec{b} \in B$ that are orthogonal.\\
$\OV$-Hypothesis: For every $\epsilon > 0$ there is no algorithm solving $\OV$ in time $\bigO(n^{2 - \epsilon} \poly(d))$. 
\item \underline{\textsf{Boolean Matrix Multiplication} ($\BMMProb$)}: Given Boolean $n \times n$ matrices $A, B$, compute $A \times B$.\\
$\combBMMProb$-Hypothesis: For every $\epsilon > 0$ there is no combinatorial algorithm that solves $\BMMProb$ in time $\bigO(n^{3 - \epsilon})$. 
\item \underline{\textsf{Sparse Boolean Matrix Multiplication} ($\SBMMProb$)}: Like $\BMMProb$, but all matrices are represented as sets $\{(i, j) \mid A[i, j] = 1\}$ of $1$-entries.\\
$\SBMMProb$-Hypothesis: There is no algorithm that solves $\SBMMProb$ in time $\bigO(m)$ (where $m$ is the total number of $1$-entries, i.\,e., $m = |\{(i, j) \mid A[i, j] = 1\}| + |\{(i, j) \mid B[i, j] = 1\}| + |\{(i, j) \mid (A \times B)[i, j] = 1\}|$). 
\item \underline{\textsf{Online Matrix-Vector Multiplication} ($\OMv$)}: Given Boolean $n \times n$-matrix $M$ and a sequence $\vec{v}^1, \vec{v}^2, \ldots, \vec{v}^n$ of $n$-dimensional Boolean vectors, compute the sequence $M \vec{v}^{1}, M \vec{v}^{2}, \ldots, M \vec{v}^{n}$, where $M \vec{v}^{i}$ is produced as output before $\vec{v}^{i+1}$ is received as input.\\
$\OMv$-Hypothesis: For every $\epsilon > 0$ there is no algorithm that solves $\OMv$ in time $\bigO(n^{3 - \epsilon})$. 
\end{itemize}

We will reduce these problems to variants of $\RPQ$ evaluation problems in such a way that algorithms with certain running-times for $\RPQ$ evaluation would break the corresponding hypotheses mentioned above. Thus, we obtain lower bounds for $\RPQ$ evaluation that are conditional to these hypotheses. In the following, we give a very brief overview of the relevance of these problems and corresponding hypotheses in fine-grained complexity.

The problem $\OV$ can be solved by brute-force in time $\bigO(n^2d)$ and the hypothesis that there is no subquadratic algorithm is well-established. It exists in slightly different variants and has been formulated in several different places in the literature (e.\,g.,~\cite{Bringmann2014, Bringmann2019, Williams2015}). The variant used here is sometimes referred to as \emph{moderate dimension} $\OV$-hypothesis in contrast to \emph{low dimension} variants, where $d$ can be assumed to be rather small in comparison to~$n$. The relevance of the $\OV$-hypothesis is due to the fact that it is implied by the Strong Exponential Time Hypothesis (SETH)~\cite{Williams2004, Williams2005}, and therefore it is a convenient tool to prove SETH lower bounds that has been applied in various contexts. \par
One of the most famous computational problems is $\BMMProb$, which, unfortunately, is a much less suitable basis for conditional lower bounds. The straightforward algorithm solves it in time $\bigO(n^3)$, but there are \emph{fast matrix multiplication} algorithms that run in time $\bigO(n^{2.373})$~\cite{Williams2012, Gall2014}. It is unclear how much further this exponent can be decreased and there is even belief that $\BMMProb$ can be solved in time $n^{2 + \smallO(1)}$ (see~\cite{Williams2012b} and \cite[Section~6]{BerkholzEtAl2020}). However, these theoretically fast algorithms cannot be considered efficient in a practical sense, which motivates the mathematically informal notion of ``\emph{combinatorial}'' algorithms (see, e.\,g.,~\cite{WilliamsWilliams2018}).\footnote{The term ``combinatorial algorithm'' is not well-defined, but intuitively such algorithms have running-times with low constants in the $\bigO$-notation, and are feasibly implementable.} So far, no truly subcubic \emph{combinatorial} $\BMMProb$-algorithm exists and it has been shown in~\cite{WilliamsWilliams2018} that $\BMMProb$ is contained in a class of problems (including other prominent examples like Triangle Finding (also mentioned below) and Context-Free Grammar Parsing) which are all equivalent in the sense that if one such problem is solvable in truly subcubic time by a combinatorial algorithm, then all of them are. Consequently, it is often possible to argue that the existence of a certain combinatorial algorithm for some problem would imply a major (and unlikely) algorithmic breakthrough with respect to $\BMMProb$, Parsing, Triangle Finding, etc. Despite the defect of relying on the vague notion of \emph{combinatorial} algorithms, this lower bound technique is a common approach in fine-grained complexity (see, e.\,g.,~\cite{WilliamsWilliams2018, HenzingerEtAl2015, AbboudWilliams2014, AbboudEtAl2018, AbboudEtAl2018_2, HenzingerEtAl2017}). Whenever we use the $\combBMMProb$-hypothesis, our reductions will always be combinatorial, which is necessary; moreover, whenever we say that a certain running time cannot be achieved unless the $\combBMMProb$-hypothesis fails, we mean, of course, that it cannot be achieved by a combinatorial algorithm.\par
In order to make $\BMMProb$ suitable as base problem for conditional lower bounds (that does not rely on \emph{combinatorial} algorithms) one can formulate the weaker (i.\,e., more plausible) hypothesis that $\BMMProb$ cannot be solved in time linear in the number of $1$-entries of the matrices (therefore called \emph{sparse} $\BMMProb$ since matrices are represented in a sparse way); see~\cite{AmossenPagh2009, YusterZwick2005}. Another approach is to require the output matrix $A \times B$ to be computed column by column, i.\,e., formulating it as the online-version $\OMv$. For $\OMv$, subcubic algorithms are not known and would yield several major algorithmic breakthroughs (see~\cite{HenzingerEtAl2015}). \par

A convenient tool to deal with $\BMMProb$ is the problem $\TriProb$: check whether a given undirected graph $G$ has a triangle. This is due to the fact that these two problems are subcubic equivalent with respect to combinatorial algorithms (see~\cite{WilliamsWilliams2018}), i.\,e., the $\combBMMProb$-hypothesis fails if and only if $\TriProb$ can be solved by a combinatorial algorithm in time $\bigO(n^{3 - \epsilon})$ for some $\epsilon > 0$. Thus, for lower bounds conditional to the $\combBMMProb$-hypothesis, we can make use of both these problems. There is also a (non-combinatorial) $\TriProb$-hypothesis that states that $\TriProb$ cannot be solved in linear time in the number of edges, but we were not able to apply it in the context of $\RPQ$-evaluation (see~\cite{AbboudWilliams2014} for different variants of $\TriProb$).

\section{Bounds for the Non-Enumeration Problem Variants}\label{sec:nonEnumBounds}

We now investigate how well the PG-approach performs with respect to the non-enumeration variants of $\RPQ$-evaluation, and we give some evidence that, in most cases, it can be considered optimal or almost optimal (subject to the algorithmic hypotheses of Section~\ref{sec:fineGrainedComplexity}).
 	
\subsection{Boolean Evaluation, Testing and Computing a Witness}\label{sec:booleTestWitness}

It is relatively straightforward to see that the problems $\checkProb$ and $\booleProb$ are equivalent and can both be reduced to $\witnessProb$. Hence, upper bounds for $\witnessProb$ and lower bounds for $\checkProb$ or $\booleProb$ automatically apply to all three problem variants, which simplifies the proofs for such bounds. We shall now formally prove this.

\begin{lem}\label{BooleToCheckReduction}
Let $(\DBD, q)$ be an $\booleProb$-instance. Then we can construct an equivalent $\checkProb$-instance $(\DBD', q', u, v)$ with $|\DBD'| = \bigO(|\DBD|)$ and $|q'| = \bigO(|q|)$ in time $\bigO(|\DBD| + |q|)$.
\end{lem}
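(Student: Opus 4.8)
The plan is to reduce Boolean evaluation to the testing problem by a standard gadget construction: add two fresh nodes $s$ and $t$ to the graph database and connect them to every original node so that the existence of \emph{any} $q$-path in $\DBD$ is witnessed by a $q'$-path from $s$ to $t$ in $\DBD'$. Concretely, I would pick two symbols $\ta, \tb \in \Sigma$ (if $|\Sigma| = 1$ this needs a tiny separate argument, but the paper assumes $\Sigma$ is a fixed alphabet of constant size and the hard instances use at least two letters, so I will just assume $|\Sigma| \geq 2$ here), add node $s$ with an $\ta$-arc $(s, \ta, u)$ to every $u \in V_{\DBD}$, and add node $t$ with a $\tb$-arc $(v, \tb, t)$ from every $v \in V_{\DBD}$. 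Then I set $q' = \ta \cdot q \cdot \tb$ and ask whether $(s, t) \in q'(\DBD')$.

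The key steps, in order: first, define $\DBD'$ and $q'$ as above and observe $|\DBD'| = |V_{\DBD}| + |E_{\DBD}| + 2|V_{\DBD}| + 2 = \bigO(|\DBD|)$ and $|q'| = |q| + 4 = \bigO(|q|)$, and that both can be written down in time $\bigO(|\DBD| + |q|)$ since we only traverse the node list of $\DBD$ once to add the new arcs and we only prepend/append two symbols to $q$. Second, prove equivalence: if $q_{\boole}(\DBD) = \mathsf{true}$, there are $u, v \in V_{\DBD}$ and a word $w \in \lang(q)$ labelling a path from $u$ to $v$ in $\DBD$; prefixing the $\ta$-arc $(s,\ta,u)$ and appending the $\tb$-arc $(v,\tb,t)$ yields a path from $s$ to $t$ in $\DBD'$ labelled $\ta w \tb \in \lang(q')$, so $(s,t) \in q'(\DBD')$. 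Conversely, any path from $s$ to $t$ in $\DBD'$ must start with an arc out of $s$, which by construction is some $(s, \ta, u)$, and must end with an arc into $t$, which is some $(v, \tb, t)$, and $s, t$ have no other incident arcs; hence such a path is labelled $\ta w \tb$ where $w$ labels a path from $u$ to $v$ that stays inside $V_{\DBD}$ (no arc of $\DBD'$ leaves $V_{\DBD}$ except those touching $s$ or $t$), so if $\ta w \tb \in \lang(q') = \{\ta\} \cdot \lang(q) \cdot \{\tb\}$ then $w \in \lang(q)$ and $(u,v) \in q(\DBD)$, giving $q_{\boole}(\DBD) = \mathsf{true}$. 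Here I use that $\ta$ and $\tb$ occur in words of $\lang(q')$ only in the first and last position respectively (which holds because $\ta, \tb$ may or may not occur inside $q$, but the factorization $\ta w \tb$ with $|w| \geq 0$ and the requirement that the word has the form leading-$\ta$, trailing-$\tb$ still forces $w \in \lang(q)$ — to be fully careful one can instead take two symbols $\ta, \tb$ and rename, or simply note $\lang(\ta q \tb) \cap \ta \Sigma^* \tb \supseteq$ exactly the needed set; the cleanest route is to observe directly that a word $x$ is in $\lang(\ta q \tb)$ iff $x = \ta w \tb$ for some $w \in \lang(q)$, which is immediate from the semantics of concatenation).

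The main obstacle is the pedantic point just flagged: making sure that a path from $s$ to $t$ in $\DBD'$ decomposes \emph{uniquely} into an initial $s$-arc, an internal $\DBD$-path, and a final $t$-arc, and that its label lies in $\lang(q')$ exactly when the internal label lies in $\lang(q)$. This is handled by the structural facts that (i) $s$ has only outgoing $\ta$-arcs and no incoming arcs, (ii) $t$ has only incoming $\tb$-arcs and no outgoing arcs, (iii) no $\DBD'$-arc between two original nodes was added, so the middle of the path is a genuine $\DBD$-path, and (iv) $\lang(\ta \cdot q \cdot \tb) = \{\ta w \tb \mid w \in \lang(q)\}$ by definition of $\lang(\cdot)$ on concatenations. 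Everything else — the size bounds and the construction time — is routine bookkeeping over the list-based representation of $\Sigma$-graphs from Section~\ref{sec:mainDefs}. I would close by remarking that the reverse reduction ($\checkProb$ to $\booleProb$) and the reductions to $\witnessProb$ are even simpler and can be stated together, so that henceforth it suffices to prove upper bounds for $\witnessProb$ and lower bounds for $\booleProb$.
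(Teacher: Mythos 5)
Your construction is essentially the paper's: adjoin a source node and a sink node, connect them to every original node with marker-labelled arcs, wrap $q$ with those markers, and ask whether the new source--sink pair is in the result. The one difference is that the paper introduces a \emph{fresh} symbol $\# \notin \Sigma$ (so $\DBD'$ and $q'$ are over $\Sigma \cup \{\#\}$) and sets $q' = \# q \#$, which dissolves both your $|\Sigma|=1$ side-worry and the hedging about $\ta, \tb$ possibly occurring inside $q$ --- though, as you correctly observe, the structural facts (that $s$ has no incoming arcs and only outgoing $\ta$-arcs, and dually for $t$) already force the path decomposition and make your variant go through as written, so the fresh symbol is a convenience rather than a necessity.
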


\begin{proof}
Let $\DBD$ and $q$ be defined over $\Sigma$. We transform $\DBD$ into a graph database $\DBD'$ over $\Sigma \cup \{\#\}$ (where $\# \notin \Sigma$) by adding new nodes $u$ and $v$, and new arcs $(u, \#, x)$ and $(x, \#, v)$ for every $x \in V_{\DBD}$. Moreover, we set $q' = \# q \#$. This construction can be carried out in time $\bigO(|\DBD| + |q|)$ and we can also note that $|\DBD'| = \bigO(|\DBD|)$ and $|q'| = \bigO(|q|)$.\par
It remains to show that $q_{\boole}(\DBD) = \textsf{true}$ if and only if $(u, v) \in q'(\DBD')$. If $q_{\boole}(\DBD) = \textsf{true}$, then there is some $(u', v') \in q(\DBD)$, which means that in $\DBD$ there is a path $u', \ldots, v'$ that is labelled with a word $w \in \lang(q)$. Since there are arcs $(u, \#, u')$ and $(v', \#, v)$ in $\DBD'$, there is a path $u, u', \ldots, v', v$ in $\DBD'$ that is labelled with $\# w \# \in \lang(q')$. Therefore, $(u, v) \in q'(\DBD')$. On the other hand, if $(u, v) \in q'(\DBD')$, then there is some path $u, u', \ldots, v', v$ in $\DBD'$ that is labelled with $\# w \# \in \lang(q')$, so therefore also $(u', v') \in q(\DBD)$ and $q_{\boole}(\DBD) = \textsf{true}$ (note that $u' = v'$ and therefore $w = \eword$ is also possible).
\end{proof}

\begin{lem}\label{CheckToBooleReduction}
Let $(\DBD, q, u, v)$ be an $\checkProb$-instance. Then we can construct an equivalent $\booleProb$-instance $(\DBD', q')$ with $|\DBD'| = \bigO(|\DBD|)$ and $|q'| = \bigO(|q|)$ in time $\bigO(|\DBD| + |q|)$.
\end{lem}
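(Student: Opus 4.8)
The plan is to reuse the idea of Lemma~\ref{BooleToCheckReduction}, but ``in reverse'': instead of a fresh symbol that lets a single $\checkProb$-witness range over all node pairs, I would use a fresh symbol to \emph{pin down} the pair $(u,v)$, so that the transformed query can be non-empty only via a $q$-path from $u$ to $v$.

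Concretely, fix a symbol $\# \notin \Sigma$ and obtain $\DBD'$ over $\Sigma \cup \{\#\}$ from $\DBD$ by adding two fresh nodes $s$ and $t$ together with the two arcs $(s, \#, u)$ and $(v, \#, t)$, and set $q' = \# \cdot q \cdot \#$. Then $|\DBD'| = \bigO(|\DBD|)$ and $|q'| = \bigO(|q|)$, and since we only copy $\DBD$ and $q$ and add a constant number of nodes, arcs and symbols, the construction runs in time $\bigO(|\DBD| + |q|)$. The correctness claim to be shown is $q'(\DBD') \neq \emptyset$ if and only if $(u,v) \in q(\DBD)$.

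The ``if'' direction is direct: a $u$-$v$ path in $\DBD$ labelled by some $w \in \lang(q)$ (possibly the trivial path with $w = \eword$ when $u = v$) is also a path in $\DBD'$, and prefixing the arc $(s,\#,u)$ and appending $(v,\#,t)$ yields an $s$-$t$ path labelled $\# w \# \in \lang(q')$, so $(s,t) \in q'(\DBD')$. For the ``only if'' direction, I would observe that $\DBD'$ has no $\eword$-arcs, that every $\#$-arc of $\DBD'$ leaves $\{s,v\}$ and enters $\{u,t\}$, that $s$ has no incoming arc, and that $t$ has no outgoing arc. Hence any witnessing path for a pair in $q'(\DBD')$ has label $\# w \#$ with $w \in \lang(q)$; as $\DBD'$ has no $\eword$-arcs, its first and its last arc are $\#$-arcs, and the four observations above force the path to start with $(s,\#,u)$, to end with $(v,\#,t)$, and to use only $\DBD$-arcs in between, so that $w$ labels a $u$-$v$ path already in $\DBD$ and therefore $(u,v) \in q(\DBD)$.

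I do not expect a genuine obstacle; the only point requiring (routine) care is the final case analysis that excludes spurious $q'$-paths: one must check that such a path cannot use $(v,\#,t)$ as its first arc (it would have to continue from $t$, which has no outgoing arc, while a non-empty suffix still remains to be read), that it cannot use $(s,\#,u)$ as its last arc (it would have to reach $s$, which has no incoming arc), and that the intermediate $w$-labelled subpath cannot sneak through $s$ or $t$ (again because $s$ has no in-arc and $t$ has no out-arc) — together with the degenerate subcase $u = v$, $w = \eword$.
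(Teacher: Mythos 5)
Your proposal uses exactly the construction from the paper's proof (fresh nodes $s, t$ with $\#$-arcs into $u$ and out of $v$, and $q' = \# q \#$), and the correctness argument is the same. Your ``only if'' direction just spells out the case analysis in somewhat more detail than the paper (which simply observes that $q$ contains no $\#$ and the $\#$-arcs are confined to $s$ and $t$), but the underlying argument is identical.
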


\begin{proof}
Let $\DBD$ and $q$ be defined over $\Sigma$. We transform $\DBD$ into a graph database $\DBD'$ over $\Sigma \cup \{\#\}$ (where $\# \notin \Sigma$) by adding new nodes $s$ and $t$ with arcs $(s, \#, u)$ and $(v, \#, t)$, and we define $q' = \# q \#$. This construction can be carried out in time $\bigO(|\DBD| + |q|)$ and we can also note that $|\DBD'| = \bigO(|\DBD|)$ and $|q'| = \bigO(|q|)$.\par
It remains to show that $(u, v) \in q(\DBD)$ if and only if $q'_{\boole}(\DBD') = \textsf{true}$. If $(u, v) \in q(\DBD)$, then there is a path from $u$ to $v$ in $\DBD$ that is labelled with a word from $\lang(q)$. Since there are arcs $(s, \#, u)$ and $(v, \#, t)$ in $\DBD'$, there is a path in $\DBD'$ from $s$ to $t$ labelled with a word from $\lang(\# q \#)$; thus $(s, t) \in q'(\DBD')$ and therefore $q'_{\boole}(\DBD') = \textsf{true}$. On the other hand, if $q'_{\boole}(\DBD') = \textsf{true}$, then we can conclude that $(s, t) \in q'(\DBD')$, which is due to the fact that $q' = \# q \#$, $q$ does not contain any occurrence of $\#$ and the only arcs labelled with $\#$ have source $s$ and target $t$. This implies that there is a path $s, u, \ldots, v, t$ labelled with a word $\# w \#$ with $w \in \lang(q)$, which implies that there is a path from $u$ to $v$ labelled with a word from $\lang(q)$, which means that $(u, v) \in q(\DBD)$.
\end{proof}

\begin{thm}\label{CheckAndBooleEquivalentTheorem}
Let $f : \mathbb{N} \times \mathbb{N} \to \mathbb{N}$ be some polynomial function with $f(n_1, n_2) = \Omega(n_1 + n_2)$. 
\begin{enumerate}
\item\label{CheckAndBooleEquivalentTheoremPointOne} $\checkProb$ can be solved in time $f(|\DBD|, |q|)$ if and only if $\booleProb$ can be solved in time $f(|\DBD|, |q|)$.
\item\label{CheckAndBooleEquivalentTheoremPointTwo} If $\witnessProb$ can be solved in time $f(|\DBD|, |q|)$, then $\booleProb$ and $\checkProb$ can be solved in time $f(|\DBD|, |q|)$.
\end{enumerate}
\end{thm}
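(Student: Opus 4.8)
The plan is to prove Theorem~\ref{CheckAndBooleEquivalentTheorem} as a direct corollary of Lemmas~\ref{BooleToCheckReduction} and~\ref{CheckToBooleReduction}, simply checking that the running-time bounds compose correctly under the hypothesis that $f$ is polynomial and at least linear. First I would handle part~\eqref{CheckAndBooleEquivalentTheoremPointOne}. For the forward direction, suppose $\checkProb$ can be solved in time $f(|\DBD|,|q|)$; given a $\booleProb$-instance $(\DBD,q)$, apply Lemma~\ref{BooleToCheckReduction} to obtain in time $\bigO(|\DBD|+|q|)$ an equivalent $\checkProb$-instance $(\DBD',q',u,v)$ with $|\DBD'| = \bigO(|\DBD|)$ and $|q'| = \bigO(|q|)$, then solve it in time $f(|\DBD'|,|q'|) = f(\bigO(|\DBD|),\bigO(|q|))$. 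Since $f$ is polynomial, $f(\bigO(|\DBD|),\bigO(|q|)) = \bigO(f(|\DBD|,|q|))$ (a polynomial absorbs constant factors in each argument, at the cost of a larger constant), and the $\bigO(|\DBD|+|q|)$ overhead is absorbed because $f(n_1,n_2) = \Omega(n_1+n_2)$. Hence the total is $\bigO(f(|\DBD|,|q|))$, i.e. $f(|\DBD|,|q|)$ up to constants. The backward direction is symmetric, using Lemma~\ref{CheckToBooleReduction} instead.

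Part~\eqref{CheckAndBooleEquivalentTheoremPointTwo} is even simpler: $\booleProb$ reduces trivially to $\witnessProb$ (run the witness algorithm and answer $\mathsf{true}$ iff it returns a pair rather than ``none exists''), so if $\witnessProb$ runs in time $f(|\DBD|,|q|)$ then so does $\booleProb$; combining this with the equivalence from part~\eqref{CheckAndBooleEquivalentTheoremPointOne} gives the same bound for $\checkProb$, again using that the reductions of Lemmas~\ref{BooleToCheckReduction} and~\ref{CheckToBooleReduction} only blow up instance sizes by constant factors and add a linear overhead absorbed by $f = \Omega(n_1+n_2)$.

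The only point requiring a little care — and really the ``main obstacle'', though it is minor — is making precise why composing $f$ with the linear-size blow-ups of the reductions stays within $\bigO(f(|\DBD|,|q|))$. This is exactly where the polynomiality assumption on $f$ is used: for a bivariate polynomial $f$ and constants $c_1,c_2$, one has $f(c_1 n_1, c_2 n_2) \le c\cdot f(n_1,n_2)$ for a suitable constant $c$ and all sufficiently large $n_1,n_2$, because each monomial $n_1^i n_2^j$ scales by the constant $c_1^i c_2^j$. I would state this as a one-line observation rather than belabour it. Everything else is bookkeeping: the correctness of the two reductions is already established in the cited lemmas, and the ``if and only if'' structure of part~\eqref{CheckAndBooleEquivalentTheoremPointOne} follows by applying each lemma in one direction.
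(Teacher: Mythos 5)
Your proposal is correct and follows essentially the same route as the paper: Part 1 is obtained directly from Lemmas~\ref{BooleToCheckReduction} and~\ref{CheckToBooleReduction}, and Part 2 from the trivial observation that a $\witnessProb$-algorithm implicitly solves $\booleProb$ combined with Part 1. Your additional one-line justification that the constant-factor blow-ups from the reductions are absorbed by a polynomial $f$ with $f(n_1,n_2)=\Omega(n_1+n_2)$ is a useful explicit remark that the paper leaves implicit.
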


\begin{proof}
Lemmas~\ref{BooleToCheckReduction}~and~\ref{CheckToBooleReduction} directly imply Point~\ref{CheckAndBooleEquivalentTheoremPointOne}. Moreover, an algorithm that solves $\witnessProb$ implicitly solves $\booleProb$ as well. Together with Point~\ref{CheckAndBooleEquivalentTheoremPointOne}, this proves Point~\ref{CheckAndBooleEquivalentTheoremPointTwo}. 
\end{proof}

We are now ready to prove that the PG-approach yields the following upper bound.

\begin{thm}\label{CCCheckUpperBoundsTheorem}
The problems $\checkProb$, $\booleProb$ and $\witnessProb$ can be solved in time $\bigO(|\DBD||q|)$. 
\end{thm}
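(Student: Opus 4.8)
The plan is to combine the product-graph construction with a standard graph-reachability search, using the reductions already established. By Theorem~\ref{CheckAndBooleEquivalentTheorem}, it suffices to solve $\witnessProb$ in time $\bigO(|\DBD||q|)$, since an algorithm for $\witnessProb$ in this running time immediately yields algorithms for $\booleProb$ and $\checkProb$ in the same time (and $f(n_1,n_2) = n_1 n_2$ is polynomial and $\Omega(n_1+n_2)$, so the theorem applies). Actually, since $\checkProb$ asks about a specific pair and $\witnessProb$ only needs \emph{some} pair, it is cleanest to describe one reachability computation on the product graph and observe that it answers all three problems.

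First I would invoke Proposition~\ref{constructNFAProposition} to build, in time $\bigO(|q|)$, an $\NFA$ $(G_q, p_{\init}, p_f)$ for $q$ with $|G_q| = \bigO(|q|)$. Then I would apply Lemma~\ref{crossProductSizeLemma} to construct the product graph $\DBtimesRE{\DBD}{q}$ in time $\bigO(|\DBtimesRE{\DBD}{q}|) = \bigO(|\DBD||q|)$, where it has $\bigO(|V_{\DBD}||q|)$ nodes and $\bigO(|\DBD||q|)$ arcs. By Lemma~\ref{reduceEvalToSTTCLemma}, we have $(u,v) \in q(\DBD)$ if and only if $(v,p_f)$ is reachable from $(u,p_{\init})$ in the underlying graph of $\DBtimesRE{\DBD}{q}$. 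For $\witnessProb$, one runs a single BFS/DFS from every source of the form $(u, p_{\init})$; but to keep within the $\bigO(|\DBD||q|)$ budget it is better to argue as follows: collect all ``initial'' nodes $\{(u,p_{\init}) \mid u \in V_{\DBD}\}$ into one virtual super-source (or simply run one multi-source BFS), traverse the product graph once in time linear in its size, i.e.\ $\bigO(|\DBD||q|)$, and check whether any node $(v,p_f)$ was reached; if so, output the corresponding witness $(u,v)$ by tracking which source each reached node descends from, and if not, report that $q(\DBD) = \emptyset$. For $\checkProb$ with input pair $(u,v)$ we instead run a single-source BFS from $(u,p_{\init})$ and test whether $(v,p_f)$ is reached, again in time $\bigO(|\DBD||q|)$. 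For $\booleProb$ the multi-source search suffices. All of these searches run in time linear in the size of the product graph because, as noted in the Remark following Lemma~\ref{reversalLemma}, the adjacency-list representation (with pointers stored alongside nodes) supports BFS in time $\bigO(|G|)$.

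The only mild subtlety — and the step I would be most careful about — is handling the $\eword$-arcs correctly and confirming that a plain reachability search on the \emph{underlying} (unlabelled) graph of $\DBtimesRE{\DBD}{q}$ is exactly what Lemma~\ref{reduceEvalToSTTCLemma} needs, i.e.\ that we do not need to worry about arc labels at all during the search (this is precisely the content of the Remark after the definition of the product graph). A second point worth one sentence is that for $\witnessProb$ we must actually produce a pair, not just a Boolean; running one multi-source search and recording, for each discovered product-node, a pointer back toward its originating source node $(u,p_{\init})$ lets us recover such a $u$ in constant time once some $(v,p_f)$ is found, so the total time remains $\bigO(|\DBD||q|)$. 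Summing the preprocessing ($\bigO(|q|)$ for the automaton, $\bigO(|\DBD||q|)$ for the product graph) and the search ($\bigO(|\DBD||q|)$) gives the claimed bound.
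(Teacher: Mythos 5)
Your proposal is correct and essentially identical to the paper's proof: both reduce to $\witnessProb$ via Theorem~\ref{CheckAndBooleEquivalentTheorem}, build $\DBtimesRE{\DBD}{q}$ in time $\bigO(|\DBD||q|)$, attach a virtual super-source to all nodes $(u,p_{\init})$, and run a single BFS that propagates a ``which start node reached me'' label so that finding a reached $(v,p_f)$ immediately yields a witness pair. The only cosmetic difference is that you additionally sketch a direct single-source BFS for $\checkProb$, whereas the paper handles $\checkProb$ and $\booleProb$ purely through the equivalence theorem.
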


\begin{proof}
We only show the upper bound for $\witnessProb$ (due to Theorem~\ref{CheckAndBooleEquivalentTheorem} it applies to $\checkProb$ and $\booleProb$ as well). To this end, let $\DBD = (V_{\DBD}, E_{\DBD})$ be a graph database over $\Sigma$ and let $q$ be an $\RPQ$ over $\Sigma$. We construct $\DBtimesRE{\DBD}{q}$, which, according to Lemma~\ref{crossProductSizeLemma}, can be done in time $\bigO(|\DBD||q|)$. In the following considerations, we interpret $\DBtimesRE{\DBD}{q}$ as its underlying non-labelled graph.\par
We add to $\DBtimesRE{\DBD}{q}$ a node $v_{\mathsf{source}}$ with an arc to each $(u, p_0)$ with $u \in V_{\DBD}$. This can be done in time $\bigO(|V_{\DBD}||q|)$ as follows. First, we add $v_{\mathsf{source}}$ to the list that represents $\DBtimesREV{\DBD}{q}$, which requires constant time. Then, we move through the list that represents $\DBtimesREV{\DBD}{q}$ and every node $(u, p_0)$ that we encounter is added to the adjacency list for $v_{\mathsf{source}}$.\par
Next, we perform a special kind of BFS in $v_{\mathsf{source}}$. First, we construct an array $S$ of size $|\DBtimesREV{\DBD}{q}|$ the entries of which can store values from $\DBtimesREV{\DBD}{q} \cup \{0\}$ and which can be addressed by the nodes from $\DBtimesREV{\DBD}{q}$. Moreover, $S$ is initialised with every entry storing~$0$. Then, for every $u \in V_{\DBD}$, we set $S[(u, p_0)] = u$. This can be done in time $\bigO(|\DBtimesREV{\DBD}{q}|) = \bigO(|\DBD||q|)$. We perform a BFS from $v_{\textsf{source}}$ and whenever we traverse an arc $((u, p), (u', p'))$, we set $S[(u', p')] = S[(u, p)]$. This can be done in time $\bigO(|\DBtimesRE{\DBD}{q}|) = \bigO(|\DBD||q|)$. Since $S[(u, p_0)] = u$ for every $u \in V_{\DBD}$, we can conclude by induction that, for every $(v, p) \in \DBtimesREV{\DBD}{q}$, if $(v, p)$ is reachable from some node $(u, p_0)$, then $S[(v, p)] = u'$ for some $u' \in V_{\DBD}$ such that $(v, p)$ is reachable from $(u', p_0)$, and if $(v, p)$ is not reachable from any node $(u, p_0)$, then $S[(u, p)] = 0$. Consequently, if there is some $v \in V_{\DBD}$ with $S[(v, p_f)] = u \neq 0$, then there is a path from $(u, p_0)$ to $(v, p_f)$ in $\DBtimesRE{\DBD}{q}$, which, according to Lemma~\ref{reduceEvalToSTTCLemma}, means that $(u, v) \in q(\DBD)$ and therefore, we can produce the output $(u, v)$. On the other hand, if there is no $v \in V_{\DBD}$ with $S[(v, p_f)] \neq 0$, then there are no $u, v \in V_{\DBD}$ with a path from $(u, p_0)$ to $(v, p_f)$; thus, $q(\DBD) = \emptyset$. Checking whether there is some $v \in V_{\DBD}$ with $S[(v, p_f)] \neq 0$ can be done in time $\bigO(|V_{\DBD}|)$.
\end{proof}

More interestingly, we can complement this upper bound with lower bounds as follows.

\begin{thm}\label{CCCheckLowerOVBoundsTheorem}
If any of the problems $\checkProb$, $\booleProb$ and $\witnessProb$ can be solved in time $\bigO(|\DBD|^{2-\epsilon} + |q|^{2})$ or $\bigO(|\DBD|^{2} + |q|^{2-\epsilon})$ for some $\epsilon > 0$, then the $\OV$-hypothesis fails. This lower bound also holds for the restriction to sparse graph databases.
\end{thm}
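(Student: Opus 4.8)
The plan is to reduce $\OV$ to the evaluation problems; by Theorem~\ref{CheckAndBooleEquivalentTheorem} it suffices to handle $\booleProb$, and in fact the instances built below satisfy $q(\DBD)\subseteq\{(\sigma,\tau)\}$ for a designated pair $(\sigma,\tau)$, so the same reduction serves $\checkProb$ and $\witnessProb$ verbatim. We may assume $0<\epsilon<1$ (a faster algorithm for larger $\epsilon$ is also one for smaller $\epsilon$). The core is a \emph{vector gadget}: given sets $A=\{\vec a^{1},\dots,\vec a^{n_A}\}$, $B=\{\vec b^{1},\dots,\vec b^{n_B}\}$ of $d$-dimensional Boolean vectors, we build $(\DBD,q)$ over $\Sigma=\{\mathtt 0,\mathtt 1,\#\}$ as follows. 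The graph database $\DBD$ has a source $\sigma$, a sink $\tau$, and, for each $i$, a ``vector branch'' $\sigma\to p^{i}_{0}\to p^{i}_{1}\to\dots\to p^{i}_{d}\to\tau$ in which the first and last edge are labelled $\#$ and the edge from $p^{i}_{c-1}$ to $p^{i}_{c}$ is labelled $\mathtt 1$ if $a^{i}_{c}=1$ and $\mathtt 0$ otherwise. The query is $q=\#\,(\gamma^{1}\altop\dots\altop\gamma^{n_B})\,\#$ with $\gamma^{j}=\gamma^{j}_{1}\cdots\gamma^{j}_{d}$, where $\gamma^{j}_{c}=\mathtt 0$ if $b^{j}_{c}=1$ and $\gamma^{j}_{c}=(\mathtt 0\altop\mathtt 1)$ otherwise. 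The only $q$-paths in $\DBD$ run from $\sigma$ to $\tau$ through one complete vector branch (the two $\#$-symbols of $q$ force the first and last edge, and a branch is then traversed deterministically), and the word read along the $i$-th branch lies in $L(q)$ exactly if for some $j$ no coordinate $c$ has both $a^{i}_{c}=1$ and $b^{j}_{c}=1$, i.e.\ if $\vec a^{i}\perp\vec b^{j}$. Hence $q(\DBD)\neq\emptyset$ iff $A,B$ contain an orthogonal pair. One has $|\DBD|=\Theta(n_A d)$, $|q|=\Theta(n_B d)$, $\DBD$ is sparse ($|E_{\DBD}|=\Theta(|V_{\DBD}|)$), and the construction takes time $\bigO((n_A+n_B)d)$.

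The key idea is that this gadget lets us unbalance the two size parameters by a block decomposition, which is exactly what is needed to attack bounds of the form $\bigO(|\DBD|^{2-\epsilon}+|q|^{2})$ (good when $|q|$ is small) and $\bigO(|\DBD|^{2}+|q|^{2-\epsilon})$ (good when $|\DBD|$ is small). For the first bound, take an $\OV$-instance with $|A|=|B|=n$, put $g=\lceil n^{\epsilon/2}\rceil$, and partition $B$ into blocks $B_{1},\dots,B_{g}$ of size $\Theta(n/g)$; then $\OV$ holds iff $q_{k}(\DBD)\neq\emptyset$ for some $k$, where $\DBD$ is the (single) gadget graph for $A$ and $q_{k}$ the gadget query for $B_{k}$. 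Running the hypothetical algorithm on all $g$ pairs $(\DBD,q_{k})$, plus the $\bigO(nd)$ building cost, takes time
\[
\bigO\!\left(g\,(nd)^{2-\epsilon}+g\,\big((n/g)d\big)^{2}\right)=\bigO\!\left(n^{\epsilon/2}\,n^{2-\epsilon}\,\poly(d)+(n^{2}/g)\,\poly(d)\right)=\bigO\!\left(n^{2-\epsilon/2}\,\poly(d)\right),
\]
which, since $\epsilon/2>0$, contradicts the $\OV$-hypothesis. The second bound is symmetric: partition $A$ into $g=\lceil n^{\epsilon/2}\rceil$ blocks $A_{k}$, build the gadget query $q$ for $B$ once and gadget graphs $\DBD_{k}$ for the $A_{k}$, and run the algorithm on each $(\DBD_{k},q)$; the total time is again $\bigO(n^{2-\epsilon/2}\poly(d))$. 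All graph databases involved are sparse, so the lower bound holds for sparse graph databases too.

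The only genuinely delicate point is tuning the number of blocks: we need $g$ large enough that $g\cdot|\DBD|^{2-\epsilon}$, accumulated over the calls, stays below $n^{2}$, yet small enough that $g\cdot|q|^{2}=(n^{2}/g)\poly(d)$ also drops below $n^{2}$; the choice $g=n^{\epsilon/2}$ is what balances the two exponents at $2-\epsilon/2$. Everything else is routine: correctness of the vector gadget is immediate from the description of $L(q)$ and the rigid structure of $\DBD$, and the size, sparsity and running-time claims follow by counting nodes and edges and the length of the regular expressions.
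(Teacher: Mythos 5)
Your proposal is correct and follows essentially the same strategy as the paper: an $\OV$-to-$\RPQ$ gadget where one vector set determines the graph database and the other the query, followed by the same block-decomposition trick (splitting one side into $n^{\Theta(\epsilon)}$ groups) to handle the two asymmetric running-time bounds. The only cosmetic difference is that the paper's gadget puts the "wildcard" on the graph side (a $\mathtt{0}$-arc always present plus an extra $\mathtt{1}$-arc when $a^i_c=0$, and a query over fixed words $\vec{b}^j[1]\cdots\vec{b}^j[d]$), whereas yours puts it on the query side (a single deterministic arc per position and disjunctions $(\mathtt{0}\altop\mathtt{1})$ in the pattern); both encode the same orthogonality test and yield the same size, sparsity, and time bounds.
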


\begin{proof}
We prove the lower bound for $\checkProb$ only, since by Theorem~\ref{CheckAndBooleEquivalentTheorem} it also applies to $\witnessProb$ and $\booleProb$. We first devise a general reduction from the $\OV$-problem to $\checkProb$ (which is similar to the reduction from~\cite{BackursIndyk2016} used for proving conditional lower bounds of regular expression matching):\par
Let $A = \{\vec{a}^1, \ldots, \vec{a}^n\}$ and $B = \{\vec{b}^1, \ldots, \vec{b}^n\}$ be an instance for the $\OV$-problem. We define an $\RPQ$ $q$ and a graph database $\DBD$ over the alphabet $\Sigma = \{0, 1, \#\}$ as follows. For every $i \in [n]$, let $w_i = \vec{b}^i[1]\vec{b}^i[2]\ldots \vec{b}^i[d]$, and let $q = \# (w_1 \altop w_2 \altop \ldots \altop w_n) \#$. Moreover, let $\DBD = (V_{\DBD}, E_{\DBD})$, where $V_{\DBD}$ contains nodes $s$ and $t$ and, for every $i \in [n]$, nodes $v_{i, 0}, v_{i, 1}, \ldots, v_{i, d}$. For every $i \in [n]$ and $j \in \{0\} \cup [d-1]$, there is an arc from $v_{i, j}$ to $v_{i, j+1}$ labelled with $0$ and, if $\vec{a}^i[j+1] = 0$, also an arc from $v_{i, j}$ to $v_{i, j+1}$ labelled with $1$. Finally, for every $i \in [n]$, there are arcs labelled with $\#$ from $s$ to $v_{i, 0}$ and from $v_{i, d}$ to $t$. It can be easily verified that there are orthogonal $\vec{a} \in A$ and $\vec{b} \in B$ if and only if $(s, t) \in q(\DBD)$. Moreover, $|\DBD| = \bigO(|A|d) = \bigO(nd)$ and $|q| = \bigO(|B|d) = \bigO(nd)$, and, furthermore, $\DBD$ and $q$ can also be constructed in time $\bigO(nd)$, and, since $|E_{\DBD}| = \bigO(|V_{\DBD}|)$, $\DBD$ is a sparse graph database.\par
We now assume that $\checkProb$ can be solved in time $\bigO(|\DBD|^2 + |q|^{2-\epsilon})$ for some $\epsilon > 0$. Let again $A = \{\vec{a}^1, \ldots, \vec{a}^n\}$ and $B = \{\vec{b}^1, \ldots, \vec{b}^n\}$ be an instance for the $\OV$-problem, and let $\epsilon'$ be arbitrarily chosen with $0 < \epsilon' < \epsilon$. We divide $A$ into $A_1, A_2, \ldots, A_{\lceil n^{\epsilon'} \rceil}$ with $|A_i| = \lceil n^{1-\epsilon'} \rceil$ for every $i \in [\lceil n^{\epsilon'} \rceil]$ and such that $\bigcup^{\lceil n^{\epsilon'} \rceil}_{i = 1} A_i = A$. Obviously, $(A, B)$ is a positive $\OV$-instance if and only if at least one of $(A_1, B), (A_2, B), \ldots, (A_{\lceil n^{\epsilon'} \rceil}, B)$ is a positive $\OV$-instance. We can now separately reduce each $(A_i, B)$ to an $\checkProb$-instance $(\DBD_i, q)$ as described above and we note that $|\DBD_i| = \bigO(|A_i|d) = \bigO(\lceil n^{1-\epsilon'} \rceil d)$ and $|q| = \bigO(|B|d) = \bigO(nd)$. Then we solve each instance in the assumed time bound $\bigO(|\DBD|^2 + |q|^{2-\epsilon})$, which requires total time of $\bigO(\lceil n^{\epsilon'} \rceil ((\lceil n^{1-\epsilon'} \rceil d)^2 + (nd)^{2-\epsilon}) = \bigO((n^{2-\epsilon'} + n^{2-\epsilon + \epsilon'}) \poly(d)) = \bigO(n^{2-(\epsilon - \epsilon')}) \poly(d))$, where $(\epsilon - \epsilon') > 0$. This contradicts the $\OV$-hypothesis.\par
The assumption that $\checkProb$ can be solved in time $\bigO(|\DBD|^{2 - \epsilon} + |q|^2)$ for some $\epsilon > 0$ can be handled analogously. We divide again $A$ into $A_1, A_2, \ldots, A_{\lceil n^{\epsilon'} \rceil}$ as described above, but then we reduce the $\OV$-instances $(B, A_1), (B, A_2), \ldots, (B, A_{\lceil n^{\epsilon'} \rceil})$ to the $\checkProb$-instances $(\DBD, q_1), (\DBD, q_2), \ldots, (\DBD, q_{\lceil n^{\epsilon'} \rceil})$. Note that $|\DBD| = \bigO(nd)$ and $|q_i| = \bigO(|A_i|d) = \bigO(\lceil n^{1-\epsilon'} \rceil d)$. By assumption, each of these instances can be solved in time $\bigO(|\DBD|^{2 - \epsilon} + |q_i|^2) = \bigO((nd)^{2 - \epsilon} + (\lceil n^{1-\epsilon'} \rceil d)^2)$, which again leads to a total running-time of  $\bigO(\lceil n^{\epsilon'} \rceil ((nd)^{2 - \epsilon} + (\lceil n^{1-\epsilon'} \rceil d)^2)) = \bigO(n^{2-(\epsilon - \epsilon')}) \poly(d))$, where $(\epsilon - \epsilon') > 0$. This contradicts the $\OV$-hypothesis.\par
\end{proof}

Since $(|\DBD||q|)^{1-\epsilon}\leq ((\max\{|\DBD|,|q|\})^2)^{1-\epsilon}=\max\{|\DBD|^{2-2\epsilon}, |q|^ {2-2\epsilon}\}\leq |\DBD|^{2-\epsilon}+|q|^2$, Theorem~\ref{CCCheckLowerOVBoundsTheorem} also rules out running times of the form $\bigO((|\DBD||q|)^{1-\epsilon})$ and $\bigO(\max\{|\DBD|,|q|\}^{2-\epsilon})$, but does not exclude a running time of $\bigO(|\DBD|^{2-\epsilon}+f(|q|))$ with $f(|q|) = \Omega(|q|^2)$. However, for $\epsilon < 1$ this is super-linear in $|\DBD|$ (and therefore inferior to $\bigO(|\DBD||q|)$ under the assumption $|q| \ll |\DBD|$, i.\,e., that the database is much larger than the query), and for $\epsilon = 1$, we would obtain $\bigO(|\DBD| + f(|q|))$, which, under the assumption $|q| \ll |\DBD|$, is a small and arguably negligible asymptotic improvement over $\bigO(|\DBD||q|)$.\par
If the size of $\DBD$ is expressed in terms of $|V_{\DBD}|$, then Theorem~\ref{CCCheckUpperBoundsTheorem} also gives an upper bound of $\bigO(|V_{\DBD}|^2|q|)$. In this regard, we can show the following lower bound.

\begin{thm}\label{CCCheckLowerTriangleBoundsTheorem}
If any of the problems $\checkProb$, $\booleProb$ and $\witnessProb$ can be solved in time $\bigO(|V_{\DBD}|^{3-\epsilon} + |q|^{3-\epsilon})$ for some $\epsilon > 0$, then the $\combBMMProb$-hypothesis fails. 
\end{thm}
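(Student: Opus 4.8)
The plan is to reduce $\TriProb$ to $\booleProb$ by a combinatorial reduction that, from an $n$-vertex graph $G=([n],E)$, produces a graph database $\DBD$ with $|V_{\DBD}|=\bigO(n)$ and an $\RPQ$ $q$ with $|q|=\bigO(n\log n)$, and such that $q$ is trivial enough that the resulting instance simultaneously serves as a $\checkProb$-instance (we will have $q(\DBD)\subseteq\{(p_0,r_0)\}$ for two distinguished nodes $p_0,r_0$). Then an algorithm solving $\checkProb$ or $\booleProb$ in time $\bigO(|V_{\DBD}|^{3-\epsilon}+|q|^{3-\epsilon})$ would decide $\TriProb$ by a combinatorial algorithm in time $\bigO(n^{3-\epsilon'})$ for some $\epsilon'>0$, which by the subcubic equivalence of $\TriProb$ and $\combBMMProb$ recalled in Section~\ref{sec:fineGrainedComplexity} refutes the $\combBMMProb$-hypothesis; the case of $\witnessProb$ follows since any $\witnessProb$-algorithm also solves $\booleProb$ within the same running time.

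For the construction I would work over $\Sigma=\{\tzero,\tone,\tb,\tc\}$, set $L=\lceil\log_2 n\rceil$, and write $\beta_i\in\{\tzero,\tone\}^L$ for the $L$-bit binary code of $i\in[n]$. The database $\DBD$ consists of: a complete binary ``descent'' tree of depth $L$ with $\tzero/\tone$-arcs rooted at $p_0$, in which the path labelled $\beta_i$ ends in a leaf $\ell_i$; a complete binary ``ascent'' tree of depth $L$ with $\tzero/\tone$-arcs oriented towards a root $r_0$ and with leaves $\ell_1',\dots,\ell_n'$ such that the unique $\ell_i'$-to-$r_0$ path is labelled $\beta_i$; nodes $v_1,\dots,v_n$ with arcs $(v_a,\tc,v_b)$ and $(v_b,\tc,v_a)$ for every $\{a,b\}\in E$; and, for each $i\in[n]$, the two ``gateway'' arcs $(\ell_i,\tb,v_i)$ and $(v_i,\tb,\ell_i')$. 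The query is $q=\beta_1\,\tb\,\tc\tc\tc\,\tb\,\beta_1\altop\dots\altop\beta_n\,\tb\,\tc\tc\tc\,\tb\,\beta_n$. The intended witnessing paths are $p_0\to\dots\to\ell_i\to v_i\to v_j\to v_k\to v_i\to\ell_i'\to\dots\to r_0$ for a triangle $\{i,j,k\}$, which are labelled by the $i$-th disjunct of $q$; this already shows $(p_0,r_0)\in q(\DBD)$ whenever $G$ has a triangle.

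The bulk of the argument is the converse. Take any $q$-path, labelled by the $i$-th disjunct. Its first $L$ letters lie in $\{\tzero,\tone\}$, which forces the path to start at $p_0$ and descend to $\ell_i$: it cannot start inside the ascent tree (those moves reach the sink $r_0$ in at most $L$ steps and then stall) nor at an internal descent node (too shallow to supply $L$ further down-steps before hitting a leaf, whose only out-arc is a $\tb$-arc). The unique $\tb$-arc from $\ell_i$ forces passage to $v_i$; the $\tc\tc\tc$-block is a length-$3$ walk $v_i\to v_a\to v_b\to v_m$ in the $\tc$-copy of $G$; the unique $\tb$-arc from $v_m$ forces passage to $\ell_m'$; and the last $L$ letters, read upward from the ascent leaf $\ell_m'$, necessarily spell $\beta_m$, so matching the suffix $\beta_i$ of the $i$-th disjunct forces $m=i$. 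Hence $v_i\to v_a\to v_b\to v_i$ is a closed $3$-walk in the simple graph $G$, i.e.\ a genuine triangle on $\{i,a,b\}$. Thus every element of $q(\DBD)$ equals $(p_0,r_0)$ and $q_{\boole}(\DBD)=\textsf{true}$ iff $G$ has a triangle.

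Finally one checks the parameters: $|V_{\DBD}|=\bigO(n)$ (two trees of $\bigO(n)$ nodes plus the $n$ vertex-nodes), $|E_{\DBD}|=\bigO(n^2)$, $|q|=\bigO(n\log n)$, and the whole reduction (including building the trees and turning $q$ into an NFA via Proposition~\ref{constructNFAProposition}) is combinatorial and runs in time $\bigO(n^2)$, which is within the stated target. The conceptual obstacle, and the reason the construction has this shape, is that a closed $\tc^3$-walk must be forced to return to its starting vertex $v_i$ even though the global target $r_0$ cannot depend on $i$; since a constant-size alphabet cannot name a vertex, the index $i$ has to be recorded along $\Theta(\log n)$ arcs on the way in and re-verified on the way out, and the coupling between ``descended to $v_i$'' and ``the $\tc^3$-walk returned to $v_i$'' must be hard-wired disjunct-by-disjunct into $q$ (a single Kleene expression matching all $L$-bit strings at both ends would decouple entry from exit and only detect length-$3$ paths in $G$, not triangles). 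This is exactly why $q$ needs $n$ disjuncts, hence size $\Theta(n\log n)$, and the logarithmic overhead is harmless for the cubic lower bound since $\bigO((n\log n)^{3-\epsilon})=\bigO(n^{3-\epsilon'})$ for a suitable $\epsilon'>0$.
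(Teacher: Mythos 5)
Your proof is correct, but it takes a genuinely different route from the paper's. Both arguments reduce $\TriProb$ to $\booleProb$ over a database with $|V_{\DBD}| = \bigO(n)$, and both exploit the core obstacle you articulate at the end: over a constant alphabet, the constraint ``entered at $v_i$ and returned to $v_i$'' needs some $n$-way mechanism. You realise it by explicit binary addressing, via two depth-$\lceil\log_2 n\rceil$ trees and an $n$-way disjunction, giving $|q| = \Theta(n\log n)$; structurally this is the same per-index disjunction template the paper itself uses for the $\OV$-reduction in Theorem~\ref{CCCheckLowerOVBoundsTheorem}. The paper's $\TriProb$-reduction uses a unary-counter trick instead: it unrolls $G$ into a $4$-layer DAG, attaches two length-$n$ chains $s_1, \ldots, s_n$ and $t_1, \ldots, t_n$ of $\ta$-arcs at either end, so that every $s'$-to-$t'$ path carries an $\ta$-block of length exactly $i + n - j + 4$ (entering the unrolled copy at layer $0$ in row $i$ and leaving layer $3$ in row $j$), and the single alternation-free query $q = \#\,\ta^{n+4}\,\#$ enforces $i = j$ as a length equation. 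That yields $|q| = \Theta(n)$ with no disjunction; your version pays a $\log n$ factor which, as you correctly note, is absorbed into the polynomial bound. Both reductions are combinatorial and valid; yours makes the addressing structure explicit and explains why $n$ disjuncts arise naturally, the paper's is leaner and avoids alternation entirely.
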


\begin{proof}
We prove the lower bound for $\booleProb$ only, since by Theorem~\ref{CheckAndBooleEquivalentTheorem} it also applies to $\witnessProb$ and $\booleProb$. \par
We assume that $\booleProb$ can be solved by a combinatorial algorithm in time $\bigO(|V_{\DBD}|^{3-\epsilon} + |q|^{3-\epsilon})$ for some $\epsilon > 0$. We can then solve $\TriProb$ on some instance $G = (V, E)$ with $V = \{v_1, v_2, \ldots, v_n\}$ as follows (the result will then follow from the combinatorial subcubic equivalence of $\BMMProb$ and $\TriProb$ (see Section~\ref{sec:fineGrainedComplexity})). We construct the graph database $\DBD$ over $\{\ta, \#\}$ with nodes 
\[
V_{\DBD} = \{s', t'\} \cup \{ s_j, t_j, \mid 1 \leq j \leq n\} \cup \{(u, i) \mid 0 \leq i \leq 3, u \in V\}
\]
and arcs 
\begin{align*}
E_{\DBD} = &\{((u, i), \ta, (u', i+1)) \mid 0 \leq i \leq 2, (u, u') \in E\}\:\cup \\
&\{(s', \#, s_1), (s_i, \ta, s_{i + 1}), (t_i, \ta, t_{i + 1}), (t_n, \#, t') \mid 1 \leq i \leq n-1\}\:\cup \\
&\{(s_i, \ta, (v_i, 0)), ((v_i, 3), \ta, t_i) \mid 1 \leq i \leq n\}\,.
\end{align*}
Furthermore, we define the $\RPQ$ $q = \# \ta^{n + 4} \#$. \par
By definition, $q(\DBD) = \{(s', t')\}$ or $q(\DBD) = \emptyset$. We call any path $p$ from $s'$ to $t'$ an \emph{$i$-$j$-path} for some $i, j \in [n]$ if there are $\ell_1, \ell_2 \in [n]$ such that 
\[
p = (s', s_1, \ldots, s_i, v_{i,0}, v_{\ell_1,1}, v_{\ell_2,2}, v_{j,3}, t_j, \ldots, t_n, t')\,.
\] 
It can be easily seen that any path from $s'$ to $t'$ is an $i$-$j$-path for some $i, j \in [n]$, and that any $i$-$j$-path is labelled with $\# \ta^i \ta^3 \ta^{n-j+1} \# = \# \ta^{i + n - j + 4} \#$. Hence, $q(\DBD) \neq \emptyset$ if and only if there is an $i$-$i$-path for some $i \in [n]$. Since, for every $i \in [n]$, there is an $i$-$i$-path if and only if there is a path $(v_{i,0}, v_{\ell_1,1}, v_{\ell_2,2}, v_{i,3})$, we see that, for every $i \in [n]$, there is an $i$-$i$-path if and only if $G$ has triangle that contains $v_i$. Consequently, $q(\DBD) \neq \emptyset$ if and only if $G$ has a triangle. \par
This means that by first constructing $\DBD$ and $q$ in time $\bigO(|D| + |q|) = \bigO(|G| + |V|)$ and then solving $\booleProb$ for instance $(\DBD, q)$ in the assumed time $\bigO(|V_{\DBD}|^{3-\epsilon} + |q|^{3-\epsilon})$ for some $\epsilon > 0$, means that we can solve $\TriProb$ by a combinatorial algorithm in time $\bigO(|V_{\DBD}|^{3-\epsilon} + |q|^{3-\epsilon}) = \bigO(|V|^{3-\epsilon})$. \par
With Lemma~\ref{CheckAndBooleEquivalentTheorem}, we conclude that the assumptions that $\checkProb$ or $\witnessProb$ can be solved in time $\bigO(|V_{\DBD}|^{3-\epsilon} + |q|^{3-\epsilon})$ for some $\epsilon > 0$, leads to a combinatorial $\bigO(|V|^{3-\epsilon})$ algorithm for $\TriProb$ as well. Thus, due to the combinatorial subcubic equivalence of $\BMMProb$ and $\TriProb$ (see Section~\ref{sec:fineGrainedComplexity}), there is a combinatorial $\bigO(n^{3-\epsilon})$ algorithm for $\BMMProb$, contradicting the $\combBMMProb$-hypothesis.
\end{proof}

Since $\bigO((|\DBD||q|)^{1 - \epsilon}) \subseteq \bigO((|V_{\DBD}|^2|q|)^{1 - \epsilon}) \subseteq \bigO(|V_{\DBD}|^{3-\epsilon} + |q|^{3-\epsilon})$, such running-times are also ruled out under the $\combBMMProb$-hypothesis. Especially, a combinatorial algorithm with running-time $\bigO((|\DBD||q|)^{1 - \epsilon})$ refutes \emph{both} the $\OV$- and the $\combBMMProb$-hypothesis; thus, such an algorithm does not exist provided that at least one of these hypotheses is true (basing lower bounds on several hypotheses is common in fine-grained complexity, see, e.\,g.,~\cite{AbboudEtAl2018}).\par
The lower bounds discussed above are only meaningful for combined complexity. However, the upper bound of Theorem~\ref{CCCheckUpperBoundsTheorem} already yields the optimum of \emph{linear} data complexity.

\subsection{Full Evaluation and Counting}

The following upper bound is again a straightforward application of the PG-approach.

\begin{thm}\label{crossProductUpperBoundEvalTheorem}
$\evalProb$ can be solved in time $\bigO(|V_{\DBD}||\DBD||q|)$. 
\end{thm}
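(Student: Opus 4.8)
The plan is to apply the PG-approach and then perform a single graph search from each database node in the product graph. First I would construct the product graph $\DBtimesRE{\DBD}{q}$, which by Lemma~\ref{crossProductSizeLemma} takes time $\bigO(|\DBD||q|)$ and yields a graph with $\bigO(|V_{\DBD}||q|)$ nodes and $\bigO(|\DBD||q|)$ edges. By Lemma~\ref{reduceEvalToSTTCLemma}, a pair $(u,v)$ belongs to $q(\DBD)$ exactly when $(v,p_f)$ is reachable from $(u,p_{\init})$ in (the underlying graph of) $\DBtimesRE{\DBD}{q}$. So computing $q(\DBD)$ reduces to, for each $u \in V_{\DBD}$, determining the set of nodes of the form $(v, p_f)$ reachable from $(u, p_{\init})$.

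The main step is then: for each of the $|V_{\DBD}|$ choices of $u$, run a breadth-first search (or DFS) in $\DBtimesRE{\DBD}{q}$ starting at $(u, p_{\init})$. Since we store pointers to adjacency lists alongside nodes (see the remark in Section~\ref{sec:mainDefs}), each such BFS runs in time $\bigO(|\DBtimesRE{\DBD}{q}|) = \bigO(|\DBD||q|)$. Whenever the search visits a node $(v, p_f)$, we output the pair $(u,v)$ (using a marker array per search to avoid duplicates, which is fine since each pair $(u,v)$ arises from a fixed $u$). Repeating this over all $u$ gives total time $\bigO(|V_{\DBD}| \cdot |\DBD||q|) = \bigO(|V_{\DBD}||\DBD||q|)$, which dominates the $\bigO(|\DBD||q|)$ cost of building the product graph, so the overall bound is $\bigO(|V_{\DBD}||\DBD||q|)$ as claimed.

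I expect no serious obstacle here — the statement is explicitly flagged as ``a straightforward application of the PG-approach''. The only points requiring a little care are bookkeeping details: reinitializing the visited-array between the $|V_{\DBD}|$ searches without paying $\bigO(|V_{\DBD}||q|)$ each time (either absorb it, since $|V_{\DBD}||q| \le |\DBD||q|$ when there are no isolated nodes, or reset only the touched entries, or use lazy array initialization as hinted in the paper), and making sure we collect all $(v,p_f)$ nodes, not just one, so that the \emph{whole} set $q(\DBD)$ is produced rather than a single witness as in Theorem~\ref{CCCheckUpperBoundsTheorem}. Correctness of each individual search follows immediately from Lemma~\ref{reduceEvalToSTTCLemma}.
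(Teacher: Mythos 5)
Your proposal matches the paper's proof essentially verbatim: construct $\DBtimesRE{\DBD}{q}$ via Lemma~\ref{crossProductSizeLemma}, then run one BFS from each $(u, p_{\init})$ and output $(u,v)$ for each visited $(v,p_f)$, invoking Lemma~\ref{reduceEvalToSTTCLemma} for correctness, giving $\bigO(|V_{\DBD}||\DBD||q|)$ total. The extra bookkeeping observations (resetting the visited array, collecting all final-state nodes rather than just one) are sound and consistent with the intended implementation, though the paper leaves them implicit.
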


\begin{proof}
Let $\DBD = (V_{\DBD}, E_{\DBD})$ be a graph database over $\Sigma$ and let $q$ be an $\RPQ$ over $\Sigma$. We construct $\DBtimesRE{\DBD}{q}$, which, according to Lemma~\ref{crossProductSizeLemma}, can be done in time $\bigO(|\DBD||q|)$. We interpret $\DBtimesRE{\DBD}{q}$ as its underlying non-labelled graph. According to Lemma~\ref{reduceEvalToSTTCLemma}, we can now compute $q(\DBD)$ by performing a BFS from each node $(u, p_0)$ with $u \in V_{\DBD}$. Each such BFS requires time $\bigO(|\DBtimesRE{\DBD}{q}|) = \bigO(|\DBD||q|)$, which means that the total running-time is $\bigO(|V_{\DBD}||\DBtimesRE{\DBD}{q}|) = \bigO(|V_{\DBD}||\DBD||q|)$.
\end{proof}

Instead of using graph-searching techniques on $\DBtimesRE{\DBD}{q}$, we could also compute the complete transitive closure of $\DBtimesRE{\DBD}{q}$ with fast matrix multiplication.

\begin{thm}\label{BMMUpperBoundEvalTheorem}
If $\BMMProb$ can be solved in time $\bigO(n^{\omega})$ with $\omega \geq 2$, then $\evalProb$ can be solved in time $\bigO(|V_{\DBD}|^{\omega}|q|^{\omega})$. 
\end{thm}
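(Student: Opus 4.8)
The plan is to follow the PG-approach as in Theorem~\ref{crossProductUpperBoundEvalTheorem}, but to replace the $|V_{\DBD}|$ separate breadth-first searches on the product graph by one global reflexive-transitive-closure computation carried out with fast Boolean matrix multiplication. First I would construct $\DBtimesRE{\DBD}{q}$, which by Lemma~\ref{crossProductSizeLemma} costs $\bigO(|\DBD||q|)$ and produces a graph on $N := |\DBtimesREV{\DBD}{q}| = \bigO(|V_{\DBD}||q|)$ nodes; as in Theorem~\ref{crossProductUpperBoundEvalTheorem} we work with its underlying unlabelled graph, so $\eword$-arcs just become ordinary edges. From the adjacency lists I would materialise the Boolean $N \times N$ adjacency matrix $A$ of this underlying graph (initialise all entries to $0$, then insert a $1$ for each arc) and also set the diagonal to $1$ to account for length-$0$ paths; this takes $\bigO(N^{2})$ time.

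The heart of the argument is computing the reflexive-transitive closure $A^{*}$. Instead of repeatedly squaring $I + A$ (which would cost $\bigO(N^{\omega}\log N)$ and hence miss the claimed bound), I would use the classical logarithmic-factor-free reduction of transitive closure to Boolean matrix multiplication due to Fischer and Meyer (see also Munro, and Aho--Hopcroft--Ullman): split the vertex set into two halves and write $A$ in $2 \times 2$ block form with blocks $B$ (edges inside the first half), $E$ (inside the second half), $C$ (first half to second), and $D$ (second half to first); recursively compute $E^{*}$, then recursively compute $F^{*}$ for $F := B + C E^{*} D$, and finally assemble the four blocks of $A^{*}$ from $E^{*}$ and $F^{*}$ using a constant number of Boolean matrix products and entrywise-$\vee$ sums, each on matrices of dimension $N/2$. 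This yields the recurrence $T(N) \le 2\,T(N/2) + \bigO(N^{\omega})$, and since $\omega \ge 2 > \log_{2} 2 = 1$ the top-level work dominates, so $T(N) = \bigO(N^{\omega})$; the sum $\sum_{i \ge 0} 2^{i}(N/2^{i})^{\omega}$ is a convergent geometric series already at $\omega = 2$, so there is no logarithmic overhead even in that boundary case, which is exactly why the hypothesis $\omega \ge 2$ suffices.

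It remains to read off the answer: by Lemma~\ref{reduceEvalToSTTCLemma}, $(u,v) \in q(\DBD)$ if and only if $A^{*}[(u,p_{\init}),(v,p_{f})] = 1$, so I would scan all $|V_{\DBD}|^{2} \le N^{2}$ pairs $(u,v) \in V_{\DBD} \times V_{\DBD}$ and output those for which this entry is $1$; this phase costs $\bigO(N^{2})$. In total the running time is $\bigO(|\DBD||q| + N^{2} + N^{\omega}) = \bigO(N^{\omega}) = \bigO((|V_{\DBD}||q|)^{\omega}) = \bigO(|V_{\DBD}|^{\omega}|q|^{\omega})$, where we used $\omega \ge 2$ together with $|\DBD||q| = \bigO(|V_{\DBD}|^{2}|q|) = \bigO(N^{2})$. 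The only real obstacle is the transitive-closure step: one must recall (or carefully restate) the Fischer--Meyer block recursion, verify that the assembled blocks indeed describe $A^{*}$ correctly, and confirm that everything besides the two recursive calls is a bounded number of $(N/2)$-dimensional Boolean matrix operations, so that the recurrence really is $2\,T(N/2) + \bigO(N^{\omega})$; the remaining steps (building the product graph, materialising the matrix, extracting $q(\DBD)$) are routine and comfortably absorbed into $\bigO(N^{\omega})$ for $\omega \ge 2$.
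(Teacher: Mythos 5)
Your proof is correct and follows essentially the same route as the paper: construct $\DBtimesRE{\DBD}{q}$, materialise the $N \times N$ adjacency matrix of its underlying graph, compute the (reflexive-)transitive closure via fast Boolean matrix multiplication in $\bigO(N^{\omega})$, and read off $q(\DBD)$ via Lemma~\ref{reduceEvalToSTTCLemma}. The only difference is that the paper simply cites Munro~\cite{Munro1971} for the $\bigO(N^{\omega})$ transitive-closure bound, whereas you spell out the Fischer--Meyer block recursion and the recurrence $T(N) \le 2T(N/2) + \bigO(N^{\omega})$ yourself; both are valid.
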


\begin{proof}
We assume that $\BMMProb$ can be solved in time $\bigO(n^{\omega})$. Let $\DBD = (V_{\DBD}, E_{\DBD})$ be a graph database over $\Sigma$ and let $q$ be an $\RPQ$ over $\Sigma$. We construct $\DBtimesRE{\DBD}{q}$, which, according to Lemma~\ref{crossProductSizeLemma}, can be done in time $\bigO(|\DBD||q|)$, and we interpret it as its underlying non-labelled graph. Obviously, we can turn the adjacency list-based representation of $\DBtimesRE{\DBD}{q}$ into an adjacency matrix-based representation in time $\bigO(|\DBtimesREV{\DBD}{q}|^2) = \bigO((|V_{\DBD}||q|)^2) = \bigO((|V_{\DBD}||q|)^{\omega})$. Then, we compute the transitive closure $(\DBtimesREE{\DBD}{q})^*$ of $\DBtimesRE{\DBD}{q}$, which can be done in time $\bigO(|\DBtimesREV{\DBD}{q}|^{\omega}) = \bigO((|V_{\DBD}||q|)^{\omega})$ (see~\cite{Munro1971}). In order to obtain $q(\DBD)$, it is sufficient to go through all elements $((u, p), (v, p')) \in (\DBtimesREE{\DBD}{q})^*$ and add $(u, v)$ to a new set if and only if $p = p_0$ and $p' = p_f$ (see Lemma~\ref{reduceEvalToSTTCLemma}). Since $|(\DBtimesREE{\DBD}{q})^*| \leq |\DBtimesREV{\DBD}{q}|^2 = \bigO(|\DBtimesREV{\DBD}{q}|^{\omega}) = \bigO((|V_{\DBD}||q|)^{\omega})$, this can be done in time $\bigO((|V_{\DBD}||q|)^{\omega})$.
\end{proof}

We mention this theoretical upper bound for completeness, but stress the fact that our main interest lies in \emph{combinatorial} algorithms. In addition to the limitation that algorithms for fast matrix multiplication are not practical, we also observe that the approach of Theorem~\ref{BMMUpperBoundEvalTheorem} is only better if the graph database is not too sparse, i.\,e., only if $|V_{\DBD}||\DBD| = \Omega(|V_{\DBD}|^{\omega})$.\par
Next, we investigate the question whether $\bigO(|V_{\DBD}||\DBD||q|)$ is optimal for $\evalProb$ at least with respect to combinatorial algorithms. Since for $\evalProb$ the PG-approach does not yield an algorithm that is linear in data complexity (like it was the case with respect to the problems of Section~\ref{sec:booleTestWitness}), the question arises whether the $|V_{\DBD}||\DBD|$ part can be improved at the cost of spending more time in $|q|$. It seems necessary that respective \emph{data complexity lower bounds} need reductions that do not use $q$ to represent a non-constant part of the instance, as it was the case for both the $\OV$ and the $\TriProb$ reduction from Section~\ref{sec:booleTestWitness}. \par
It is not difficult to see that multiplying two $n \times n$ Boolean matrices reduces to $\evalProb$ as formally stated by the next lemma.

\begin{lem}\label{BMMLowerBoundEvalLemma}
If $\evalProb$ can be solved in time $\bigO(|V_{\DBD}|^{\omega} f(|q|))$ for some function $f$ and $\omega \geq 2$, then $\BMMProb$ can be solved in time $\bigO(n^{\omega})$. 
\end{lem}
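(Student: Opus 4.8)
The plan is to give a direct reduction from $\BMMProb$ to $\evalProb$ that encodes the two input matrices entirely in the graph database, so that the query $q$ has constant size and the $f(|q|)$ factor becomes a constant. Given Boolean $n \times n$ matrices $A$ and $B$, I would build a graph database $\DBD$ over a two-letter alphabet $\Sigma = \{\ta, \tb\}$ on three layers of nodes: a layer $X = \{x_1, \ldots, x_n\}$, a layer $Y = \{y_1, \ldots, y_n\}$, and a layer $Z = \{z_1, \ldots, z_n\}$. For each $1$-entry $A[i,k]$ I add an $\ta$-arc $(x_i, \ta, y_k)$, and for each $1$-entry $B[k,j]$ I add a $\tb$-arc $(y_k, \tb, z_j)$. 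I then take the fixed $\RPQ$ $q = \ta \tb$. By construction, $(x_i, z_j) \in q(\DBD)$ if and only if there is some $k$ with $A[i,k] = 1$ and $B[k,j] = 1$, i.\,e., iff $(A \times B)[i,j] = 1$. So reading off the pairs of the form $(x_i, z_j)$ from $q(\DBD)$ recovers $A \times B$ exactly.

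Next I would check the size bookkeeping. The database has $|V_{\DBD}| = 3n = \Theta(n)$ nodes and $|E_{\DBD}| = O(n^2)$ arcs, and it can be constructed from $A, B$ in time $O(n^2)$ in the list-based representation of Section~\ref{sec:mainDefs} (go through the two matrices once, appending to adjacency lists in constant time each). The query $q = \ta\tb$ has $|q| = O(1)$. Given an algorithm for $\evalProb$ running in time $\bigO(|V_{\DBD}|^{\omega} f(|q|))$, applying it to $(\DBD, q)$ takes time $\bigO((3n)^{\omega} f(O(1))) = \bigO(n^{\omega})$ since $f(|q|)$ is a constant depending only on the fixed query; here I use $\omega \geq 2$ so that the $O(n^2)$ construction time and the $O(n^2)$ time to scan $q(\DBD)$ and assemble the output matrix are absorbed into $\bigO(n^{\omega})$. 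This yields a $\bigO(n^{\omega})$ algorithm for $\BMMProb$, as claimed.

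There is essentially no hard obstacle here; the statement is the easy direction matching the fast-matrix-multiplication upper bound of Theorem~\ref{BMMUpperBoundEvalTheorem}. The one point that deserves a sentence of care is that $q(\DBD)$ may also contain pairs \emph{not} of the form $(x_i, z_j)$ --- but with $q = \ta\tb$ the only length-$2$ labelled paths in $\DBD$ go from an $X$-node via a $Y$-node to a $Z$-node, so in fact $q(\DBD) \subseteq X \times Z$ and no filtering is even needed; every returned pair directly encodes a $1$-entry of $A \times B$. (If one prefers extra robustness one could add a letter-marking gadget, but it is unnecessary.) A second minor point is that $\evalProb$ returns the \emph{set} of node pairs, whereas $\BMMProb$ wants the full $n\times n$ Boolean matrix; converting between the two is an $O(n^2)$ pass, again absorbed by $\bigO(n^{\omega})$ when $\omega \geq 2$. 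Thus the lemma follows.
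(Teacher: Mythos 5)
Your proof is correct and is essentially the same as the paper's: both build a three-layer gadget ($n$ source nodes, $n$ middle nodes, $n$ sink nodes) with arcs encoding the $1$-entries of $A$ and $B$, and apply the assumed $\evalProb$-algorithm with a fixed two-hop query so that $f(|q|)$ is a constant and $|V_{\DBD}| = \Theta(n)$. The only (cosmetic) difference is that the paper uses a unary alphabet with $q = \ta\ta$, distinguishing layers purely by the node labels, whereas you use $q = \ta\tb$ over $\{\ta,\tb\}$; both are fine, and your bookkeeping about $\omega \geq 2$ absorbing the $O(n^2)$ construction and read-out costs is the same as the paper's.
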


\begin{proof}
We assume that $\evalProb$ can be solved in time $\bigO(|V_{\DBD}|^{\omega} f(|q|))$ for some function~$f$. Let $A$ and $B$ be $n \times n$ Boolean matrices. Then we  construct the graph database $\DBD_{A, B}$ over $\{\ta\}$ with $V_{\DBD_{A, B}} = \{(i, 0), (i, 1), (i, 2) \mid i \in [n]\}$ and $E_{\DBD_{A, B}} = \{((i, 0), \ta, (j, 1)) \mid A[i, j] = 1\} \cup \{((i, 1), \ta, (j, 2)) \mid B[i, j] = 1\}$, and the $\RPQ$ $q = \ta \ta$. Obviously, $q(\DBD_{A, B}) \subseteq \{((i, 0), (j, 2)) \mid i, j \in [n]\}$ and $((i, 0), (j, 2)) \in q(\DBD_{A, B})$ if and only if $(A \times B)[i, j] = 1$. Hence, we can construct $A \times B$ from $q(\DBD_{A, B})$ in time $\bigO(|q(\DBD_{A, B})|) = \bigO(n^2)$. By assumption, the set $q(\DBD)$ can be computed in time $\bigO(|V_{\DBD_{A, B}}|^{\omega} f(|q|)) = \bigO(n^{\omega} f(2)) = \bigO(n^{\omega})$.
\end{proof}

From this, we can conclude the following lower bound.

\begin{thm}\label{BMMLowerBoundEvalTheorem}
If $\evalProb$ can be solved in time $\bigO((|V_{\DBD}||\DBD|)^{1 - \epsilon}f(|q|))$ for some function~$f$ and some $\epsilon > 0$, then the $\combBMMProb$-hypothesis fails.
\end{thm}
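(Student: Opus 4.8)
The plan is to derive Theorem~\ref{BMMLowerBoundEvalTheorem} from Lemma~\ref{BMMLowerBoundEvalLemma} by combining it with the combinatorial subcubic equivalence of $\BMMProb$ and $\TriProb$, together with a self-reduction that converts a hypothetical ``truly sub-$(|V_{\DBD}||\DBD|)$'' algorithm into a truly subcubic one for $\BMMProb$. First I would observe that the reduction in Lemma~\ref{BMMLowerBoundEvalLemma} already produces a \emph{sparse} graph database: in $\DBD_{A,B}$ the number of arcs is at most $2n^2$ while the number of vertices is $3n$, so $|\DBD_{A,B}| = \bigO(n^2) = \bigO(|V_{\DBD_{A,B}}|^2)$, and $|q|$ is constant. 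On such an instance, $|V_{\DBD}||\DBD| = \Theta(n^3)$, so an $\evalProb$-algorithm running in time $\bigO((|V_{\DBD}||\DBD|)^{1-\epsilon} f(|q|))$ would solve the constructed instance in time $\bigO((n^3)^{1-\epsilon} f(2)) = \bigO(n^{3-3\epsilon})$.

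The next step is to check that this reduction is \emph{combinatorial}: constructing $\DBD_{A,B}$ and $q$, building $\DBtimesRE{\DBD}{q}$, and reading off $A\times B$ from $q(\DBD_{A,B})$ all consist of elementary list manipulations with small constants, so no fast matrix multiplication is hidden in the reduction. Hence a combinatorial $\evalProb$-algorithm of the stated running time yields a combinatorial algorithm computing $A\times B$ in time $\bigO(n^{3-3\epsilon}) + \bigO(n^2) = \bigO(n^{3-\epsilon'})$ for $\epsilon' = \min\{3\epsilon, 1\} > 0$, which directly contradicts the $\combBMMProb$-hypothesis. (Equivalently, one could route through $\TriProb$ via the equivalence recalled in Section~\ref{sec:fineGrainedComplexity}, but here the reduction to $\BMMProb$ is already immediate, so that detour is unnecessary.)

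I expect the only genuinely delicate point to be bookkeeping with the parameter $f(|q|)$: since $q = \ta\ta$ has constant size, $f(|q|)$ is an absolute constant and can be absorbed into the $\bigO$, but the statement must make clear that this is why the ``$f(|q|)$'' factor does not interfere — the hardness comes from the data-complexity part of the bound, which is exactly the point stressed in the paragraph preceding Lemma~\ref{BMMLowerBoundEvalLemma}. A secondary subtlety is relating the exponent: ``$(|V_{\DBD}||\DBD|)^{1-\epsilon}$'' on a sparse instance equals $n^{3-3\epsilon}$ rather than $n^{3-\epsilon}$, so one should be careful to rename the $\epsilon$ when invoking the hypothesis; this is routine but worth stating explicitly. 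With these observations the proof is essentially a two-line deduction: apply Lemma~\ref{BMMLowerBoundEvalLemma} with $\omega = 3 - \epsilon'$ on the sparse instances produced there, note the reduction is combinatorial, and conclude.
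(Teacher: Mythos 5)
Your argument is correct and essentially identical to the paper's: both bound $(|V_{\DBD}||\DBD|)^{1-\epsilon}$ by a power of $|V_{\DBD}|$ (you track the tighter $|V_{\DBD}|^{3-3\epsilon}$ and rename the exponent, the paper simply uses the looser $|V_{\DBD}|^{3-\epsilon}$) and then apply Lemma~\ref{BMMLowerBoundEvalLemma}, so the opening gesture toward the $\BMMProb$/$\TriProb$ equivalence and a ``self-reduction'' is, as you yourself later concede, superfluous. One slip worth correcting before this could be spliced in: $\DBD_{A,B}$ is \emph{not} sparse in the paper's sense (it has $\Theta(n)$ vertices but up to $\Theta(n^2)$ arcs, so it is about as dense as possible), and in fact your argument never needs sparsity --- it only needs $|\DBD_{A,B}| = \bigO(|V_{\DBD_{A,B}}|^2)$, which holds for every graph database over a fixed alphabet.
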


\begin{proof}
If $\evalProb$ can be solved by a combinatorial algorithm with a running time in $\bigO((|V_{\DBD}||\DBD|)^{1 - \epsilon}f(|q|))$ for some function~$f$ and some $\epsilon > 0$, then, since we have that $\bigO((|V_{\DBD}||\DBD|)^{1 - \epsilon}f(|q|)) = \bigO(|V_{\DBD}|^{3 - \epsilon}f(|q|))$, Lemma~\ref{BMMLowerBoundEvalLemma} implies that $\BMMProb$ can be solved in time $\bigO(n^{3 - \epsilon})$. Thus, the $\combBMMProb$-hypothesis fails.
\end{proof}

If we drop the restriction to combinatorial algorithms, we can nevertheless show (with more or less the same construction) that linear time in data complexity is impossible, unless the $\SBMMProb$-hypothesis fails. However, since the size of the output $q(\DBD)$ might be super-linear in $|\DBD|$, we should interpret \emph{linear} as linear in $|\DBD| + |q(\DBD)|$.

\begin{lem}\label{SBMMLowerBoundEvalLemma}
If $\evalProb$ can be solved in time $\bigO((|q(\DBD)| + |\DBD|)^{\omega} f(|q|))$ for some function~$f$ and $\omega \geq 1$, then $\SBMMProb$ can be solved in time $\bigO(m^{\omega})$. 
\end{lem}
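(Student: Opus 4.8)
The plan is to re-use, essentially verbatim, the reduction from the proof of Lemma~\ref{BMMLowerBoundEvalLemma}, but to take care with the size of the constructed instance, since complexity is now measured in the number $m$ of $1$-entries rather than in the dimension $n$. Given sparse Boolean $n\times n$ matrices $A,B$, presented as sets of $1$-positions, I would build the layered graph database $\DBD_{A,B}$ over $\{\ta\}$ containing only the \emph{active} nodes: a node $(i,0)$ for each nonzero row $i$ of $A$, a node $(k,1)$ for each index $k$ occurring as a column of $A$ or as a row of $B$ in some $1$-entry, and a node $(j,2)$ for each nonzero column $j$ of $B$; the arcs are $((i,0),\ta,(k,1))$ whenever $A[i,k]=1$ and $((k,1),\ta,(j,2))$ whenever $B[k,j]=1$, and the query is $q=\ta\ta$. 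The reason for keeping only active nodes is that then $|V_{\DBD}|=\bigO(m)$ and $|E_{\DBD}|=\bigO(m)$, hence $|\DBD|=\bigO(m)$, whereas keeping all $3n$ nodes could make the instance much larger than $m$ for very sparse matrices.

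Next I would check correctness exactly as in Lemma~\ref{BMMLowerBoundEvalLemma}: the graph is layered ($0\to 1\to 2$), so any path labelled $\ta\ta$ has the form $(i,0)\to(k,1)\to(j,2)$ with $A[i,k]=1$ and $B[k,j]=1$; hence $q(\DBD_{A,B})=\{((i,0),(j,2))\mid (A\times B)[i,j]=1\}$, and in particular $|q(\DBD_{A,B})|\le |\{(i,j)\mid(A\times B)[i,j]=1\}|\le m$. Therefore $|q(\DBD_{A,B})|+|\DBD_{A,B}|=\bigO(m)$, and the assumed algorithm computes $q(\DBD_{A,B})$ in time $\bigO((|q(\DBD_{A,B})|+|\DBD_{A,B}|)^{\omega}f(|q|))=\bigO(m^{\omega}f(2))=\bigO(m^{\omega})$. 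Reading the set of $1$-positions of $A\times B$ directly off $q(\DBD_{A,B})$ costs $\bigO(|q(\DBD_{A,B})|)=\bigO(m)$, which is absorbed since $\omega\ge 1$. Altogether this solves $\SBMMProb$ in time $\bigO(m^{\omega})$.

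The main obstacle is not correctness but making sure the reduction itself stays within budget, i.e. that $\DBD_{A,B}$ can be assembled in the adjacency-list representation in time $\bigO(m)$, so that the bound is not spoiled in the borderline case $\omega=1$. Concretely, from the raw lists of $1$-positions one has to collect the at most $\bigO(m)$ active indices and attach to each active node a pointer to its (initially empty) adjacency list before inserting the arcs; I would argue this is doable in $\bigO(m)$ time under the stated computational model (indices fitting in $\bigO(1)$ machine words, so that an auxiliary lookup structure over the active indices is set up in linear time). I expect this bookkeeping — rather than anything conceptual — to be the only delicate point, and it is precisely why the statement is tailored to the sparse matrix representation.
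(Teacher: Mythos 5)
Your reduction is exactly the one in the paper's proof of this lemma: the same three-layer graph over $\{\ta\}$ restricted to ``active'' indices so that $|V_{\DBD_{A,B}}|$, $|E_{\DBD_{A,B}}|$ and $|q(\DBD_{A,B})|$ are all $\bigO(m)$, with $q=\ta\ta$ and the identical correctness and running-time analysis. The only difference is that you spell out the $\bigO(m)$ construction of the adjacency-list representation, which the paper asserts without elaboration; this is a sound and welcome clarification, not a deviation.
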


\begin{proof}
We assume that $\evalProb$ can be solved in time $\bigO((|q(\DBD)| + |\DBD|)^{\omega} f(|q|))$ for some function $f$. Let $A$ and $B$ be $n \times n$ Boolean matrices given as sets of their $1$-entries and let $m$ be the total number of $1$-entries in $A$, $B$ and $A \times B$. Then we can construct a graph database $\DBD_{A, B}$ over $\{\ta\}$ as follows. For every $i \in [n]$, the set $V_{\DBD_{A, B}}$ contains a node $(i, 0)$ if the $i^{\text{th}}$ row of $A$ contains at least one $1$-entry, a node $(i, 1)$ if the $i^{\text{th}}$ column of $A$ contains at least one $1$-entry or the $i^{\text{th}}$ row of $B$ contains at least one $1$-entry, and 
a node $(i, 2)$ if the $i^{\text{th}}$ column of $B$ contains at least one $1$-entry. The set of arcs is defined by $E_{\DBD_{A, B}} = \{((i, 0), \ta, (j, 1)) \mid A[i, j] = 1\} \cup \{((i, 1), \ta, (j, 2)) \mid B[i, j] = 1\}$. We observe that $|E_{\DBD_{A, B}}| = \bigO(m)$ and, since every node in $V_{\DBD_{A, B}}$ has degree at least~$1$, $|V_{\DBD_{A, B}}| = \bigO(|E_{\DBD_{A, B}}|)$. Consequently, $|\DBD_{A, B}| = \bigO(m)$ and $\DBD_{A, B}$ can be constructed in time $\bigO(m)$. Further, we define $q=\ta \ta$.\par
We can observe that $q(\DBD_{A, B}) \subseteq \{((i, 0), (j, 2)) \mid i, j \in [n]\}$ and, for every $i, j \in [n]$, $((i, 0), (j, 2)) \in q(\DBD_{A, B})$ if and only if $(A \times B)[i, j] = 1$. Thus, $|q(\DBD_{A, B})| = \bigO(m)$ and we can obtain a set of exactly the $1$-entries of $A \times B$ from $q(\DBD_{A, B})$ in time $\bigO(|q(\DBD_{A, B})|) = \bigO(m)$. \par
By assumption, we can compute $q(\DBD_{A, B})$ in time $\bigO((|q(\DBD_{A, B})| + |\DBD_{A, B}|)^{\omega} f(|q|))$ for some function $f$ and $\omega \geq 1$. This implies that we can obtain a set of exactly the $1$-entries of $A \times B$ in time $\bigO((|q(\DBD_{A, B})| + |\DBD_{A, B}|)^{\omega} f(|q|)) = \bigO(m^{\omega} f(2)) = \bigO(m^{\omega})$.
\end{proof}

Lemma~\ref{SBMMLowerBoundEvalLemma} directly implies the following lower bound.

\begin{thm}\label{SBMMLowerBoundEvalTheorem}
If $\evalProb$ can be solved in time $\bigO((|q(\DBD)| + |\DBD|) f(|q|))$ for some function $f$, then the $\SBMMProb$-hypothesis fails.
\end{thm}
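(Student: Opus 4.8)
The plan is to derive Theorem~\ref{SBMMLowerBoundEvalTheorem} as a direct corollary of Lemma~\ref{SBMMLowerBoundEvalLemma}, which has already done all the combinatorial work of encoding a sparse Boolean matrix product as an $\evalProb$-instance. The key observation is just the instantiation $\omega = 1$: suppose, for contradiction, that $\evalProb$ can be solved in time $\bigO((|q(\DBD)| + |\DBD|) f(|q|))$ for some function $f$. This is precisely the hypothesis of Lemma~\ref{SBMMLowerBoundEvalLemma} with the exponent $\omega = 1$ (note $\omega \geq 1$ is satisfied). Hence the lemma yields an algorithm solving $\SBMMProb$ in time $\bigO(m^{1}) = \bigO(m)$, where $m$ is the total number of $1$-entries in the two input matrices and their product. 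But this directly contradicts the $\SBMMProb$-hypothesis as stated in Section~\ref{sec:fineGrainedComplexity}, which asserts that no algorithm solves $\SBMMProb$ in time $\bigO(m)$. Therefore no such $\evalProb$-algorithm can exist unless the $\SBMMProb$-hypothesis fails.

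Since everything of substance is already packaged in Lemma~\ref{SBMMLowerBoundEvalLemma} (the construction of $\DBD_{A,B}$ over the singleton alphabet $\{\ta\}$, the query $q = \ta\ta$, the sparsity bound $|\DBD_{A,B}| = \bigO(m)$, and the correctness claim $((i,0),(j,2)) \in q(\DBD_{A,B}) \iff (A\times B)[i,j] = 1$ together with $|q(\DBD_{A,B})| = \bigO(m)$), there is essentially no remaining obstacle — the proof is a one-line reduction of the theorem to the lemma. The only thing to be a little careful about is the reading of ``linear'': the output $q(\DBD)$ can be superlinear in $|\DBD|$, so the lower bound must be phrased against running time linear in $|q(\DBD)| + |\DBD|$ rather than linear in $|\DBD|$ alone, and the statement of the theorem already reflects this. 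I would simply write: ``Apply Lemma~\ref{SBMMLowerBoundEvalLemma} with $\omega = 1$.''
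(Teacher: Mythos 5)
Your proposal is correct and is precisely the paper's argument: the paper states the theorem as a direct consequence of Lemma~\ref{SBMMLowerBoundEvalLemma}, which is exactly your instantiation with $\omega=1$. Nothing further is needed.
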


Surprisingly, we can obtain a more complete picture for the problem $\countProb$. First, we observe that obviously all upper bounds carry over from $\evalProb$ to $\countProb$. On the other hand, a combinatorial $\bigO((|V_{\DBD}||\DBD|)^{1-\epsilon}f(q))$ algorithm or a general $\bigO((|q(\DBD)| + |\DBD|) f(|q|))$ algorithm for $\countProb$ does not seem to help for solving Boolean matrix multiplication (and therefore, the lower bounds do not carry over). Fortunately, it turns out that $\OV$ is a suitable problem to reduce to $\countProb$, although by a rather different reduction compared to the one used for Theorem~\ref{CCCheckLowerOVBoundsTheorem}.

\begin{thm}\label{OVLowerBoundCountTheorem}
If $\countProb$ can be solved in time $\bigO(|\DBD|^{2-\epsilon} f(|q|))$ for some function~$f$ and $\epsilon > 0$, then the $\OV$-hypothesis fails.
\end{thm}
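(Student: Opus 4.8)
The plan is to reduce $\OV$ to $\countProb$ in such a way that the \emph{number} of answer pairs (rather than just emptiness) encodes whether an orthogonal pair exists, and crucially, to encode one of the two vector sets in the graph database and the other in the query, so that a running time of $\bigO(|\DBD|^{2-\epsilon}f(|q|))$ translates into a genuinely subquadratic $\OV$-algorithm. Let $A = \{\vec{a}^1, \ldots, \vec{a}^n\}$ and $B = \{\vec{b}^1, \ldots, \vec{b}^n\}$ be Boolean vectors of dimension $d$. The first step is to build, for each $i \in [n]$, a "gadget path" in $\DBD$ of length $d$ (over an alphabet like $\{\tzero, \tone\}$) that reads off $\vec{a}^i$: at position $j$ there is a $\tzero$-labelled arc always, and a $\tone$-labelled arc only if $\vec{a}^i[j] = 0$ — exactly as in the proof of Theorem~\ref{CCCheckLowerOVBoundsTheorem}. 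On the query side, I would set $q = w_1 \altop w_2 \altop \cdots \altop w_n$ where $w_k = \vec{b}^k[1]\vec{b}^k[2]\cdots\vec{b}^k[d]$, so that a path through the $i$-th gadget is labelled by some $w_k$ iff $\vec{a}^i$ and $\vec{b}^k$ are orthogonal.

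The key difference from the $\checkProb$ reduction is that here I cannot collapse all gadgets to a common source/target, because then $q(\DBD)$ would just be $\{(s,t)\}$ or $\emptyset$ and counting gives nothing beyond the Boolean answer — and more importantly, I need $|\DBD|$ (not $|q|$) to be the parameter that the subquadratic assumption attacks. So instead I keep the $n$ gadgets with \emph{distinct} start nodes $v_{1,0}, \ldots, v_{n,0}$ and distinct end nodes $v_{1,d}, \ldots, v_{n,d}$ (no $s, t$, no $\#$-arcs). Then $q(\DBD)$ consists exactly of the pairs $(v_{i,0}, v_{i,d})$ such that the $i$-th gadget admits a path labelled by some $w_k \in L(q)$ — equivalently, such that $\vec{a}^i$ is orthogonal to \emph{at least one} $\vec{b}^k$. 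Hence $|q(\DBD)|$ equals the number of indices $i$ for which $\vec{a}^i$ participates in an orthogonal pair; in particular $|q(\DBD)| > 0$ iff $(A,B)$ is a positive $\OV$-instance. Here $|\DBD| = \bigO(nd)$, $|q| = \bigO(nd)$ as well — which is a problem, since I want $|q|$ to be small. The fix, I expect, is the same split-and-batch trick used in Theorem~\ref{CCCheckLowerOVBoundsTheorem}: partition $B$ into $\lceil n^{\epsilon'}\rceil$ blocks $B_1, \ldots, B_{\lceil n^{\epsilon'}\rceil}$ of size $\lceil n^{1-\epsilon'}\rceil$ each, build for each block the query $q^{(\ell)} = \bigvee_{\vec{b}^k \in B_\ell} w_k$ of size $\bigO(n^{1-\epsilon'}d)$, keep the \emph{full} database $\DBD$ of size $\bigO(nd)$, and run $\countProb$ on $(\DBD, q^{(\ell)})$ for each $\ell$ — reporting "orthogonal pair exists" iff some run returns a nonzero count.

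The running-time bookkeeping is then routine: each of the $\lceil n^{\epsilon'}\rceil$ calls costs $\bigO(|\DBD|^{2-\epsilon}f(|q^{(\ell)}|)) = \bigO((nd)^{2-\epsilon}\poly(n^{1-\epsilon'}d)) = \bigO(n^{2-\epsilon}\,n^{c(1-\epsilon')}\poly(d))$ for the polynomial degree $c$ of $f$; choosing $\epsilon'$ small enough (in terms of $\epsilon$ and $c$) makes the total $\bigO(\lceil n^{\epsilon'}\rceil \cdot n^{2-\epsilon+c(1-\epsilon')}\poly(d))$ genuinely $\bigO(n^{2-\delta}\poly(d))$ for some $\delta > 0$, contradicting the $\OV$-hypothesis. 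I also need to handle the degenerate case where a block $B_\ell$ is empty or where some $w_k = \varepsilon$ (which happens only if $d = 0$, a trivial instance); and I should double-check that the regular expression $q^{(\ell)}$ really has size linear in the total number of bits of its block, and that building $\DBD$ and all the $q^{(\ell)}$ takes $\bigO(nd)$ plus the per-block cost. The main obstacle, as I see it, is not the reduction itself — it is essentially the $\checkProb$ reduction with the source/target gadget removed — but making sure that removing $s,t$ and the $\#$-symbols does not let \emph{spurious} pairs into $q(\DBD)$: I must verify that no path between two nodes of different gadgets, or a partial path within a gadget, can be labelled by a full $w_k$, which follows because every gadget path labelled by a length-$d$ word must start at some $v_{i,0}$ and end at the corresponding $v_{i,d}$, the gadgets being vertex-disjoint chains. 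Once that structural claim is nailed down, the rest is the standard batching calculation already rehearsed in Theorem~\ref{CCCheckLowerOVBoundsTheorem}.
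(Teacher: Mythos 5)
Your reduction has a genuine gap, and it is in exactly the place you gloss over as "routine bookkeeping." The theorem is stated as a \emph{data complexity} bound: the hypothesis is that $\countProb$ runs in time $\bigO(|\DBD|^{2-\epsilon}f(|q|))$ for \emph{some function} $f$, with no constraint whatsoever on $f$'s growth. Your batched queries $q^{(\ell)}$ have size $\Theta(n^{1-\epsilon'}d)$, which grows with $n$; consequently $f(|q^{(\ell)}|)$ may be super-polynomial or even exponential in $n$, and the step where you write $f(|q^{(\ell)}|) = \poly(n^{1-\epsilon'}d)$ "for the polynomial degree $c$ of $f$" is simply unjustified. There is no batching parameter $\epsilon'$ that repairs this, because any split that leaves $B$ encoded in the query leaves $|q|$ growing with $n$. (This is different from Theorem~\ref{CCCheckLowerOVBoundsTheorem}, whose hypothesis explicitly bounds the $|q|$-dependence by a polynomial $|q|^2$ or $|q|^{2-\epsilon}$; there the batching calculation is legitimate. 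Here $f$ is arbitrary.)

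There is also a structural sanity check that should have raised a flag: your reduction decides the $\OV$ instance purely from whether $|q(\DBD)| > 0$, i.e.\ from the emptiness of $q(\DBD)$. If such a reduction gave a data-complexity lower bound, it would give the same lower bound for $\booleProb$ — contradicting Theorem~\ref{CCCheckUpperBoundsTheorem}, which solves $\booleProb$ in time $\bigO(|\DBD||q|)$, linear in $|\DBD|$. Any valid reduction to $\countProb$ that beats this must exploit the actual \emph{number}, not just non-emptiness. This is exactly what the paper does: it forms Boolean matrices $A'$ (rows from $A$) and $B'$ (columns from $B$), builds the two-edge layered graph $\DBD_{A',B'}$ from Lemma~\ref{BMMLowerBoundEvalLemma} with the \emph{constant-size} query $q = \ta\ta$, and observes that $((i,0),(j,2)) \in q(\DBD_{A',B'})$ iff $\vec{a}_i$ and $\vec{b}_j$ are \emph{not} orthogonal, so $|q(\DBD_{A',B'})| = n^2$ iff there is no orthogonal pair. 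Since $|q|=2$ is constant, $f(|q|)$ is a constant and the running-time $\bigO(|\DBD|^{2-\epsilon}f(2)) = \bigO((nd)^{2-\epsilon}) = \bigO(n^{2-\epsilon}\poly(d))$ refutes the $\OV$-hypothesis directly — no batching, no assumption on $f$, and the answer genuinely lives in the count rather than in emptiness.
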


\begin{proof}
We assume that $\countProb$ can be solved in time $\bigO(|\DBD|^{2-\epsilon} f(|q|))$ for some function $f$ and $\epsilon > 0$. Let $A = \{\vec{a}_1, \vec{a}_2, \ldots, \vec{a}_n\}$ and $B = \{\vec{b}_1, \vec{b}_2, \ldots, \vec{b}_n\}$ be an $\OV$-instance, i.\,e., for every $i \in [n]$, $\vec{a}_i$ and $\vec{b}_i$ are $d$-dimensional Boolean vectors. Now let $A'$ be the Boolean matrix having rows $\vec{a}_1, \vec{a}_2, \ldots, \vec{a}_n$ and let $B'$ be the Boolean matrix having columns $\vec{b}_1, \vec{b}_2, \ldots, \vec{b}_n$. It can be easily seen that, for every $i, j \in [n]$, $(A' \times B')[i, j] = 0$ if and only if $\vec{a}_i$ and $\vec{b}_j$ are orthogonal. Moreover, $A'$ and $B'$ can be constructed in time $\bigO(nd)$.\par
Next, we construct $\DBD_{A', B'}$ as in the proof of Lemma~\ref{BMMLowerBoundEvalLemma}, and we note that $V_{\DBD_{A', B'}} = \bigO(n + d)$ and $E_{\DBD_{A', B'}} = \bigO(nd)$; in particular, we construct $\DBD_{A', B'}$ in time $\bigO(nd)$. Since, for every $i, j \in [n]$, $((i, 0), (j, 2)) \in q(\DBD_{A, B})$ if and only if $(A \times B)[i, j] = 1$, we know that, for every $i, j \in [n]$, $((i, 0), (j, 2)) \in q(\DBD_{A, B})$ if and only if $\vec{a}_i$ and $\vec{b}_j$ are not orthogonal. Furthermore, since $q(\DBD_{A, B}) \subseteq \{((i, 0), (j, 2)) \mid i, j \in [n]\}$, $|q(\DBD_{A, B})| = n^2$ if and only if there are no $\vec{a} \in A$ and $\vec{b} \in B$ that are orthogonal. Consequently, we can check whether there are no $\vec{a} \in A$ and $\vec{b} \in B$ that are orthogonal by computing $|q(\DBD_{A, B})|$, which, by assumption, can be done in time $\bigO(|\DBD_{A, B}|^{2-\epsilon} f(|q|)) = \bigO((nd)^{2-\epsilon} f(2)) = \bigO(n^{2-\epsilon} \poly(d))$.
\end{proof}

Since Theorem~\ref{OVLowerBoundCountTheorem} also excludes running time $\bigO((|V_{\DBD}| |\DBD|)^{1-\epsilon} f(|q|))$ for any function $f$ and $\epsilon > 0$ (without restriction to combinatorial algorithms), it also shows that, subject to the $\OV$-hypothesis, $\bigO(|V_{\DBD}||\DBD|)$ is a tight bound for the data complexity of $\countProb$.

\section{Bounds for the Enumeration of $\RPQ$s}\label{sec:enumBounds}

By using the PG-approach for enumeration, we can obtain the following upper bound.

\begin{thm}\label{crossProductEnumTheorem}
Sorted $\enumProb$ can be solved with $\bigO(|\DBD||q|)$ delay and $\bigO(1)$ updates.
\end{thm}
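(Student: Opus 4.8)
The plan is to build on the product graph $\DBtimesRE{\DBD}{q}$ from Section~\ref{sec:algoPrelim}, but instead of running a single BFS per source node $(u,p_0)$ (as in Theorem~\ref{crossProductUpperBoundEvalTheorem}), I would organize the computation so that the pairs of $q(\DBD)$ are emitted one at a time in lexicographic order with respect to $\preceq_{\DBD}$, without ever materializing the whole set. First I would note that, by Lemma~\ref{makeWellFormedLemma}, we may assume $\DBD$ is well-formed, so $V_{\DBD}=[n]$ and $\preceq_{\DBD}$ is the natural order on $[n]$; this costs only $\bigO(|\DBD|)$, which fits in the delay. The enumeration routine then proceeds in two nested loops: the outer loop iterates over the left component $u = 1, 2, \ldots, n$, and for each fixed $u$ the inner loop iterates over the right component $v = 1, 2, \ldots, n$, checking membership $(u,v)\in q(\DBD)$, which by Lemma~\ref{reduceEvalToSTTCLemma} amounts to checking reachability of $(v,p_f)$ from $(u,p_0)$ in $\DBtimesRE{\DBD}{q}$.

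The key step is to make this produce \emph{bounded} delay rather than just polynomial total time. For a fixed left component $u$, I would run a single BFS in $\DBtimesRE{\DBD}{q}$ from the node $(u,p_0)$; this takes time $\bigO(|\DBtimesRE{\DBD}{q}|) = \bigO(|\DBD||q|)$ and, by Lemma~\ref{reduceEvalToSTTCLemma}, the set $R_u = \{v \in V_{\DBD} \mid (v,p_f) \text{ reached}\}$ is exactly $\{v \mid (u,v)\in q(\DBD)\}$. Since $\DBD$ is well-formed, after the BFS I scan $v = 1,\ldots,n$ and emit $(u,v)$ for each $v \in R_u$ in increasing order; because the right components are produced in sorted order for a fixed $u$, and the left components $u$ are processed in increasing order, the overall enumeration is lexicographically sorted. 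The delay between two consecutive output pairs is dominated by: (a) at most one BFS, i.e. $\bigO(|\DBD||q|)$, incurred when we move from the last pair with left component $u$ to the first pair with left component $u' > u$ (we may have to run several BFS's for intermediate left components $u+1,\ldots,u'-1$ that contribute nothing, but each such $u''$ has $R_{u''}=\emptyset$ detected in $\bigO(|\DBD||q|)$ time, and — this needs a small argument — we can charge these "empty" BFS's appropriately, or alternatively precompute once in $\bigO(|\DBD||q|)$ preprocessing the set of left components that reach \emph{some} final node so that empty left components are skipped in $\bigO(1)$); and (b) the $\bigO(1)$ work per emitted pair and the scan of a bounded stretch of $[n]$ between two consecutive members of $R_u$, which is $\bigO(|V_{\DBD}|) = \bigO(|\DBD|)$. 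In all cases the delay is $\bigO(|\DBD||q|)$. Since the preprocessing is itself $\bigO(|\DBD||q|)$ (constructing $\DBtimesRE{\DBD}{q}$ via Lemma~\ref{crossProductSizeLemma} and making $\DBD$ well-formed), the preprocessing is absorbed by the delay, so no separate preprocessing bound is stated.

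For the update claim: since there is effectively no preprocessing that survives across calls to $\enum$ — the product graph and the well-formed copy can be rebuilt from scratch at the start of each enumeration within the delay budget — an $\update$ operation (inserting or deleting an arc, adding or deleting an isolated node) need only apply the corresponding constant-size modification to the stored adjacency-list representation of $\DBD$, which by our list model takes $\bigO(1)$ time (recall we can insert into a list in constant time, and deleting an isolated node or an arc whose position we are handed is also constant). The next invocation of $\enum$ then simply rebuilds $\DBtimesRE{\DBD}{q}$ from the current $\DBD$. The main obstacle I anticipate is the bookkeeping around "empty" left components: naively this could blow the delay up by a factor of $n$ if many consecutive left components contribute nothing; the clean fix is to compute, at the start of $\enum$, the set $U = \{u \mid R_u \neq \emptyset\}$ by a single reverse BFS in $(\DBtimesRE{\DBD}{q})^R$ from the nodes $\{(v,p_f)\mid v\in V_{\DBD}\}$ (using Lemma~\ref{reversalLemma}), all within $\bigO(|\DBD||q|)$, so that the outer loop only visits $u\in U$ and each visited $u$ yields at least one output pair, making the amortization argument unnecessary and the delay bound immediate.
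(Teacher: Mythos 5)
Your proposal is correct and matches the paper's proof essentially step for step: you make $\DBD$ well-formed via Lemma~\ref{makeWellFormedLemma}, build $\DBtimesRE{\DBD}{q}$, identify the ``productive'' left components $u$ by a reverse BFS (the paper does this with a single added sink node $v_{\mathsf{sink}}$ and BFS from it in $(\DBtimesRE{\DBD}{q})^R$, which is the same as your BFS from all $(v,p_f)$), then for each such $u$ run a forward BFS and emit $(u,v)$ by scanning $v=1,\dots,n$, and handle updates by rebuilding everything at the start of the next $\enum$ call within the delay budget. You also correctly flag and resolve the one subtlety (consecutive empty left components) exactly as the paper does, so there is nothing substantive to add.
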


\begin{proof}
Let $\DBD$ be a graph database over $\Sigma$, let $q$ be an $\RPQ$ over $\Sigma$ and let $\preceq$ be the order on $V_{\DBD}$. We assume that $V_{\DBD} = [n]$ with $1 \preceq 2 \preceq \ldots \preceq n$ (see Lemma~\ref{makeWellFormedLemma}). \medskip\\
\noindent\textbf{Preprocessing:} 
\begin{enumerate}
\item We compute $\DBtimesRE{\DBD}{q}$, which, according to Lemma~\ref{crossProductSizeLemma}, can be done in time $\bigO(|\DBD||q|)$, and we interpret $\DBtimesRE{\DBD}{q}$ as its underlying non-labelled graph. 
\item We construct two arrays $S$ and $T$ of size $|V_{\DBD}|$ such that, for every $i \in [n]$, $S[i] = (i, p_0)$ and $T[i] = (i, p_f)$. Note that this also means that pointers to the corresponding adjacency lists are stored along with the nodes in $S$ and $T$. Computing $S$ and $T$ can be done in time $\bigO(|V_{\DBD}||q|)$ as follows. We move through the list that represents $\DBtimesREV{\DBD}{q}$ and for every node $(i, p_0)$, we set $S[i] = (i, p_0)$, and for every node $(i, p_f)$, we set $T[i] = (i, p_f)$. 
\item We modify $\DBtimesRE{\DBD}{q}$ by adding a new node $v_{\mathsf{sink}}$ with an arc from each $(i, p_f)$ with $i \in [n]$. This can be done in time $\bigO(|V_{\DBD}|)$ as follows. We first add $v_{\mathsf{sink}}$ to the list that represents $\DBtimesREV{\DBD}{q}$, which requires constant time. Then, for every node $(i, p_f)$ of $T$, we add $v_{\mathsf{sink}}$ to the adjacency list for $(i, p_f)$. Again, this can be done in time $\bigO(|V_{\DBD}|)$.
\item We obtain $(\DBtimesRE{\DBD}{q})^R$, which, by Lemma~\ref{reversalLemma}, can be done in time $\bigO(|\DBtimesRE{\DBD}{q}|) = \bigO(|\DBD||q|)$. 
Then we initialise a Boolean array $S'$ of size $|V_{\DBD}|$ with entries $0$ everywhere.  We then perform a BFS in $(\DBtimesRE{\DBD}{q})^R$ starting in $v_{\mathsf{sink}}$ and, for every visited node $(i, p_0)$, we set $S'[i] = 1$. For every $i \in V_{\DBD}$, $S'[i] = 1$ if and only if $v_{\mathsf{sink}}$ and therefore some node from $T$ can be reached from $(i, p_0)$. This step can be done in time $\bigO(|\DBD||q|)$. 
\item We initialise a Boolean array $T'$ of size $|V_{\DBD}|$ with entries $0$ everywhere, which can be done in time $\bigO(|V_{\DBD}|)$. 
\end{enumerate}
\noindent\textbf{Enumeration:}
In the enumeration phase, we carry out the following procedure. 
\begin{itemize}
\item For every $i = 1, 2, \ldots, n$:
\begin{itemize}
\item If $S'[i] = 1$, then 
\begin{itemize}
\item perform a BFS in $(i, p_0)$ and for every $(j, p_f)$ that we visit, we set $T'[j] = 1$,
\item for every $j = 1, 2, \ldots, n$: produce $(i, j)$ as output if $T'[j] = 1$,
\item Set all entries of $T'$ to $0$.
\end{itemize}
\end{itemize}
\end{itemize}
\noindent\textbf{Correctness:} In the enumeration phase we perform a BFS from each $(i, p_0)$ that can reach at least one node $(j, p_f)$ (i.\,e., from each $(i, p_0)$ with $S'[i] = 1$) and after termination of this BFS, we output exactly the pairs $(i, j)$ for which $(j, p_f)$ is visited in this BFS. This directly shows that we correctly enumerate $q(\DBD)$ without repetitions. Furthermore, since we consider the start vertices for the BFSs in increasing order with respect to $\preceq$, and since after termination of a BFS started in $(i, p_0)$ we output the pairs $(i, j)$ in increasing order by the second element, the enumeration is sorted by lexicographic order. \par
Next, we estimate the delay of the enumeration. Each iteration of the main loop with $S'[i] = 1$ requires time $\bigO(|\DBtimesRE{\DBD}{q}|)$ for performing the BFS, $\bigO(|V_{\DBD}|)$ for producing the output and $\bigO(|V_{\DBD}|)$ for resetting $T'$. Thus, the total running-time is $\bigO(|\DBtimesRE{\DBD}{q}|) = \bigO(|\DBD||q|)$ in the case that $S'[i] = 1$. If, on the other hand, $S'[i] = 0$, then the iteration terminates after constant time. Moreover, since $S'[i] = 1$ implies that $i$ can reach $v_{\mathsf{sink}}$ and since $v_{\mathsf{sink}}$ can only be reached via some $(j, p_f)$, we produce at least one output in such an iteration. Consequently, the delay between two outputs is bounded by the total running time for one iteration of the main loop, which is $\bigO(|\DBD||q|)$. \medskip\\
\noindent\textbf{Updates:} If $\DBD$ is changed to $\DBD'$ by an update, then we can again perform the whole preprocessing (with respect to $\DBD'$) followed by the enumeration procedure. Technically, the preprocessing is done as a first step of the enumeration procedure (since our algorithmic framework does not allow to re-run the preprocessing after an update), which is possible since its running time of $\bigO(|\DBD'||q|)$ is completely subsumed by the time available for the first delay. 
\end{proof}

This enumeration algorithm is easy to implement and has some nice features like linear preprocessing (in data complexity), sorted enumeration and constant updates. Unfortunately, these features come more or less for free with the disappointing delay bound. Is the PG-approach therefore the wrong tool for $\RPQ$-enumeration? Or can we give evidence that linear delay is a barrier we cannot break? The rest of this work is devoted to this question.\par

Since running an algorithm for $\enumProb$ until we get the first element yields an algorithm for $\booleProb$ (with preprocessing plus delay as running time), and since running such an algorithm completely solves $\evalProb$ in time preprocessing plus $|q(\DBD)|$ times delay, we can inherit several lower bounds directly from Section~\ref{sec:nonEnumBounds}.

\begin{thm}\label{simpleEnumLowerBounds}
If, for some function $f$ and $\epsilon > 0$, $\enumProb$ can be solved with 
\begin{enumerate}
\item delay $\bigO(|\DBD|^{2-\epsilon} + |q|^{2})$ or $\bigO(|\DBD|^{2} + |q|^{2-\epsilon})$, then the $\OV$-hypothesis fails.
\item delay $\bigO(|V_{\DBD}|^{3-\epsilon} + |q|^{3-\epsilon})$, then the $\combBMMProb$-hypothesis fails. 
\item preprocessing $\bigO(|\DBD| f(|q|))$ and delay $\bigO(f(|q|))$, then the $\SBMMProb$-hypothesis fails.
\item preprocessing $\bigO(|V_{\DBD}|^{3 - \epsilon} f(|q|))$ and delay $\bigO(|V_{\DBD}|^{1 - \epsilon} f(|q|))$, then the $\combBMMProb$-hypothesis fails. 
\end{enumerate}
\end{thm}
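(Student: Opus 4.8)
The plan is to obtain all four items as immediate consequences of the two elementary reductions noted just above the statement, so that no new gadget construction is needed; recall the convention that stating a delay bound without a preprocessing bound means the preprocessing is absorbed by the delay. The two reductions are: first, an $\enumProb$ algorithm with preprocessing $p$ and delay $d$ yields a $\booleProb$ algorithm running in time $\bigO(p + d)$ --- run $\preprocess$, then invoke $\enum$, and within $d$ further steps we either see the first output pair (so $q(\DBD) \neq \emptyset$) or are told the enumeration is finished (so $q(\DBD) = \emptyset$); second, running $\preprocess$ and then $\enum$ to completion, storing all emitted pairs in an array, yields an $\evalProb$ algorithm running in time $\bigO(p + (|q(\DBD)| + 1)\, d)$, since the delay caps the time before the first pair, between consecutive pairs, and after the last pair, while collecting the pairs costs only $\bigO(|q(\DBD)|)$ extra. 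Crucially, both reductions are combinatorial whenever the underlying $\enumProb$ algorithm is.

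For item~1, the first reduction converts a delay-$\bigO(|\DBD|^{2-\epsilon} + |q|^{2})$ (resp.\ delay-$\bigO(|\DBD|^{2} + |q|^{2-\epsilon})$) enumeration algorithm into a $\booleProb$ algorithm of the same running time, contradicting Theorem~\ref{CCCheckLowerOVBoundsTheorem} and hence refuting the $\OV$-hypothesis. Item~2 is obtained identically via Theorem~\ref{CCCheckLowerTriangleBoundsTheorem}, and because the reduction stays combinatorial this refutes the $\combBMMProb$-hypothesis.

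For item~3, the second reduction yields an $\evalProb$ algorithm running in time $\bigO(|\DBD| f(|q|) + (|q(\DBD)|+1) f(|q|)) = \bigO((|\DBD| + |q(\DBD)|) f(|q|))$, contradicting Theorem~\ref{SBMMLowerBoundEvalTheorem} and so the $\SBMMProb$-hypothesis. For item~4, the second reduction gives an $\evalProb$ algorithm of running time $\bigO(|V_{\DBD}|^{3-\epsilon} f(|q|) + (|q(\DBD)|+1)|V_{\DBD}|^{1-\epsilon} f(|q|))$; using $|q(\DBD)| \leq |V_{\DBD}|^{2}$ this collapses to $\bigO(|V_{\DBD}|^{3-\epsilon} f(|q|))$, and then Lemma~\ref{BMMLowerBoundEvalLemma} with $\omega := 3 - \epsilon$ (we may assume $\epsilon \leq 1$, so $\omega \geq 2$) turns it into a combinatorial $\bigO(n^{3-\epsilon})$ algorithm for $\BMMProb$, refuting the $\combBMMProb$-hypothesis. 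There is no real obstacle here; the only point that needs attention is the accounting in item~4, namely that the accumulated-delay term $(|q(\DBD)|+1)|V_{\DBD}|^{1-\epsilon} f(|q|)$ is dominated by the preprocessing term $|V_{\DBD}|^{3-\epsilon} f(|q|)$ --- which holds precisely because $|q(\DBD)|$ is at most quadratic in $|V_{\DBD}|$ --- together with tracking combinatoriality so that items~2 and~4 indeed refute the $\combBMMProb$-hypothesis rather than a non-combinatorial one.
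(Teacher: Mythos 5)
Your proposal is correct and uses exactly the same approach as the paper: run the hypothetical enumeration algorithm to decide $\booleProb$ (for items~1 and~2, via Theorems~\ref{CCCheckLowerOVBoundsTheorem} and~\ref{CCCheckLowerTriangleBoundsTheorem}) or to solve $\evalProb$ in full (for items~3 and~4, via Theorem~\ref{SBMMLowerBoundEvalTheorem} and Lemma~\ref{BMMLowerBoundEvalLemma}). The extra bookkeeping you supply — tracking combinatoriality, bounding $|q(\DBD)|$ by $|V_{\DBD}|^2$, and restricting to $\epsilon\le 1$ so that $\omega\ge 2$ — is precisely what the paper leaves implicit, so nothing is missing.
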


\begin{proof}
We prove the four lower bounds of Theorem~\ref{simpleEnumLowerBounds} separately. 
\begin{enumerate}
\item Let us assume that $\enumProb$ can be solved with delay $\bigO(|\DBD|^{2-\epsilon} + |q|^{2})$ (or $\bigO(|\DBD|^{2} + |q|^{2-\epsilon})$) for some $\epsilon > 0$. Then we can solve $\booleProb$ on an instance $(\DBD, q)$ as follows. We run the assumed enumeration algorithm for $\enumProb$ on $(\DBD, q)$, but stop as soon as the first pair (if any) is produced. By assumption, this requires time $\bigO(|\DBD|^{2-\epsilon} + |q|^{2})$ (or $\bigO(|\DBD|^{2} + |q|^{2-\epsilon})$). Moreover, $(\DBD, q)$ is a positive $\booleProb$ instance if and only if the interrupted enumeration algorithm produces a pair. Thus, we can solve $\booleProb$ in time $\bigO(|\DBD|^{2-\epsilon} + |q|^{2})$ (or $\bigO(|\DBD|^{2} + |q|^{2-\epsilon})$) for some $\epsilon > 0$, which, according to Theorem~\ref{CCCheckLowerOVBoundsTheorem} means that the $\OV$-hypothesis fails.
\item Let us assume that $\enumProb$ can be solved with delay $\bigO(|V_{\DBD}|^{3-\epsilon} + |q|^{3-\epsilon})$ for some $\epsilon > 0$. Then we can solve $\booleProb$ on an instance $(\DBD, q)$ as follows. We run the assumed enumeration algorithm for $\enumProb$ on $(\DBD, q)$, but stop as soon as the first pair (if any) is produced. By assumption, this requires time $\bigO(|V_{\DBD}|^{3-\epsilon} + |q|^{3-\epsilon})$. Moreover, $(\DBD, q)$ is a positive $\booleProb$ instance if and only if the interrupted enumeration algorithm produces a pair. Thus, we can solve $\booleProb$ in time $\bigO(|V_{\DBD}|^{3-\epsilon} + |q|^{3-\epsilon})$ for some $\epsilon > 0$, which, according to Theorem~\ref{CCCheckLowerTriangleBoundsTheorem} means that the $\combBMMProb$-hypothesis fails.
\item We assume that $\enumProb$ can be solved with preprocessing $\bigO(|\DBD| f(|q|))$ and delay $\bigO(f(|q|))$ for some function $f$. By completely running the corresponding enumeration algorithm, we can solve $\evalProb$ in time $\bigO(|\DBD| f(|q|)) + \bigO(|q(\DBD)| f(|q|)) = \bigO((|q(\DBD)| + |\DBD|) f(|q|))$. According to Theorem~\ref{SBMMLowerBoundEvalTheorem} this means that $\SBMMProb$-hypothesis fails.
\item Assume that $\enumProb$ can be solved with preprocessing $\bigO(|V_{\DBD}|^{3 - \epsilon} f(|q|))$ and delay $\bigO(|V_{\DBD}|^{1 - \epsilon} f(|q|))$ for some function $f$ and $\epsilon > 0$. By completely running the corresponding enumeration algorithm, we can solve $\evalProb$ in time $\bigO(|V_{\DBD}|^{3 - \epsilon} f(|q|) + |q(\DBD)| |V_{\DBD}|^{1 - \epsilon} f(|q|)) = \bigO(|V_{\DBD}|^{3 - \epsilon} f(|q|)) $. According to Lemma~\ref{BMMLowerBoundEvalLemma}, this means that $\BMMProb$ can be solved in time $\bigO(n^{3 - \epsilon})$.\qedhere
\end{enumerate}
\end{proof}

The first two bounds only tell us that we might not be able to lower the delay $\bigO(|\DBD||q|)$ in terms of combined complexity. While this point of view was justified for the problems discussed in Section~\ref{sec:nonEnumBounds}, it does not say anything regarding delays of the form $\bigO(|\DBD|^{1-\epsilon}f(|q|))$. The third bound, conditional to the $\SBMMProb$-hypothesis, is much more relevant, since it suggests that the optimum of linear preprocessing and constant delay is not reachable. For combinatorial algorithms, the fourth bound at least answers our main question with $|\DBD|$ replaced by $|V_{\DBD}|$: with linear preprocessing, we cannot get below a delay of $\bigO(|V_{\DBD}|)$. \par
These lower bounds can be improved significantly, if we also want to handle updates (within some reasonable time bounds).

\begin{thm}\label{enumOMvTriangleLowerBoundTheorem}
If $\enumProb$ can be solved with 
\begin{enumerate}
\item arbitrary preprocessing, $\bigO(|V_{\DBD}|^{1-\epsilon}f(|q|))$ updates and $\bigO(|V_{\DBD}|^{1-\epsilon}f(|q|))$ delay for some function $f$ and $\epsilon > 0$, then the $\OMv$-hypothesis fails.\item $\bigO(|V_{\DBD}|^{3-\epsilon}f(|q|))$ preprocessing, $\bigO(|V_{\DBD}|^{2-\epsilon}f(|q|))$ updates and $\bigO(|V_{\DBD}|^{2-\epsilon}f(|q|))$ delay for some function $f$ and $\epsilon > 0$, then the $\combBMMProb$-hypothesis fails. 
\end{enumerate}
\end{thm}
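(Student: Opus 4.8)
The plan is to prove both parts via reductions that move an online problem ($\OMv$ for part~(1), $\TriProb$ for part~(2)) into the \emph{dynamic} setting of $\enumProb$. In both reductions a fixed Boolean matrix (the $\OMv$-matrix, resp.\ the adjacency matrix of the $\TriProb$-graph) is encoded once, into the static part of a strictly layered graph database that is handed to the enumeration algorithm; the ``online'' ingredients (the successive vectors, resp.\ the successive vertices to be probed) are then processed by short bursts of arc-updates, each followed by a single enumeration call whose output -- or the fact that it is empty -- encodes the desired answer. In all cases the query has constant length and the graph database has $\bigO(n)$ vertices, so every bound of the form $\bigO(|V_{\DBD}|^{c}f(|q|))$ from the hypothesis becomes $\bigO(n^{c})$.

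\emph{Part~(1), from $\OMv$.} Given the $n\times n$ matrix $M$, during preprocessing I build a three-layer graph database over $\{\ta\}$ with a node $r_k$ for each row, a node $c_j$ for each column, one sink node $t$, and a $\ta$-arc $r_k\to c_j$ whenever $M[k,j]=1$; the query is $q=\ta\ta$. The arcs $c_j\to t$ are the only dynamic ones. When the vector $\vec{v}^i$ arrives I perform $\bigO(n)$ updates so that exactly the arcs $c_j\to t$ with $\vec{v}^i[j]=1$ are present; since the graph is strictly layered, the only walks of length $2$ reaching $t$ are $r_k\to c_j\to t$, hence $q(\DBD)=\{(r_k,t)\mid (M\vec{v}^i)[k]=1\}$. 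Running the enumeration to completion thus recovers $M\vec{v}^i$, which I output before reading $\vec{v}^{i+1}$, as required by the online constraint. Each round costs $\bigO(n)$ updates of cost $\bigO(n^{1-\epsilon})$ each, plus an enumeration of at most $n$ pairs with delay $\bigO(n^{1-\epsilon})$, so $\bigO(n^{2-\epsilon})$ in total; summed over the $n$ rounds this is $\bigO(n^{3-\epsilon})$, and since the construction of the graph encoding $M$ is exactly the enumeration algorithm's (arbitrary) preprocessing phase, this contradicts the $\OMv$-hypothesis.

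\emph{Part~(2), from $\TriProb$.} By the combinatorial subcubic equivalence of $\TriProb$ and $\BMMProb$ (Section~\ref{sec:fineGrainedComplexity}) it is enough to obtain a combinatorial $\bigO(n^{3-\epsilon})$ algorithm for $\TriProb$; we may clearly assume $0<\epsilon<1$. Given $G=(V,E)$ with $|V|=n$, I build a strictly layered graph over $\{\ta\}$ consisting of a source $s$, a sink $t$, and four copies of $V$, with static $\ta$-arcs $(v)_0\to (x)_1$, $(x)_1\to (y)_2$, $(y)_2\to (v)_3$ whenever the corresponding pair lies in $E$; the only dynamic arcs are $s\to (v)_0$ and $(v)_3\to t$, and the query is $q=\ta^{5}$. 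To probe a vertex $v$, I toggle in $\bigO(1)$ updates so that $s\to (v)_0$ and $(v)_3\to t$ become the unique dynamic arcs present; then the only walks of length $5$ are of the form $s\to (v)_0\to (x)_1\to (y)_2\to (v)_3\to t$, so $q(\DBD)=\{(s,t)\}$ if $v$ lies on a triangle of $G$ and $q(\DBD)=\emptyset$ otherwise. Hence one enumeration call decides, within the delay bound, whether $v$ lies on a triangle, and iterating over all $v\in V$ decides $\TriProb$. Building the graph and running the preprocessing cost $\bigO(n^2)+\bigO(n^{3-\epsilon})=\bigO(n^{3-\epsilon})$, and each vertex costs $\bigO(1)$ updates of cost $\bigO(n^{2-\epsilon})$ plus one enumeration step of cost $\bigO(n^{2-\epsilon})$, for a total of $\bigO(n^{3-\epsilon})$; as the reduction is combinatorial, the $\combBMMProb$-hypothesis fails.

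\emph{Main obstacle.} The delicate point is specific to part~(2): since there a single update already costs $\bigO(n^{2-\epsilon})$ (super-linear), we cannot afford the $\Theta(n)$ arc changes that a naive per-vertex probe would require -- that would cost $\bigO(n^{3-\epsilon})$ per vertex and $\bigO(n^{4-\epsilon})$ overall, which is useless. The fix is the ``selector'' gadget above: all neighbourhood information is baked into the static arcs by inserting the auxiliary $(v)_0$ and $(v)_3$ layers, so that only the two gateway arcs $s\to (v)_0$ and $(v)_3\to t$ need to be re-routed to switch the probed vertex, while the enumeration still returns a set of size at most $1$. In part~(1) this issue does not arise, because updates are cheap there ($\bigO(n^{1-\epsilon})$) and so the $\Theta(n)$ updates per vector comfortably fit the $\bigO(n^{2-\epsilon})$ per-round budget. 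Two minor points: both statements are vacuous for $\epsilon=0$ (and interesting only for $\epsilon>0$), and charging the encoding of $M$ to the arbitrary preprocessing phase in part~(1) is consistent with the form of the $\OMv$-hypothesis in which the matrix may be preprocessed before the online phase.
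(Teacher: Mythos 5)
Part~(2) follows the paper's proof essentially verbatim: the same four-layer $\TriProb$ gadget with source $s$ and sink $t$, the same constant query $\ta^5$, and the same idea of re-routing only the two gateway arcs per probed vertex so that a single round costs $\bigO(1)$ updates; your ``selector gadget'' discussion is precisely the observation that makes the data-complexity bound go through.

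Part~(1), however, has a genuine gap. You encode the matrix $M$ into the static part of the graph database, hand that database to the enumeration algorithm's preprocessing routine, and justify charging this to the ``$\OMv$ preprocessing phase.'' But the theorem allows the enumeration algorithm \emph{arbitrary} preprocessing, while the preprocessing-tolerant form of the $\OMv$-hypothesis only tolerates \emph{polynomial}-time preprocessing of $M$; with truly arbitrary preprocessing, the online phase of $\OMv$ becomes trivial (precompute $M\vec{v}$ for all $\vec{v} \in \{0,1\}^n$), so a reduction whose total cost includes an unbounded preprocessing term cannot refute the $\OMv$-hypothesis. Concretely, if the enumeration algorithm spends, say, $2^n$ time in preprocessing, your reduction yields an $\OMv$ algorithm with $2^n$ total time, which is consistent with the hypothesis. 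The paper avoids this by running the preprocessing routine on the \emph{empty} graph database with the constant query $\ta\ta$ (so preprocessing costs $\bigO(1)$, independent of $n$), and then constructing all of $\DBD_{M,\vec{v}^1}$ from scratch via $\bigO(n^2)$ updates. Because updates cost only $\bigO(n^{1-\epsilon})$ each, this initial build costs $\bigO(n^{3-\epsilon})$, the $n$ online rounds cost $\bigO(n^{2-\epsilon})$ each, and the grand total -- now genuinely including the time spent reading and encoding $M$ -- is $\bigO(n^{3-\epsilon})$, which does contradict the $\OMv$-hypothesis as stated. The key observation you are missing is that sub-linear update time is precisely what lets you afford to build the entire $M$-encoding through updates, thereby neutralising an otherwise unbounded preprocessing phase.
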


\begin{proof}
We prove the two parts of Theorem~\ref{enumOMvTriangleLowerBoundTheorem} separately.
\begin{enumerate}
\item Assume that there is an algorithm that solves $\enumProb$ with preprocessing $t_p$, delay $t_d$ and update time $t_u$. We can then solve $\OMv$ on some instance $M$ and $\vec{v}^1, \vec{v}^2, \ldots, \vec{v}^n$ as follows. We initialise the empty graph database (over alphabet $\{\ta\}$) and the $\RPQ$ $q = \ta\ta$, and then we perform the preprocessing in constant time. Then, by performing $\bigO(n^2)$ updates, we transform the current graph database into the graph database $\DBD_{M, \vec{v}^1}$ over $\{\ta\}$ with nodes $\{u_i, v_i, w \mid 1 \leq i \leq n\}$ and arcs $\{(u_i, \ta, v_j) \mid M[i, j] = 1\} \cup \{(v_j, \ta, w) \mid \vec{v}_1[j] = 1\}$, which requires time $\bigO(n^2 t_u)$. Since $q(\DBD_{M, \vec{v}^1}) = \{(u_i, w) \mid (M \vec{v}^1)[i]=1\}$, we can now compute $M \vec{v}^1$ by enumerating $q(\DBD_{M, \vec{v}^1})$ in time $\bigO(n t_d)$. In the same way, for every $i$ with $2 \leq i \leq n$, we can obtain $M \vec{v}^i$ by enumerating $q(\DBD_{M, \vec{v}^i})$, where every $\DBD_{M, \vec{v}^{i+1}}$ can be obtained from $\DBD_{M, \vec{v}^{i}}$ by $n$ updates in time $\bigO(n t_u)$. Consequently, the whole procedure requires time $\bigO(n^2 t_u + n t_d + (n-1) (n t_u + n t_d))$, and the number of nodes always satisfies $|V_{\DBD}|\in\bigO(n)$. With the assumption that $t_d = \bigO(|V_{\DBD}|^{1-\epsilon}f(|q|))$ and $t_u = \bigO(|V_{\DBD}|^{1-\epsilon}f(|q|))$ for some $\epsilon > 0$ and some function $f$, this leads to a total running-time of $\bigO(n^2 |V_{\DBD}|^{1-\epsilon}f(|q|)) = \bigO(n^{3-\epsilon})$.
\item Assume that there is an algorithm that solves $\enumProb$ with preprocessing $t_p$, delay $t_d$ and update time $t_u$. We can then solve $\TriProb$ on some instance $G = (V, E)$ with $V = \{v_1, v_2, \ldots, v_n\}$ as follows (the result will then follow from the combinatorial subcubic equivalence of $\BMMProb$ and $\TriProb$ (see Section~\ref{sec:fineGrainedComplexity})). We construct the graph database $\DBD_{v_1}$ over $\{\ta\}$ with nodes $V_{\DBD_{v_1}} = \{s, t, (u, i) \mid 0 \leq i \leq 3, u \in V\}$ and arcs $E_{\DBD_{v_1}} = \{((u, i), \ta, (u', i+1)) \mid 0 \leq i \leq 2, (u, u') \in E\} \cup \{(s, \ta, (v_1, 0)), ((v_1, 3), \ta, t)\}$. Furthermore, we define the $\RPQ$ $q = \ta\ta\ta\ta\ta$. It can be easily seen that $q(\DBD_{v_1}) = \{(s, t)\}$ if there is a triangle that contains $v_1$, and $q(\DBD_{v_1}) = \emptyset$ otherwise. In particular, this means that enumerating $q(\DBD_{v_1})$ can be done in time $\bigO(t_d)$. \par
Analogously, by constructing $\DBD_{v_2}, \ldots, \DBD_{v_n}$ and enumerating $q(\DBD_{v_2}), \ldots, q(\DBD_{v_n})$, we can therefore check whether $G$ contains a triangle. For every $i$ with $2 \leq i \leq n$, we can obtain $\DBD_{v_{i+1}}$ from $\DBD_{v_{i}}$ by performing a constant number of updates (we just have to remove arcs $(s, \ta, (v_i, 0))$ and $((v_i, 3), \ta, t)$ and instead add arcs $(s, \ta,  (v_{i+1}, 0))$ and $((v_{i+1}, 3), \ta, t)$). Consequently, this whole procedure requires the preprocessing for a graph database with $\bigO(|V|)$ nodes and $\bigO(|E|)$ arcs, $|V|$ many delays and $|V|$ many updates, i.\,e., the total running time of this procedure is $\bigO(t_p + |V|(t_d + t_u))$. With the assumption that $t_p = \bigO(|V_{\DBD}|^{3-\epsilon}f(|q|))$ and $t_d = t_u = \bigO(|V_{\DBD}|^{2-\epsilon}f(|q|))$ for some $\epsilon > 0$ and some function $f$, this leads to a total running-time of $\bigO(|V|^{3-\epsilon}f(|q|) + |V||V_{\DBD}|^{2-\epsilon}f(|q|)) = \bigO(|V|^{3-\epsilon})$. \qedhere
\end{enumerate}
\end{proof}

The first bound rules out that we can get below $\bigO(|V_{\DBD}|)$ for delay \emph{and} update time, regardless of the preprocessing; the second one analogously rules out anything below $\bigO(|\DBD|)$ (for combinatorial algorithms and linear preprocessing). While all these lower bounds suggest that improving on the linear delay may be rather difficult, they leave the following case open.

\begin{qu}\label{openProblem}
Can $\enumProb$ be solved with $\bigO(|\DBD|)$ preprocessing and $\bigO(|V_{\DBD}|)$ delay in data complexity?
\end{qu}

What is a reasonable conjecture with respect to this question? A combinatorial algorithm that answers it in the positive does not seem to have any unlikely consequences. Indeed, it would just entail an $\bigO(|\DBD||V_{\DBD}|)$ algorithm for $\evalProb$ (which exactly fits to Theorem~\ref{crossProductUpperBoundEvalTheorem}), an $\bigO(|V||G|)$ algorithm for computing transitive closures and an $O(n^3)$ algorithm for multiplying Boolean $n \times n$ matrices, which is the state of the art for combinatorial algorithms (this is due to the obvious reductions from these problems to $\enumProb$). However, since Question~\ref{openProblem} is about enumeration, a positive answer also means that after $\bigO(|G|)$ preprocessing, we can enumerate the transitive closure of a graph with delay $\bigO(|V|)$, and that after $\bigO(n^2)$ preprocessing, all $1$-entries of the Boolean matrix multiplication can be enumerated with delay $\bigO(n)$. Are such enumeration algorithms unlikely, so that we should rather expect a negative answer to Question~\ref{openProblem}? In fact not, since for the simple $\RPQ$s $q = \ta^*$ or $q = \ta \ta$, which are sufficient to encode transitive closures and Boolean matrix multiplications, linear preprocessing and delay $\bigO(|V_{\DBD}|)$ is indeed possible, as we shall obtain as byproducts of the results in the next section.\par
We close this section by the following remark that points out some similarity (and differences) of reductions used in Sections~\ref{sec:nonEnumBounds}~and~\ref{sec:enumBounds}.

\begin{rem}
As already mentioned in Section~\ref{sec:booleTestWitness}, the reduction used for the combined complexity lower bounds of Theorems~\ref{CCCheckLowerOVBoundsTheorem} (and therefore Point $1$ of Theorem~\ref{simpleEnumLowerBounds}) is similar to the reduction from~\cite{BackursIndyk2016} used for proving conditional lower bounds of regular expression matching. Also note that~\cite{EquiEtAl2021} improves the bound of~\cite{BackursIndyk2016} by showing that sub-quadratic algorithms for finding a path in a graph labelled by a given pattern string are most likely impossible, even if we are allowed to build indexes in polynomial time. The paper~\cite{EquiEtAl2019} strengthens the bound of~\cite{BackursIndyk2016} by restricting the structure of the graphs. Moreover, the $\OV$-lower bound for $\countProb$ of Theorem~\ref{OVLowerBoundCountTheorem} is similiar to a lower bound on counting the results of certain conjunctive queries from~\cite{BerkholzEtAl2017}, and the $\OMv$-lower bound from Point~$1$ of Theorem~\ref{enumOMvTriangleLowerBoundTheorem} is similiar to a lower bound on enumerating certain conjunctive queries from~\cite{BerkholzEtAl2017}. \par
The quite simple observation that Boolean matrix multiplication can be expressed as querying a bipartite graph (with conjunctive queries) has also been used in~\cite{BaganEtAl2007} (see also~\cite[Section~6]{BerkholzEtAl2020}) and is also the base for the $\OMv$-lower bound of~\cite{BerkholzEtAl2017}. In the context of this paper, this connection has been used in the bounds of Theorems~\ref{BMMLowerBoundEvalTheorem},~\ref{SBMMLowerBoundEvalTheorem} and Points~$3$ and~$4$~of~\ref{simpleEnumLowerBounds}. \par
The obvious connection between evaluating (non-acyclic) conjunctive queries and finding triangles (or larger cliques) has already been observed in~\cite{BraultBaron2013} (see also~\cite[Section~6]{BerkholzEtAl2020}). However, the $\TriProb$-lower bounds of this paper are quite different, since $\RPQ$s cannot explicitly express the structure of a triangle (or a larger clique, for that matter) by using conjunction. Therefore, our respective lower bounds (Theorems~\ref{CCCheckLowerTriangleBoundsTheorem}, Point $2$ of Theorem~\ref{simpleEnumLowerBounds}, and Point $2$ of Theorem~\ref{enumOMvTriangleLowerBoundTheorem}) need to encode this aspect in a different way. With respect to Theorems~\ref{CCCheckLowerTriangleBoundsTheorem} and Point $2$ of Theorem~\ref{simpleEnumLowerBounds} this is done by using non-constant queries (which explains why the lower bounds are not for data complexity, in contrast to the case of conjunctive queries), and with respect to Point $2$ of Theorem~\ref{enumOMvTriangleLowerBoundTheorem}, which \emph{does} work for data complexity, it is done by using updates. Moreover, our $\TriProb$-lower bounds do not seem to extend to larger cliques like it is the case for conjunctive queries (see \cite[Section~6]{BerkholzEtAl2020}).\par
Finally, we wish to point out that although some of the reductions used in this paper are similar to reductions used in the context of conjunctive queries, due to the difference of $\RPQ$s and $\CQ$s, none of the lower bounds directly carry over. Furthermore, note that the lower bound reductions in~\cite{BaganEtAl2007,BerkholzEtAl2017} have been used for obtaining dichotomies and therefore have been stated in a much more general way.
\end{rem}

\section{Enumeration with Sub-Linear Delay}

We now explore three different approaches towards enumeration of $q(\DBD)$ with delay strictly better than $\bigO(|\DBD|)$ (in data complexity): 
\begin{enumerate}
\item allowing super-linear preprocessing,
\item enumerating a representative subset of $q(\DBD)$, and 
\item restricting the $\RPQ$s.
\end{enumerate}

\subsection{Super-Linear Preprocessing}

Regarding the first approach, we can improve the delay from $\bigO(|\DBD|)$ to $\bigO(|V_{\DBD}|)$ by increasing the linear preprocessing by a factor of $\avgdegree(\DBD) \log(\avgdegree(\DBD))$ (in data complexity). Recall that $\avgdegree(\DBD)$ is the average degree of $\DBD$.

\begin{thm}\label{crossProductEnumSuperlinearPreprocTheorem}
Sorted $\enumProb$ can be solved with  $\bigO(|q|^2 \log(\avgdegree(\DBD)|q|)\avgdegree(\DBD)|\DBD|)$ preprocessing and $\bigO(|V_{\DBD}|)$ delay.
\end{thm}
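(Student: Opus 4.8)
The plan is to reduce sorted $\enumProb$ to the following per-source task: process the nodes $u \in V_{\DBD}$ in the order $\preceq_{\DBD}$ and, for each $u$, enumerate $R(u) := \{v \mid (u,v)\in q(\DBD)\}$ in $\preceq_{\DBD}$-order with delay $\bigO(|V_{\DBD}|)$. Since each source $u$ with $R(u)=\emptyset$ can be skipped in constant time (we will know in advance whether $R(u)=\emptyset$) and each contributing source has $|R(u)|\ge 1$, a per-source delay bound of $\bigO(|V_{\DBD}|)$ gives the claimed overall delay. As the skeleton I would reuse the preprocessing of Theorem~\ref{crossProductEnumTheorem}: build $\DBtimesRE{\DBD}{q}$ (time $\bigO(|\DBD||q|)$ by Lemma~\ref{crossProductSizeLemma}), pass to its underlying graph, use $(\DBtimesRE{\DBD}{q})^R$ (Lemma~\ref{reversalLemma}) and a BFS from a sink attached to the $p_{\final}$-layer to discard every product node that cannot reach the $p_{\final}$-layer, keeping only the ``useful'' subgraph; then compute its strongly connected components and condense them, obtaining a DAG $D$. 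By Lemma~\ref{reduceEvalToSTTCLemma}, $(u,v)\in q(\DBD)$ iff the component of $(u,p_{\init})$ reaches in $D$ a component containing $(v,p_{\final})$. For each component $C$ I would record, as a $\preceq_{\DBD}$-sorted list, the local target set $\tau(C):=\{v\mid (v,p_{\final})\in C\}$; since the nodes $(v,p_{\final})$ are in bijection with a subset of $V_{\DBD}$, the lists $\tau(C)$ are pairwise disjoint and of total size $\le |V_{\DBD}|$ (the sorting is charged to preprocessing, which is one place a $\log$ appears).

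The super-linear part of the preprocessing — the source of the $\avgdegree(\DBD)$ and $|q|^2$ factors — is an auxiliary ``target-reachability index'' over $D$. Its purpose is that, from the component of $(u,p_{\init})$, one can enumerate exactly the reachable components with $\tau(C)\neq\emptyset$ such that consecutive such components are produced within $\bigO(|V_{\DBD}|)$ time, and, combined with the pre-sorted $\tau$-lists merged through a priority queue over $[\,|V_{\DBD}|\,]$ (whose $\bigO(\log|V_{\DBD}|)$ per-operation cost is absorbed into the $\bigO(|V_{\DBD}|)$ delay), in $\preceq_{\DBD}$-order. Concretely I would contract maximal target-free stretches of $D$ into a ``skip'' relation and tabulate, per component and per pair of automaton states, the local $\varepsilon$-closures and two-step transitions; here the constant out-degree of the automaton granted by Proposition~\ref{constructNFAProposition} is essential, as it keeps the out-degree of $\DBtimesRE{\DBD}{q}$ proportional to that of $\DBD$, so the work done per two-step database neighbourhood is $\bigO(|q|^2\log(\,\cdot\,))$ rather than larger, and summing over all such neighbourhoods of $\DBD$ gives the $\avgdegree(\DBD)|\DBD|$ weight. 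The enumeration phase then, for each source in $\preceq_{\DBD}$-order, runs a guided traversal of this index from the component of $(u,p_{\init})$ using a lazily-initialised visited array (Remark~\ref{lazyStuff}) so that the per-source reset is free, emitting $\bigcup_{C\text{ reachable}}\tau(C)$ in sorted order. Correctness and no-duplication are immediate from Lemma~\ref{reduceEvalToSTTCLemma} and the disjointness of the $\tau(C)$.

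The main obstacle is the delay analysis, and this is exactly what the super-linear preprocessing must underwrite. Two things have to be shown: first, that a source-traversal touches only $\bigO(|V_{\DBD}|\cdot\max\{1,|R(u)|\})$ pieces of the index — i.e.\ that we never have to wander through the $\Theta(|V_{\DBD}||q|)$ useless product material when $|R(u)|$ is small — which forces the skip relation to be set up so that every dequeued component carries a new target while each component still has only $\bigO(|V_{\DBD}|)$ relevant outgoing skip-edges; and second, that producing the output in $\preceq_{\DBD}$-order does not cost more than the traversal, which is why the $\tau$-lists are pre-sorted and why the heap cost must stay below $\bigO(|V_{\DBD}|)$. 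The delicate point, which I expect to be where the real work lies, is doing all this \emph{within} the budget $\bigO(|q|^2\log(\avgdegree(\DBD)|q|)\avgdegree(\DBD)|\DBD|)$: in particular, the contracted skip-graph could have $\Theta(|V_{\DBD}|^2)$ edges globally, so it must never be materialised in full — it has to be explored lazily and locally, one source at a time, with the per-node tabulations precomputed but the reachability itself recomputed on demand under the lazy-array trick. Getting these interfaces to line up (size of the tabulations, cost of a local exploration step, and the charging of the $\log$ and degree factors) is the technical heart of the proof.
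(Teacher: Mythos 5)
There is a genuine gap, and it is exactly the one you flag yourself. Your plan hinges on a ``skip'' structure over the SCC condensation that lets a per-source traversal touch only the target-bearing components, so that the delay between outputs is always paid for by the next output. But you give no construction achieving this --- you only describe properties it would need to have --- and you explicitly say that ``getting these interfaces to line up \dots is the technical heart of the proof.'' That is not a loose end one can defer: the skip relation can have quadratically many edges globally, reaching the next target-bearing component from a given source may require passing through many target-free ones, and producing targets in $\preceq$-order forces you to maintain an unbounded frontier; none of these is shown to fit within either the stated preprocessing budget or the $\bigO(|V_{\DBD}|)$ delay.

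The paper sidesteps the problem with a different and simpler idea. After building $\DBtimesRE{\DBD}{q}$, its SCC condensation $H$ in reverse topological order, and $H^R$, it propagates along $H^R$ a \emph{capped} sorted set per component (a balanced search tree kept at size at most $\avgdegree(\DBD)|q|$). After this propagation the component of $(i,p_0)$ holds the $\avgdegree(\DBD)|q|$ smallest targets $i'$ with $(i,i')\in q(\DBD)$. During enumeration, if this stored set is \emph{not} full then it is already the complete sorted answer for source $i$ and is emitted with constant delay. If it \emph{is} full, the algorithm makes no attempt to avoid useless work: it simply runs a full BFS from $(i,p_0)$ in $\DBtimesRE{\DBD}{q}$, and \emph{while} the BFS runs it hands out the $\avgdegree(\DBD)|q|$ precomputed targets one per $\bigO(|V_{\DBD}|)$ steps; this buys $|V_{\DBD}|\cdot\avgdegree(\DBD)|q| = \Theta(|\DBD||q|)$ time, which covers the BFS, after which the remaining targets (collected in a Boolean array indexed by $V_{\DBD}$) are output with $\bigO(|V_{\DBD}|)$ delay by a linear scan. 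The super-linear preprocessing is exactly the cost of propagating size-$\avgdegree(\DBD)|q|$ trees along the $\bigO(|\DBD||q|)$ arcs of $H^R$ at $\bigO(\log(\avgdegree(\DBD)|q|))$ per operation. So the missing idea in your proposal is this amortization: rather than engineering a traversal that never touches useless product material, precompute a prefix of each source's answer set that is large enough to \emph{pay for} a full BFS whenever the prefix overflows.
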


\begin{proof}
Let $\DBD$ be a graph database over $\Sigma$, let $q$ be an $\RPQ$ over $\Sigma$ and let $\preceq$ be the order on $V_{\DBD}$. We assume that $V_{\DBD} = [n]$ with $1 \preceq 2 \preceq \ldots \preceq n$ (see Lemma~\ref{makeWellFormedLemma}). Recall that $\avgdegree(\DBD)$ denotes the average degree of $\DBD$, which in particular means that $\bigO(|E_{\DBD}|)=\bigO(|V_{\DBD}|\avgdegree(\DBD))$.
\medskip\\
\noindent\textbf{Preprocessing:} 
\begin{enumerate}
\item\label{SLStepOne} Compute $\DBtimesRE{\DBD}{q}$ and interpret it as its underlying non-labelled graph. We also compute the number $\avgdegree(\DBD)$, which can be done by moving through the adjacency lists of $\DBD$ and counting all edges.
\item\label{SLStepTwo} Let $H = (V_H, E_H)$ be the DAG of $\DBtimesRE{\DBD}{q}$'s strongly connected components. Without loss of generality, we assume that connected components are numbered with $1, 2, \ldots, \ell$ for some $\ell \in \mathbb{N}$, i.\,e., $V_H = [\ell]$, and we further assume that $1, 2, \ldots, \ell$ is a reverse topological sorting of $H$. We can compute $H$ as follows.
\begin{enumerate}
\item\label{SLStepTwoA} By the algorithm of Tarjan~\cite{Tarjan72}, we first compute the strongly connected components $K_1, K_2, \ldots, K_{\ell} \subseteq \DBtimesREV{\DBD}{q}$ of $\DBtimesRE{\DBD}{q}$. Since the finishing times of the depth-first searches of this algorithm determine a reverse topological sorting of $H$, we can assume that $K_1, K_2, \ldots, K_\ell$ describes a reverse topological sorting. We represent the strongly connected components $K_1, K_2, \ldots, K_{\ell}$ by an array $L$ of size $|V_{\DBD}||q|$ addressable by the nodes of $\DBtimesREV{\DBD}{q}$, i.\,e., for every $(i, p) \in \DBtimesREV{\DBD}{q}$, we set $L[(i, p)] = j$ if and only if $(i, p) \in K_j$.
\item\label{SLStepTwoB} In order to construct the actual graph $H$, we first initialise an empty adjacency list for every node $j \in V_H = [\ell]$ and also an additional empty list $P_j$. Then we iterate through the list that represents $\DBtimesREV{\DBD}{q}$ and add every node $(i, p)$ that we encounter to the list $P_{L[(i,p)]}$. For every $j \in [\ell]$, $P_{j}$ now represents $K_j$ as a list. Next, we initialise an empty integer array $W$ of size $\ell$. Then, for each $j \in [\ell]$, we iterate through the list $P_j$ and for each node $(i,p)$ that we encounter, we do the following. For every $(i', p')$ in the adjacency list of $(i, p)$, if $W[L[i', p']] \neq j$, then we add $L[(i', p')]$ to the adjacency list of $j$ and set $W[L[i', p']] = j$ (the array $W$ makes sure that we do not add the same arc between strongly connected components more than once, i.\,e., for each individual $j \in [\ell]$ considered in the above procedure, $W$ stores for each strongly connected component $j' \in [\ell]$ whether we have already added the arc $(j, j')$). 
\end{enumerate}
\item\label{SLStepThree} Construct $H^R$.
\item\label{SLStepFour} Initialise for each node $j$ in $H^R$ an empty AVL tree $A_j$ (or any search tree that allows the operations insert, delete, lookup, delete-min and delete-max in worst-case time that is logarithmic in its size), and store pointers to the roots of these trees and a counter for their size in an array $A$ of size $\ell$, such that finding the tree associated to a specific node can be done in constant time.
For every $i \in [n]$, insert $i$ into the tree $A_{L[(i,p_f)]}$, and at each point where the size of $A_{L[(i,p_f)]}$ exceeds $\avgdegree(\DBD)|q|$ during these insertions, delete the node of largest index from $A_{L[(i,p_f)]}$. 
\item\label{SLStepFive} For  $j = 1, \dots, \ell$ go through the adjacency list of node $j$ in $H^R$. For each arc $(j,j')$ in this list, insert one by one all entries from $A_j$ into $A_{j'}$ (first lookup to avoid duplicates). Update also the size counter of the tree  $A_{j'}$ accordingly, and at each point where this size exceeds $\avgdegree(\DBD)|q|$ during these insertions, delete the node of largest index from $A_{j'}$. This ensures that $A_{j'}$ is kept at a size at most $\avgdegree(\DBD)|q|$.
\item\label{SLStepSix}  Initialise a Boolean array $S$ of size $\ell$ with entries $0$ everywhere. Create for each $j\in[\ell]$  a sorted array $Z_j$  of all entries in $A_{j}$ by successively extracting and deleting the smallest entry. If $Z_j$  is not empty, set $S[j]=1$.
\item\label{SLStepSeven} Initialise a Boolean array $T$ of size $|V_{\DBD}|$ with entries $0$ everywhere.
\end{enumerate}\medskip
\noindent\textbf{Enumeration:}
In the enumeration phase, we carry out the following procedure. 
\begin{itemize}
\item For every $i = 1, 2, \ldots, n$:
\begin{itemize}
\item If $S[L[(i,q_0)]] = 1$, then 
\begin{itemize}
\item If $Z_{L[(i,q_0)]}$ has size less than $\avgdegree(\DBD)|q|$, output all $(i,j)$ for all $j$ stored in $Z_i$ (in the sorted order).
\item If  $Z_{L[(i,q_0)]}$ has size  $\avgdegree(\DBD)|q|$ perform a BFS in $\DBtimesRE{\DBD}{q}$ starting at node $(i, p_0)$ and for every visited $(i', p_f)$, set $T[i'] = 1$. Output during this BFS, all pairs $(i,i')$ for all $i'$ stored in $Z_{L[(i,q_0)]}$ (in the sorted order) to keep the desired delay, i.e., start with $k = 1$ and whenever the delay of $|V_{\DBD}||q|$ has expired output $(i,Z_i[k])$ and set $k=k+1$.
\item for every $i' = Z_{L[(i,q_0)]}[k], \ldots, n$: produce $(i, i')$ as output if $T[i'] = 1$,
\item Set all entries of $T$ to $0$.
\end{itemize}
\end{itemize}
\end{itemize}
\noindent\textbf{Running time:} We first note that Steps~\ref{SLStepOne}~to~\ref{SLStepThree} can be done in time $\bigO(|\DBtimesRE{\DBD}{q}|) = \bigO(|\DBD||q|)$: Lemma~\ref{crossProductSizeLemma} ensures that $\DBtimesRE{\DBD}{q}$ can be computed in $\bigO(|\DBD||q|)$, Lemma~\ref{reversalLemma} ensures that $H^R$ can be computed in $\bigO(|H|) = \bigO(|\DBD||q|)$, and it can be easily seen that in order to construct $H$, we only have to move through $\DBtimesRE{\DBD}{q}$ a constant number of times; thus, time $\bigO(|\DBtimesRE{\DBD}{q}|) = \bigO(|\DBD||q|)$ is sufficient.\par
Since the search trees never exceed a size of $\avgdegree(\DBD)|q|$, each operation supported by the search trees can be carried out in time $\bigO(\log(\avgdegree(\DBD)|q|))$. In Step~\ref{SLStepFour}, we first initialise the search trees, which can be done in $\bigO(|V_H|) = \bigO(|\DBtimesREV{\DBD}{q}|)$. Then, adding $i$ to the search tree $A_{L[(i,p_f)]}$ for every $i \in [n]$ can be done in time $\bigO(|V_{\DBD}|\log(\avgdegree(\DBD)|q|))$. In Step~\ref{SLStepFive}, we have to perform $\bigO(\avgdegree(\DBD)|q|)$ search tree operations per arc of $H$, i.\,e., we need time $\bigO(|E_H|\avgdegree(\DBD)|q| \log(\avgdegree(\DBD)|q|)) = \bigO(|\DBtimesRE{\DBD}{q}|\avgdegree(\DBD)|q| \log(\avgdegree(\DBD)|q|))$.\par
Step~\ref{SLStepSix}, we have to copy, for every $j \in [\ell]$, the elements from $A_j$ to $Z_j$, which requires a total of $\ell \avgdegree(\DBD)|q|$ search tree operations, so time $\bigO(|\DBtimesRE{\DBD}{q}|\avgdegree(\DBD)|q| \log(\avgdegree(\DBD)|q|))$ is sufficient (note that the time needed for initialising and filling $S$ is clearly subsumed by this). Finally, Step~\ref{SLStepSeven} can obviously be carried out in time $\bigO(|V_{\DBD}|)$. Hence, the total running time for the preprocessing is $\bigO(|\DBtimesRE{\DBD}{q}|\avgdegree(\DBD)|q| \log(\avgdegree(\DBD)|q|)) = \bigO(|\DBD||q|^2\avgdegree(\DBD)\log(\avgdegree(\DBD)|q|))$. \medskip\\
\noindent\textbf{Correctness:} We first show that after the preprocessing terminates, for every $i \in [n]$, $Z_{L[(i, p_0)]}$ contains the first $\avgdegree(\DBD)|q|$ nodes $i'$ (``first'' in the sense of smallest index) such that the node $(i', p_f)$ is reachable from  $(i, p_0)$. In other words, $Z_{L[(i, p_0)]}$ contains the first $\avgdegree(\DBD)|q|$ elements of a lexicographically sorted enumeration of $\{(i, i') \mid i' \in [n], (i, i') \in q(\DBD)\}$. To this end, we will show if $i'$ is among the $\avgdegree(\DBD)|q|$ smallest reachable nodes from $(i, p_0)$, then $i'$ is at some point inserted into $A_{L[i,p_0]}$, which means that $i'$ is among the nodes that are copied to $Z_{L[(i, p_0)]}$ in Step~\ref{SLStepSix} (recall that we only remove largest elements from some $A_j$ if its size exceeds $\avgdegree(\DBD)|q|$).\par
If $(i, p_0)$ and $(i', p_f)$ are in the same strongly connected component in $\DBtimesRE{\DBD}{q}$, then $L[(i, p_0)]=L[(i', p_f)]$ which means that in Step~\ref{SLStepFour} the index $i'$ is inserted into $A_{L[(i', p_f)]}=A_{L[(i, p_0)]}$. Otherwise there is a path in $H$ from $L[(i, p_0)]$ to $L[(i', p_f)]$, and reversing this path yields a path from $L[(i', p_f)]$ to  $L[(i, p_0)]$ in $H^R$. Denote by $L[(i', p_f)]=i_1, i_2,\dots, i_x=L[(i, p_0)]$ the vertices on the  path from $L[(i', p_f)]$ to  $L[(i, p_0)]$ in $H^R$. By the reverse topological sorting on $H$, it follows that $i_1< i_2< \ldots < i_x$. In Step~\ref{SLStepFour}, $i'$ is put in $A_{i_1}=A_{L[(i', p_f)]}$, and iteratively in Step~\ref{SLStepFive}, $i'$ is copied from $A_{i_h}$ to $A_{i_{h+1}}$ for $1\leq h < x$, and therefore also into $A_{i_x} = A_{L[i,p_0]}$. \par
In the enumeration procedure, we go through \emph{phases} $i  = 1, 2, \ldots, n$ and we will next show that in phase $i$, we enumerate without duplicates exactly the pairs $(i, i') \in q(\DBD)$ sorted by their right element (this means that the enumeration procedure enumerates $q(\DBD)$ without duplicates and lexicographically sorted). \par
If the size of $Z_{L[(i, p_0)]}$ is less than $\avgdegree(\DBD)|q|$, then, by the considerations above, we know that $Z_{L[(i, p_0)]}$ contains \emph{all} $i'$ with $(i, i') \in q(\DBD)$. Thus, we enumerate all pairs $(i, i')$ with $i' \in Z_{L[(i, p_0)]}$ sorted by their right elements. On the other hand, if the size of $Z_{L[(i, p_0)]}$ is at least $\avgdegree(\DBD)|q|$, we perform a BFS starting in $(i, p_0)$. After termination of this BFS, we output exactly the pairs $(i, i')$ for which $(i', p_f)$ is visited in this BFS and that are not already contained in $Z_{L[(i, p_0)]}$. This shows that we correctly enumerate $q(\DBD)$ without duplicates. Furthermore, since we list again the pairs $(i, i')$ by increasing index~$i'$ (the first $\avgdegree(\DBD)|q|$ by the sorting provided by $Z_{L[(i, p_0)]}$, and the possibly further by the sorted array $T$), the enumeration is again sorted by the right elements of the pairs. \par
It only remains to estimate the delay of the enumeration procedure, and it is sufficient to do this for each phase $i$ separately. If $S[L[(i,q_0)] = 0$ then phase $i$ terminates in constant time, which also means that in case that $S[L[(i,q_0)] = 0$ for all phases $i$, we need at most time in $O(|V_{\DBD}|)$ to determine that the enumeration has terminated. If the size of $Z_{L[(i, p_0)]}$ is less than $\avgdegree(\DBD)|q|$, then we enumerate all pairs with constant delay. If, on the other hand, the size of $Z_{L[(i, p_0)]}$ is $\avgdegree(\DBD)|q|$, then this array contains enough solutions to pay for the BFS we run for $(i,p_0)$, in the sense that producing the output form $Z_{L[(i, p_0)]}$ with delay $|V_{\DBD}|$ yields time $|V_{\DBD}|\avgdegree(\DBD)|q|=|E_{\DBD}||q|$ which is enough to completely run the BFS. The possibly further solutions stored in $T$ are then produced with delay at most $|V_{\DBD}|$ (for running through this list to find the non-zero entries).
\end{proof}

Obviously, in the worst case we can have $\avgdegree(\DBD) = \Omega(|V_{\DBD}|)$ and then the preprocessing of the algorithm of Theorem~\ref{crossProductEnumSuperlinearPreprocTheorem} is $\Omega(|V_{\DBD}||\DBD|)$ (in data complexity) and therefore no improvement over just computing the complete set $q(\DBD)$ in time $\bigO(|V_{\DBD}||\DBD|)$. However, for graph databases with low average degree, we can decrease the delay significantly from $\bigO(|\DBD|)$ to $\bigO(|V_{\DBD}|)$ at the cost of a slightly super-linear preprocessing time. As another remark about Theorem~\ref{crossProductEnumSuperlinearPreprocTheorem}, we observe that the pre-computed information becomes worthless if $\DBD$ is updated. \par
Finally, we discuss a minor modification of the algorithm of Theorem~\ref{crossProductEnumSuperlinearPreprocTheorem}, which yields a slightly different result that is interesting in its own right (we state this result as a corollary after the remark). 

\begin{rem}\label{lazyStuff}
At the cost of additional space (in $O(|V_{\DBD}|^2)$) and giving up on sorted enumeration, we can use the lazy array initialisation technique (see, e.\,g., the textbook~\cite{MoretShapiroBook}) in order to reduce the preprocessing time in Theorem~\ref{crossProductEnumSuperlinearPreprocTheorem} by a factor $\log(\avgdegree(\DBD)|q|)$ to $\bigO(|q|^2 \avgdegree(\DBD)|\DBD|)$.
This can be done by replacing each AVL-tree $A_j$ in step~\ref{SLStepFour} by an array $A_j$ (with lazy initialization) of size $|V_{\DBD}|$, and completely dropping step~\ref{SLStepSix}. Instead of the arrays $Z_j$, we use in the enumeration phase the unsorted lists $\widehat{A}_j$ of the elements stored in the arrays~$A_j$. In step~\ref{SLStepFour} and~\ref{SLStepFive}, we simply stop filling the arrays $A_j$ as soon as their counter exceeds $\avgdegree(\DBD)|q|$.
This reduces the running time for these steps to $\bigO(|V_{\DBD}|)$, and $\bigO(|\DBtimesRE{\DBD}{q}|\avgdegree(\DBD)|q|)$, respectively. Note that the lazy initialization allows us to avoid duplicates in constant time (which before was achieved by looking up the elements in the AVL-trees which required time $\bigO(\log(\avgdegree(\DBD)|q|))$). In every phase $i$ of the enumeration procedure instead of a sorted list $Z_{L[(i, p_0)]}$ with the (at most) $\avgdegree(\DBD)|q|$ smallest nodes $i'$ with $(i, i') \in q(\DBD)$ we only have an unsorted list $\widehat{A}_{L[(i, p_0)]}$ of at most $\avgdegree(\DBD)|q|$ nodes $i'$ (not necessarily the smallest ones) with $(i, i') \in q(\DBD)$. In the case where less than $\avgdegree(\DBD)|q|$ elements are stored in $\widehat{A}_{L[(i, p_0)]}$, we can just produce all $(i, i')$ for all $i'$ stored in $\widehat{A}_{L[(i, p_0)]}$. On the other hand, if $\widehat{A}_{L[(i, p_0)]}$ stores more elements, then again we use the elements stored in $\widehat{A}_{L[(i, p_0)]}$ in order to pay for a BFS and store in the array~$T$ the new elements found by the BFS. However, since we do not produce the elements in a sorted way, we use $T$ to also track which elements have been produced already. Whenever we output $(i, i')$ with $i'$ from $\widehat{A}_{L[(i, p_0)]}$ during the run of the BFS for $i$, we set $T[i']=2$ (and never overwrite a 2-entry in $T$). In order to list the pairs produced by the BFS after its termination, we pass once through~$T$ and only produce  $(i, i')$ if $T[i']=1$.  As last technical detail, we change step~\ref{SLStepSeven} to initialise $T$ as integer array (so we can actually have 2-entries in $T$).
\end{rem}

\begin{cor}\label{crossProductEnumSuperlinearPreprocCorollary}
$\enumProb$ can be solved with $\bigO(|q|^2 \avgdegree(\DBD)|\DBD|)$ preprocessing and $\bigO(|V_{\DBD}|)$ delay.
\end{cor}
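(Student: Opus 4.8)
The plan is to obtain this simply as the result of applying the modifications described in Remark~\ref{lazyStuff} to the algorithm underlying Theorem~\ref{crossProductEnumSuperlinearPreprocTheorem}, and to check that the running-time analysis given there goes through with the promised bounds. First I would restate the algorithm of Theorem~\ref{crossProductEnumSuperlinearPreprocTheorem} and then implement the three changes spelled out in the remark: (i) replace each AVL tree $A_j$ of Step~\ref{SLStepFour} by a lazily initialised array $A_j$ of size $|V_{\DBD}|$ together with an associated list $\widehat{A}_j$ of its stored elements and a size counter, and stop inserting into $A_j$ once its counter reaches $\avgdegree(\DBD)|q|$; (ii) delete Step~\ref{SLStepSix} entirely (we never need the sorted arrays $Z_j$); and (iii) turn $T$ in Step~\ref{SLStepSeven} into an integer array, so that during a BFS started at $(i, p_0)$ we can set $T[i'] = 2$ for those $i'$ already produced from $\widehat{A}_{L[(i,p_0)]}$ and only emit $(i, i')$ with $T[i'] = 1$ when we finally pass through $T$.

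Next I would verify the preprocessing bound. Steps~\ref{SLStepOne}--\ref{SLStepThree} are unchanged and still cost $\bigO(|\DBtimesRE{\DBD}{q}|) = \bigO(|\DBD||q|)$. The lazy array initialisation (see~\cite{MoretShapiroBook}) lets each array $A_j$ be allocated and each insert, membership test and deletion be performed in constant time, so Step~\ref{SLStepFour} drops to $\bigO(|V_{\DBD}|)$, and Step~\ref{SLStepFive}, which performs $\bigO(\avgdegree(\DBD)|q|)$ array operations per arc of $H$, drops to $\bigO(|E_H|\avgdegree(\DBD)|q|) = \bigO(|\DBtimesRE{\DBD}{q}|\avgdegree(\DBD)|q|) = \bigO(|\DBD||q|^2\avgdegree(\DBD))$; Step~\ref{SLStepSix} is gone and Step~\ref{SLStepSeven} is still $\bigO(|V_{\DBD}|)$. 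Summing, the preprocessing is $\bigO(|q|^2 \avgdegree(\DBD)|\DBD|)$, i.e. exactly a $\log(\avgdegree(\DBD)|q|)$ factor cheaper than in Theorem~\ref{crossProductEnumSuperlinearPreprocTheorem}.

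For correctness and delay I would argue, as in the proof of Theorem~\ref{crossProductEnumSuperlinearPreprocTheorem}, that after preprocessing $\widehat{A}_{L[(i,p_0)]}$ contains a set of (at most $\avgdegree(\DBD)|q|$) nodes $i'$ with $(i,i') \in q(\DBD)$, and that it contains \emph{all} such $i'$ whenever its size is below the threshold $\avgdegree(\DBD)|q|$; the only change is that this set need no longer consist of the smallest such $i'$, so the enumeration is no longer sorted. In a phase $i$ with $S[L[(i,p_0)]]=1$: if $\widehat{A}_{L[(i,p_0)]}$ is short we output all its pairs with constant delay; if it has the full $\avgdegree(\DBD)|q|$ elements we run a BFS from $(i,p_0)$, outputting the (at least $\avgdegree(\DBD)|q|$) pairs from $\widehat{A}_{L[(i,p_0)]}$ spaced so each waits time $\bigO(|V_{\DBD}|)$, which buys the $\bigO(|V_{\DBD}||q|\avgdegree(\DBD)) = \bigO(|E_{\DBD}||q|) = \bigO(|\DBtimesRE{\DBD}{q}|)$ time for the BFS, and finally scan $T$ once in $\bigO(|V_{\DBD}|)$ to emit the newly discovered pairs. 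The main obstacle here is purely the bookkeeping for avoiding duplicates without sortedness: one must make sure that a pair $(i,i')$ found both in $\widehat{A}_{L[(i,p_0)]}$ and by the BFS is output exactly once, which is handled by never overwriting a $2$-entry in $T$ and only emitting $1$-entries on the final pass; everything else is a routine recombination of the estimates already established in Theorem~\ref{crossProductEnumSuperlinearPreprocTheorem}.
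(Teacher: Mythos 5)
Your proposal is correct and takes exactly the same route as the paper: the corollary is stated as a direct consequence of applying the lazy-array modifications of Remark~\ref{lazyStuff} to the algorithm of Theorem~\ref{crossProductEnumSuperlinearPreprocTheorem}, and your re-derivation of the per-step running times and the delay argument matches the paper's intended proof. The one point worth flagging (to yourself, not as an error, since the paper's remark is equally terse) is that the invariant ``$\widehat{A}_{L[(i,p_0)]}$ contains \emph{all} $i'$ with $(i,i')\in q(\DBD)$ whenever it is below threshold'' does not follow verbatim from the ``smallest-elements'' argument in Theorem~\ref{crossProductEnumSuperlinearPreprocTheorem}; it needs the separate observation that if $A_j$ never fills up, then every $A_{j''}$ pushed into it was contained in $A_j$ and hence also not full, so the induction along the reverse topological order still goes through.
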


\subsection{Enumeration of Representative Subsets}

Evaluating an $\RPQ$ $q$ on a graph database $\DBD$ aims to find for each node $u$ \emph{all} its $q$-successors, i.\,e., nodes reachable by a $q$-path. It is therefore a natural restriction to ask for only \emph{at least one} (if any) such successor. Likewise, we could also ask for at least one (if any) $q$-predecessor for every node. More precisely, instead of the whole set $q(\DBD)$, the task is to enumerate a \emph{$q(\DBD)$-approximation}, which is a set $A \subseteq q(\DBD)$ such that, for every $u, v \in V_{\DBD}$, if $(u, v) \in q(\DBD)$, then also $(u, v'), (u', v) \in A$ for some $u', v' \in V_{\DBD}$. Such a set is representative for $q(\DBD)$, since it contains for \emph{every} node the information, whether it is involved as a source and whether it is involved as a target in some reachable pair from $q(\DBD)$. The problem of enumerating any $q(\DBD)$-approximation for given $\DBD$ and $q$ will be denoted by $\approxEnumProb$. 

\begin{lem}\label{computeAppLemma}
Given a graph database $\DBD$ and an $\RPQ$ $q$, we can compute a $q(\DBD)$-approximation in time $\bigO(|\DBD||q|)$.
\end{lem}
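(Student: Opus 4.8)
The plan is to reuse the witness-finding idea from the proof of Theorem~\ref{CCCheckUpperBoundsTheorem}, but to apply it twice: once on $\DBtimesRE{\DBD}{q}$ and once on its reversal $(\DBtimesRE{\DBD}{q})^R$. First I would construct $\DBtimesRE{\DBD}{q}$ (interpreted as its underlying unlabelled graph), which by Lemma~\ref{crossProductSizeLemma} costs $\bigO(|\DBD||q|)$, and also compute $(\DBtimesRE{\DBD}{q})^R$ in time $\bigO(|\DBD||q|)$ using Lemma~\ref{reversalLemma}.

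For the first pass, add a fresh node $v_{\mathsf{source}}$ with an arc to every $(u, p_{\init})$, $u \in V_{\DBD}$, and perform the same kind of labelled BFS from $v_{\mathsf{source}}$ as in the proof of Theorem~\ref{CCCheckUpperBoundsTheorem}: initialise $S[(u, p_{\init})] = u$ for all $u \in V_{\DBD}$ (and $S[\cdot] = 0$ elsewhere), and whenever the BFS traverses an arc from $(u, p)$ to $(u', p')$, set $S[(u', p')] = S[(u, p)]$. As argued there, after this BFS we have $S[(v, p_f)] = u' \neq 0$ if and only if $(v, p_f)$ is reachable from $(u', p_{\init})$, i.e.\ if and only if $(u', v) \in q(\DBD)$ by Lemma~\ref{reduceEvalToSTTCLemma}; so I collect into $A$ the pair $(S[(v, p_f)], v)$ for every $v \in V_{\DBD}$ with $S[(v, p_f)] \neq 0$. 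For the second pass, I symmetrically add a fresh node to $(\DBtimesRE{\DBD}{q})^R$ with an arc to every $(v, p_f)$, run the analogous labelled BFS with labels initialised by $R[(v, p_f)] = v$, and collect into $A$ the pair $(u, R[(u, p_{\init})])$ for every $u \in V_{\DBD}$ with $R[(u, p_{\init})] \neq 0$; here $R[(u, p_{\init})] = v' \neq 0$ means that $(v', p_f)$ is reachable from $(u, p_{\init})$ in $\DBtimesRE{\DBD}{q}$, i.e.\ $(u, v') \in q(\DBD)$.

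Correctness is then immediate: every pair put into $A$ lies in $q(\DBD)$, and if $(u, v) \in q(\DBD)$ then $u$ has at least one $q$-successor, so the second pass places some $(u, v') \in A$, while $v$ has at least one $q$-predecessor, so the first pass places some $(u', v) \in A$; hence $A$ is a $q(\DBD)$-approximation. For the running time, the product graph and its reversal are built in $\bigO(|\DBD||q|)$, and each of the two labelled BFS runs — including the array initialisations, the propagation of labels, and the final scan over the $\bigO(|V_{\DBD}|)$ nodes of the form $(v, p_f)$, resp.\ $(u, p_{\init})$ — takes $\bigO(|\DBtimesRE{\DBD}{q}|) = \bigO(|\DBD||q|)$; the total is therefore $\bigO(|\DBD||q|)$. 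There is no genuine obstacle here: the only point to be careful about is that a $q(\DBD)$-approximation must witness every node both in its role as a source and in its role as a target of a $q$-pair, which is precisely why a single BFS on $\DBtimesRE{\DBD}{q}$ does not suffice and we also need the pass on the reversed product graph.
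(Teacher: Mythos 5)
Your proposal is correct and takes essentially the same approach as the paper's own proof: construct the product graph, run a label-propagating BFS forward from a fresh source attached to all $(u, p_{\init})$ and a label-propagating BFS on the reversed product graph from a fresh node attached to all $(v, p_f)$, and assemble $A$ from the two resulting arrays (which the paper calls $S'$ and $T'$). The bookkeeping, the correctness argument, and the $\bigO(|\DBD||q|)$ running-time analysis all match.
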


\begin{proof}
Let $\DBD$ be a graph database over $\Sigma$, let $q$ be an $\RPQ$ over $\Sigma$ and let $\preceq$ be the order on $V_{\DBD}$. We assume that $V_{\DBD} = [n]$ with $1 \preceq 2 \preceq \ldots \preceq n$ (see Lemma~\ref{makeWellFormedLemma}). \par
We compute $\DBtimesRE{\DBD}{q}$, which, according to Lemma~\ref{crossProductSizeLemma}, can be done in time $\bigO(|\DBD||q|)$, and we interpret $\DBtimesRE{\DBD}{q}$ as its underlying non-labelled graph. We construct two arrays $S$ and $T$ of size $|V_{\DBD}|$ such that, for every $i \in [n]$, $S[i] = (i, p_0)$ and $T[i] = (i, p_f)$. Note that this also means that pointers to the corresponding adjacency lists are stored along with the nodes in $S$ and $T$. Computing $S$ and $T$ can be done in time $\bigO(|V_{\DBD}||q|)$ as follows. We move through the list that represents $\DBtimesREV{\DBD}{q}$ and for every node $(i, p_0)$, we set $S[i] = (i, p_0)$, and for every node $(i, p_f)$, we set $T[i] = (i, p_f)$. \par
We modify $\DBtimesRE{\DBD}{q}$ by adding a new node $v_{\mathsf{source}}$ with an arc to each $(i, p_0)$ with $i \in [n]$, and a new node $v_{\mathsf{sink}}$ with an arc from each $(i, p_f)$ with $i \in [n]$. This can be done in time $\bigO(|V_{\DBD}|)$ as follows. We first add $v_{\mathsf{source}}$ and $v_{\mathsf{sink}}$ to the list that represents $\DBtimesREV{\DBD}{q}$, which requires constant time. Then we add all vertices from $S$ to the adjacency list for $v_{\mathsf{source}}$, which requires time $\bigO(|V_{\DBD}|)$. Finally, for every node $(i, p_f)$ in $T$, we add $v_{\mathsf{sink}}$ to the adjacency list for $(i, p_f)$. Again, this can be done in time $\bigO(|V_{\DBD}|)$. \par
Let $S'$ and $T'$ be arrays of size $|\DBtimesREV{\DBD}{q}|$ the entries of which can store values from $[n] \cup \{0\}$ and can be addressed by the nodes from $\DBtimesREV{\DBD}{q}$. We wish to fill these arrays such that they satisfy the following properties. For every $(u, p) \in \DBtimesREV{\DBD}{q}$, if $(u, p)$ is reachable from some node from $S$, then $S'[(u, p)] = i$ for some $i \in [n]$ such that $(u, p)$ is reachable from $(i, p_0)$, and if $(u, p)$ is not reachable from any node from $S$, then $S'[(u, p)] = 0$. Analogously, for every $(u, p) \in \DBtimesREV{\DBD}{q}$, if $(u, p)$ can reach some node from $T$, then $T'[(u, p)] = i$ for some $i \in [n]$ such that $(u, p)$ can reach $(i, p_f)$, and if $(u, p)$ cannot reach any node from $T$, then $T'[(u, p)] = 0$. \par
This means that $S'$ and $T'$ contain all the information we need to construct a $q(\DBD)$-approximation. More precisely, let 
\begin{equation*}
A = \{(i, T'[(i, p_0)]) \mid i \in [n], T'[(i, p_0)] \neq 0\} \cup \{(S'[(i, p_f)], i) \mid i \in [n], S'[(i, p_f)] \neq 0\}\,. 
\end{equation*}
It can be easily seen that $A \subseteq q(\DBD)$ and, for every $u, v \in V_{\DBD}$,
\[
(u, v) \in q(\DBD) \implies (\exists v' \in V_{\DBD}: (u, v') \in A) \wedge (\exists u' \in V_{\DBD}: (u', v) \in A)\,. 
\]
Thus, $A$ is a $q(\DBD)$-approximation. Moreover, provided that we have the arrays $S'$ and $T'$ at our disposal, we can compute $A$ in time $\bigO(|V_{\DBD}|)$. Consequently, in order to conclude the proof, it remains to explain how the arrays $S'$ and $T'$ can be computed in time $\bigO(|\DBD||q|)$.\par 
We assume that $S'$ and $T'$ are initialised with every entry storing $0$. For every $i \in [n]$, we set $S'[(i, p_0)] = i$ and $T'[(i, p_f)] = i$. This can be done in time $\bigO(|\DBtimesREV{\DBD}{q}|) = \bigO(|V_{\DBD}||q|)$. Next, we perform a BFS from $v_{\textsf{source}}$ and whenever we traverse an arc $((u, p), (u', p'))$, we set $S'[(u', p')] = S'[(u, p)]$. This can be done in time $\bigO(|\DBtimesRE{\DBD}{q}|) = \bigO(|\DBD||q|)$. Interpreting the initialisation that ensures $S'[(i, p_0)] = i$, for every $i \in [n]$, as the base of an induction and the assignments $S'[(u', p')] = S'[(u, p)]$ for traversed arcs $((u, p), (u', p'))$ as the step of the induction, we can directly conclude by induction that after termination of this BFS, $S'$ has the desired property. In order to compute $T'$, we first have to construct $(\DBtimesRE{\DBD}{q})^R$, which, according to Lemma~\ref{reversalLemma}, can be done in time $\bigO(|\DBtimesRE{\DBD}{q}|) = \bigO(|\DBD||q|)$. Then we perform a BFS from $v_{\textsf{sink}}$ and whenever we traverse an arc $((u, p), (u', p'))$, we set $T'[(u', p')] = T'[(u, p)]$. Hence, computing $S'$ and $T'$ can be done in time $\bigO(|\DBtimesRE{\DBD}{q}|) = \bigO(|\DBD||q|)$. 
\end{proof}

Lemma~\ref{computeAppLemma} directly implies the following result.

\begin{thm}\label{apprEnumTheorem}
$\approxEnumProb$ can be solved with $\bigO(|\DBD||q|)$ preprocessing and delay $\bigO(1)$.  
\end{thm}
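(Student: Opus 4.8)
The plan is to let Lemma~\ref{computeAppLemma} do all of the real work inside the $\preprocess$ routine, and then have $\enum$ do nothing but walk through the precomputed list.

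First, in $\preprocess$, I would run the algorithm of Lemma~\ref{computeAppLemma} on the input $(\DBD, q)$, which in time $\bigO(|\DBD||q|)$ computes a concrete $q(\DBD)$-approximation $A$. From the proof of that lemma, $A$ arises as the union of $\{(i, T'[(i, p_0)]) \mid i \in [n], T'[(i, p_0)] \neq 0\}$ and $\{(S'[(i, p_f)], i) \mid i \in [n], S'[(i, p_f)] \neq 0\}$, so $|A| \le 2|V_{\DBD}|$, and I would additionally materialise $A$ as an explicit list in $\bigO(|V_{\DBD}|)$ time --- a cost that is subsumed by the $\bigO(|\DBD||q|)$ already spent. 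The one subtlety is that the enumeration must be repetition-free, whereas this union can list the same pair twice, namely a pair $(a,b)$ with both $T'[(a,p_0)] = b$ and $S'[(b,p_f)] = a$. I would sidestep this while building the list: when iterating $i$ over $[n]$ to add the elements of the second set, I check in $\bigO(1)$ whether $T'[(S'[(i, p_f)], p_0)] = i$ and, if so, skip $(S'[(i, p_f)], i)$ since it was already inserted from the first set. The resulting list represents $A$ without duplicates and is still produced within the preprocessing budget.

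Then $\enum$ simply traverses this list, emitting each pair as it is visited and reporting that the enumeration is complete once the end of the list is reached. Advancing one list cell and outputting its pair takes $\bigO(1)$, and so does detecting end-of-list; hence the time before the first pair, between any two consecutive pairs, and after the last pair is $\bigO(1)$, i.e.\ the delay is $\bigO(1)$.

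Correctness is immediate from Lemma~\ref{computeAppLemma}: the list is precisely a repetition-free enumeration of some $q(\DBD)$-approximation. I do not expect any genuine obstacle here; the only points requiring a little care are the constant-time deduplication test described above and the observation that materialising $A$ as a list (together with these tests) costs only $\bigO(|V_{\DBD}|) = \bigO(|\DBD||q|)$ and therefore vanishes into the preprocessing bound.
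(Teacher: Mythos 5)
Your proof is correct and takes essentially the same route as the paper: compute a $q(\DBD)$-approximation via Lemma~\ref{computeAppLemma} in preprocessing, then emit the precomputed list with constant delay. The deduplication detail you flag is a fair observation, though it is already implicit in the lemma's claim to compute the \emph{set} $A$ (and hence a repetition-free representation of it).
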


\begin{proof}
Let $\DBD$ be a graph database over $\Sigma$ and let $q$ be an $\RPQ$ over $\Sigma$. In the preprocessing, we can compute a $q(\DBD)$-approximation $A$, which, according to Lemma~\ref{computeAppLemma}, can be done in time $\bigO(|\DBD||q|)$. In the enumeration, we simply enumerate the set $A$ with constant delay.
\end{proof}

Interestingly, it seems rather difficult to also support updates while keeping a low delay (the following bounds are due to the same reductions used for Theorem~\ref{enumOMvTriangleLowerBoundTheorem}, simply because these reductions produce instances for which all $q(\DBD)$-approximations are equal to $q(\DBD)$).

\begin{thm}\label{simpleApprLowerBounds}
If $\approxEnumProb$ can be solved with 
\begin{enumerate}
\item arbitrary preprocessing, $\bigO(|V_{\DBD}|^{1-\epsilon}f(|q|))$ updates and $\bigO(|V_{\DBD}|^{1-\epsilon}f(|q|))$ delay, then the $\OMv$-hypothesis fails.
\item $\bigO(|V_{\DBD}|^{3-\epsilon}f(|q|))$ preprocessing, $\bigO(|V_{\DBD}|^{2-\epsilon}f(|q|))$ updates and $\bigO(|V_{\DBD}|^{2-\epsilon}f(|q|))$ delay for some function $f$ and $\epsilon > 0$, then the $\combBMMProb$-hypothesis fails. 
\end{enumerate}
\end{thm}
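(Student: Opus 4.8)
The plan is to recycle the two reductions from the proof of Theorem~\ref{enumOMvTriangleLowerBoundTheorem} almost verbatim. The only genuinely new ingredient I would add is the observation that on every instance $(\DBD, q)$ produced by those reductions, $q(\DBD)$ is its own \emph{unique} $q(\DBD)$-approximation; hence any algorithm for $\approxEnumProb$ is forced to enumerate all of $q(\DBD)$ on these instances, and so it can be substituted for the $\enumProb$-algorithm in the earlier proof with no change to the running-time analysis.

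For Point~1, recall that the $\OMv$-reduction builds, for each vector $\vec{v}^i$, the graph database $\DBD_{M, \vec{v}^i}$ over $\{\ta\}$ together with $q = \ta\ta$, and that $q(\DBD_{M, \vec{v}^i}) = \{(u_j, w) \mid (M\vec{v}^i)[j] = 1\}$. Every pair in this set has the \emph{same} right element $w$ and pairwise distinct left elements $u_j$; therefore the only pair of $q(\DBD_{M, \vec{v}^i})$ with left component $u_j$ is $(u_j, w)$ itself. Consequently any $A \subseteq q(\DBD_{M, \vec{v}^i})$ that contains, for every $(u_j, w) \in q(\DBD_{M, \vec{v}^i})$, some pair with left component $u_j$, must already equal $q(\DBD_{M, \vec{v}^i})$ — so the unique $q(\DBD_{M, \vec{v}^i})$-approximation is $q(\DBD_{M, \vec{v}^i})$. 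I would then run the assumed $\approxEnumProb$-algorithm exactly as the $\enumProb$-algorithm was run in the proof of Theorem~\ref{enumOMvTriangleLowerBoundTheorem} (start from the empty database, reach $\DBD_{M, \vec{v}^1}$ by $\bigO(n^2)$ updates, and move from $\DBD_{M, \vec{v}^i}$ to $\DBD_{M, \vec{v}^{i+1}}$ by $n$ updates each), so that enumerating an approximation of $q(\DBD_{M, \vec{v}^i})$ yields $M\vec{v}^i$. The identical calculation gives a total time of $\bigO(n^2 t_u + n\, t_d + (n-1)(n\, t_u + n\, t_d))$ with $|V_{\DBD}| \in \bigO(n)$, which under $t_u = t_d = \bigO(|V_{\DBD}|^{1-\epsilon} f(|q|))$ is $\bigO(n^{3-\epsilon})$, contradicting the $\OMv$-hypothesis.

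For Point~2, the $\TriProb$-reduction builds, for each vertex $v_i$ of the input graph $G = (V, E)$, the database $\DBD_{v_i}$ over $\{\ta\}$ with $q = \ta\ta\ta\ta\ta$, where $q(\DBD_{v_i}) = \{(s, t)\}$ if $v_i$ lies on a triangle and $q(\DBD_{v_i}) = \emptyset$ otherwise. A singleton $\{(s, t)\}$ has only itself as an approximation (we need some $(s, v') \in A$ and some $(u', t) \in A$, and $(s, t)$ is the only candidate for either), and $\emptyset$ has only $\emptyset$; so once again every $q(\DBD_{v_i})$-approximation equals $q(\DBD_{v_i})$, and enumerating an approximation decides whether $v_i$ lies on a triangle. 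Substituting the $\approxEnumProb$-algorithm into the proof of Theorem~\ref{enumOMvTriangleLowerBoundTheorem} — preprocessing once on a database with $\bigO(|V|)$ nodes and $\bigO(|E|)$ arcs, then $|V|$ rounds of a constant number of updates followed by one enumeration call — gives total time $\bigO(t_p + |V|(t_d + t_u))$, which under $t_p = \bigO(|V_{\DBD}|^{3-\epsilon} f(|q|))$ and $t_u = t_d = \bigO(|V_{\DBD}|^{2-\epsilon} f(|q|))$ solves $\TriProb$ combinatorially in time $\bigO(|V|^{3-\epsilon})$, which by the combinatorial subcubic equivalence of $\TriProb$ and $\BMMProb$ refutes the $\combBMMProb$-hypothesis.

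I do not expect any real obstacle here, since the bulk of the work is a line-by-line copy of Theorem~\ref{enumOMvTriangleLowerBoundTheorem}. The one point to handle carefully is that an $\approxEnumProb$-algorithm is only obliged to output \emph{some} approximation, so the argument must verify that \emph{every} $q(\DBD)$-approximation coincides with $q(\DBD)$ on the instances at hand (not merely that one does) — which is precisely what the two uniqueness observations above establish.
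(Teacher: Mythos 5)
Your proposal is correct and matches the paper's proof essentially line for line: both reuse the two reductions from Theorem~\ref{enumOMvTriangleLowerBoundTheorem} and both hinge on the observation that on the instances they produce, every $q(\DBD)$-approximation coincides with $q(\DBD)$ itself (for the $\OMv$-reduction because $w$ is the only possible right component of any pair, and for the $\TriProb$-reduction because $q(\DBD_{v_i})$ is a singleton or empty). Nothing to add.
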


\begin{proof}
We prove the two lower bounds of Theorem~\ref{simpleApprLowerBounds} separately. 
\begin{enumerate}
\item We observe that exactly the same reduction used in the proof of Point $1$ of Theorem~\ref{enumOMvTriangleLowerBoundTheorem} works as well. Since we assume the same bound on the time required for updates, we can construct the graph database $\DBD_{M, \vec{v}^1}$ in the same way as in the proof of Point $1$ of Theorem~\ref{enumOMvTriangleLowerBoundTheorem}. Recall that $q(\DBD_{M, \vec{v}^1}) = \{(u_i, w) \mid (M \vec{v}^1)[i]=1\}$. However, since for every $i \in [n]$, either no node is reachable by a path labelled with $11$ or only the node $w$ is reachable by a path labelled with $11$, we know that $A \subseteq q(\DBD)$ is a $q(\DBD_{M, \vec{v}^1})$-approximation if and only if $A = q(\DBD_{M, \vec{v}^1})$. Thus, we can compute $M \vec{v}^1$ by enumerating a $q(\DBD_{M, \vec{v}^1})$-approximation in time $\bigO(n t_d)$. Repeating this step in the same way as done in the proof of Point $1$ of Theorem~\ref{enumOMvTriangleLowerBoundTheorem} leads to an algorithm solving $\OMv$ in time $\bigO(n^{3-\epsilon})$.
\item We observe that exactly the same reduction used in the proof of Point $2$ of Theorem~\ref{enumOMvTriangleLowerBoundTheorem} works as well. We construct the graph database $\DBD_{v_1}$ over $\{\ta\}$ and the $\RPQ$ $q = \ta\ta\ta\ta\ta$ in the same way. Recall that $q(\DBD_{v_1}) = \{(s, t)\}$ if there is a triangle that contains $v_1$, and $q(\DBD_{v_1}) = \emptyset$ otherwise. This also means that any $q(\DBD_{v_1})$-approximation $A$ is equal to $\{(s, t)\}$ if there is a triangle that contains $v_1$, and $A = \emptyset$ otherwise. Thus, the assumption that $\approxEnumProb$ can be solved with preprocessing $\bigO(|V_{\DBD}|^{3-\epsilon}f(|q|))$, $\bigO(|V_{\DBD}|^{2-\epsilon}f(|q|))$ updates and $\bigO(|V_{\DBD}|^{2-\epsilon}f(|q|))$ delay for some function $f$ and $\epsilon > 0$, leads to combinatorial subcubic algorithm for $\TriProb$ in the same way as in the proof of Point $2$ of Theorem~\ref{enumOMvTriangleLowerBoundTheorem}.\qedhere
\end{enumerate}
\end{proof}

\subsection{Restricted \texorpdfstring{$\RPQ$s}{\textsf{RPQ}s}}

As our third approach, we show that for restricted $\RPQ$, we can solve $\enumProb$ with delay much smaller than $\bigO(|\DBD|)$. We first need some definitions. For any class $Q \subseteq \RPQ$, we denote by $\enumProbClass{Q}$ the problem $\enumProb$ where the input $\RPQ$ is from~$Q$. Moreover, $\bigvee(Q)$ is the set of all $\RPQ$s of the form $(q_1 \altop \ldots \altop q_m)$ with $q_i \in Q$ for every $i \in [m]$. An $\RPQ$ $q$ over $\Sigma$ is a \emph{basic transitive $\RPQ$} (over $\Sigma$) if $q = (x_1 \altop \ldots \altop x_k)^*$ or $q = (x_1 \altop \ldots \altop x_k)^+$, where $x_1, \ldots, x_k \in \Sigma$; and $q$ is a \emph{short $\RPQ$} (over $\Sigma$) if $q = (x_1 \altop \ldots \altop x_k)$ or $q = (x_1 \altop \ldots \altop x_k) (y_1 \altop \ldots \altop y_{k'})$, where $x_1, \ldots, x_k, y_1, \ldots, y_{k'} \in \Sigma$. By $\BTRPQ$ and $\SRPQ$, we denote the class of basic transitive $\RPQ$ and the class of short $\RPQ$, respectively. \par
We show that for the class $\bigvee(\SRPQ \cup \BTRPQ)$ (which, e.\,g., contains $\RPQ$s of the form $q = (\ta \tb \altop \tc^* \altop \tb (\tc \altop \td) \altop (\ta \altop \tb \altop \td)^+)$), semi-sorted $\enumProb$ can be solved with linear preprocessing and delay $\bigO(\Delta(\DBD))$ in data complexity (recall \emph{semi-sorted} from Section~\ref{sec:mainDefs}). To this end, we first prove a general upper bound for semi-sorted $\enumProbClass{\bigvee(Q)}$ in terms of the $\enumProbClass{Q}$. 

\begin{lem}\label{alternationConstructionLemma}
Let $Q \subseteq \RPQ$ be such that semi-sorted $\enumProbClass{Q}$ can be solved with preprocessing $p(|\DBD|, |q|)$ and delay $d(|\DBD|, |q|)$, then semi-sorted $\enumProbClass{\bigvee(Q)}$ can be solved with preprocessing $\bigO((|q| p(|\DBD|, |q|)) + |V_{\DBD}|)$ and delay $\bigO(|q| d(|\DBD|, |q|))$.
\end{lem}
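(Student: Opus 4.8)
Since every $q \in \bigvee(Q)$ has the form $q = (q_1 \altop \ldots \altop q_m)$ with $q_i \in Q$, $m \leq |q|$ and $|q_i| \leq |q|$ for all $i$, and since $q(\DBD) = \bigcup_{i=1}^m q_i(\DBD)$, the natural plan is to run the given semi-sorted $\enumProbClass{Q}$-algorithm on each $q_i$ (over the same $\DBD$) as a coroutine, and to merge the $m$ resulting output streams into a single semi-sorted, duplicate-free enumeration of $q(\DBD)$.

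In the preprocessing I would first make $\DBD$ well-formed (Lemma~\ref{makeWellFormedLemma}, absorbed in $\bigO(|\DBD|)$), then run the $\enumProbClass{Q}$-preprocessing for each of the $m$ sub-queries, which takes $\bigO(m \cdot p(|\DBD|,|q|)) = \bigO(|q|\, p(|\DBD|,|q|))$; I would also allocate an integer array $B$ indexed by $V_{\DBD}$ (this is the only source of the additive $\bigO(|V_{\DBD}|)$), used as a \emph{time-stamped} membership array: during the treatment of a left component $u$, the entry $B[v]$ receives the value $u$ as soon as $(u,v)$ has been emitted, so that a later attempt to emit $(u,v)$ is recognised and discarded in $\bigO(1)$ and no per-$u$ re-initialisation of $B$ is ever needed. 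Finally I would invoke the enumeration routine of each coroutine once to fetch its first pair (or flag it as exhausted), folding the resulting $\bigO(m\cdot d(|\DBD|,|q|))$ into the first delay.

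The enumeration phase then merges the streams by left component. Repeatedly: scan the (at most $m$) live coroutines for the smallest left component $u$ occurring among their current pairs; then, cycling round-robin through the coroutines whose current pair has left component $u$, advance such a coroutine by one step, read off its (now previous) pair $(u,v)$, emit $(u,v)$ and set $B[v]\leftarrow u$ if $B[v]\neq u$ (discard it otherwise), and drop that coroutine from the ``$u$-pool'' once its current pair moves to a left component $>u$ or it becomes exhausted; when the $u$-pool is empty, recompute the new minimum left component (again in $\bigO(m)$) and continue, stopping once all coroutines are exhausted. Correctness is immediate: $q(\DBD)=\bigcup_i q_i(\DBD)$ together with the time-stamped array guarantees that each pair of $q(\DBD)$ is emitted exactly once; treating left components in increasing order (each coroutine being itself semi-sorted) gives semi-sortedness; and since every left component that ever becomes the running minimum is the left component of some pair in $q(\DBD)$, only a constant number of minimum-recomputations occur between two emissions, contributing $\bigO(|q|)$ per delay.

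The step I expect to be the crux is the delay bound. A single coroutine advancement costs $\bigO(d(|\DBD|,|q|))$ and the scans cost $\bigO(|q|)$, so everything reduces to showing that only $\bigO(|q|)$ coroutine-advancements take place between two consecutive emissions (and between the last emission and the detection of termination). The difficulty is that one pair $(u,v)$ may lie in many of the $q_i(\DBD)$, so a coroutine can a priori produce a long run of pairs that are all duplicates of already-emitted ones, and moreover every coroutine that reaches left component $u$ must eventually be advanced past $u$ before we may certify that $q(\DBD)$ is exhausted; these ``draining'' advancements must be spread out over the enumeration rather than bunching up after some emission. Controlling this is exactly what the fair (round-robin) scheduling of the coroutines currently sitting at the left component $u$ is designed for, and the heart of the proof is the accounting argument that under this schedule a fresh pair is emitted (or the left component $u$ is finished off) every $\bigO(|q|)$ advancements — which one argues using that whichever coroutine at $u$ has been advanced the most has by then produced exactly that many pairwise distinct right components, each of which has necessarily already been emitted by some coroutine. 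Everything else — maintaining the $u$-pool and the array $B$ within the list-based model of Section~\ref{sec:mainDefs} — is routine bookkeeping.
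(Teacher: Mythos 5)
The decomposition $q(\DBD)=\bigcup_i q_i(\DBD)$ and the plan of merging $m$ semi-sorted coroutine streams is the same as the paper's, and your preprocessing accounting and the time-stamped array $B$ for deduplication are fine. The genuine gap is in the delay argument, and it is not a presentational gap: the merging scheme you describe (advance \emph{one} coroutine per step, emit its previous pair immediately if fresh, else discard) does not admit an $\bigO(|q|\, d)$ worst-case delay bound, and your accounting argument for the crux is not actually a proof of that bound. Concretely, take $m=2$ with both coroutines at the same phase $u$, coroutine $1$ emitting right-components $v_1,v_2,\dots,v_\ell$ in this order and coroutine $2$ emitting $v_\ell,v_{\ell-1},\dots,v_1$ (recall semi-sorted enumeration constrains only the left components). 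Under round-robin scheduling, the first $\ell$ advancements all produce fresh pairs, after which both streams enter already-emitted territory and the remaining $\Theta(\ell)$ advancements (until both coroutines leave phase $u$) are \emph{all} discards with no output at all. Since $\ell$ can be $\Theta(|V_{\DBD}|)$, the delay from the last emission of phase $u$ to the first emission of the next phase is $\Theta(|V_{\DBD}|\, d)$, not $\bigO(|q|\, d)$. Your key claim, that the maximally-advanced coroutine has produced that many distinct and already-emitted right components, is true but only yields an \emph{amortised} rate of one fresh emission per $\bigO(|q|)$ advancements; it does not bound the \emph{gap} between two consecutive emissions.

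The paper avoids this by buffering rather than emitting in place: in each iteration all coroutines still at the current phase are advanced once, the fresh right-components are collected into a set $X$, and exactly one element of $X$ is emitted per iteration (with any surplus flushed at constant delay when the phase ends). The invariant that is proved, and that your scheme lacks, is that $X$ is never empty when an emission is due. That invariant is established by exactly the counting fact you point to, but it only has the desired consequence because the surplus of early fresh outputs is stored and doled out later to pay for the tail of duplicate-only advancements. So the missing idea is not the round-robin scheduling but the output buffer. Either switch to the buffered scheme, or show a modified unbuffered schedule for which a worst-case (not amortised) bound can be established; as stated, the argument does not go through.
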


\begin{proof}
We assume that semi-sorted $\enumProbClass{Q}$ can be solved with preprocessing $p(|\DBD|, |q|)$ and delay $d(|\DBD|, |q|)$. Let $q \in \bigvee(Q)$, i.\,e., $q = (q_1 \altop q_2 \altop \ldots \altop q_m)$ with $q_i \in Q$ for every $i \in [m]$. By assumption, there are enumeration algorithms $A_1, A_2, \ldots, A_m$ that, for any input graph database $\DBD$, perform some preprocessing in time $p(|\DBD|, |q_i|)$ and then enumerate the pairs $q_i(\DBD)$ ordered by their first elements with delay $d(|\DBD|, |q_i|)$. We describe now an enumeration algorithm $A$ that, for any input graph database $\DBD$, performs some preprocessing in time $\bigO(|q|p(|\DBD|, |q|))$ and then enumerates the pairs in $q(\DBD)$ ordered by their first elements with delay $\bigO(|q|d(|\DBD|, |q|))$. Note that $q(\DBD) = \bigcup^m_{i = 1} q_i(\DBD)$.\par
Let $\DBD$ be the input graph database over $\Sigma$. We assume that $V_{\DBD} = [n]$ with $1 \preceq 2 \preceq \ldots \preceq n$ (see Lemma~\ref{makeWellFormedLemma}).
 \medskip\\
\noindent\textbf{Intuitive Explanation}: The algorithm $A$ uses size-$m$ arrays $L$ and $R$ that store nodes from $V_{\DBD} = [n]$, and size-$n$ Boolean arrays $X$ and $Y$ that represent subsets of $[n]$. We construct these in the preprocessing, in addition to performing the preprocessing procedures of the algorithms $A_1, A_2, \ldots, A_m$. 
Since the enumeration algorithms $A_1, A_2, \ldots, A_m$ provide a semi-ordered enumeration, each enumeration of $q_j(\DBD)$ can be seen as a table with two columns whose rows correspond to the pairs from $q_j(\DBD)$, ordered by the first column. The enumeration then produces the rows of this table from top to bottom. Therefore, the enumeration proceeds in $n$ phases, where the $i^{\text{th}}$ phase consists in producing all rows with $i$ in the left column (note that for some $i$ such rows might not exist). Consequently, we can also interpret phase $i$ of the algorithms $A_j$ as an enumeration of only single elements from $V_{\DBD}$, since the left elements of the pairs are always $i$. The enumeration procedure of $A$ also proceeds in $n$ phases, where the $i^{\text{th}}$ phase consists in having the enumeration algorithms $A_1, A_2, \ldots, A_m$ perform their $i^{\text{th}}$ phases in an interleaved manner, i.\,e., we let each $A_i$ produce the next element one after the other, and then we repeat this step until all $A_j$ have finished their $i^{\text{th}}$ phase. We maintain in array $L$ the current phases (i.\,e., the current left element of the produced pairs) of the algorithms $A_j$ (this is necessary, since only a subset may participate in phase $i$ for a fixed $i$ and some $A_j$ may finish it earlier than others) and we store in array $R$ the elements produced most recently by the algorithms $A_j$. Once all $A_j$ (that are still in phase $i$) have produced one new element, we disregard all those elements $v$ among them such that $A$ has already produced $(i, v)$ in phase $i$ (these already produced elements will be stored in $Y$), and we store all other elements in $X$. Then $A$ picks some element $v \in X$, produces $(i, v)$ (and therefore ``buys'' another delay), marks $v$ as already produced by adding it to $Y$ and removes it from $X$. Then, again each $A_j$ will produce the next element and this goes on until all $A_j$ have finished phase $i$, and therefore we can move on to the next phase. \par
Due to the sets $X$ and $Y$, we do not produce duplicates. However, in order to bound the delay, we have to produce at least one pair in each iteration of the main loop in each phase, and therefore we have to show that it cannot happen that $X$ is empty when $A$ needs to produce the next pair. We shall now define this algorithm more formally. \medskip\\
\noindent\textbf{Preprocessing}: First, $A$ performs all preprocessing procedures of the algorithms \[A_1, A_2, \ldots, A_m\,,\] which can be done in time $\sum^{m}_{i = 1} p(|\DBD|, |q_i|) = \bigO(|q| p(|\DBD|, |q|))$. Then we construct arrays $L$ and $R$ each of size $m$ the entries of which can store values from $[n] \cup \{0, n+1\}$, and we initialise all entries with $0$. 
Furthermore, we construct Boolean arrays $X$ and $Y$ of size $n$ that initially store $0$ in every entry (note that these arrays shall be used for storing subsets of $[n]$). This can clearly be done in time $\bigO(|V_{\DBD}| + |q|)$. Consequently, the total preprocessing time is $\bigO(|q| p(|\DBD|, |q|) + |V_{\DBD}|)$.\medskip\\
\noindent\textbf{Enumeration}: 
Firstly, we set $c = 0$, where $c$ will be a counter that indicates the current phase. Then we iterate the following main loop until it terminates:
\begin{enumerate}
\item\label{StepOne} For every $i \in [m]$ with $L[i] = c$, we request the next element from the enumeration procedure of $A_i$. If such an element $(u, v)$ is returned, we set $L[i] = u$ and $R[i] = v$, and if no element is returned (i.\,e., the enumeration of $A_i$ is done), we set $L[i] = R[i] = n+1$. 
\item\label{StepTwo} If $c < \min\{L[i] \mid i \in [m]\}$, then 
\begin{enumerate}
\item\label{StepTwoA} for every $v \in X$, remove $v$ from $X$ and produce $(c, v)$,
\item\label{StepTwoB} set $Y = \emptyset$,
\item\label{StepTwoC} set $c = \min\{L[i] \mid i \in [m]\}$.
\end{enumerate}
\item\label{StepInterrupt} If $c = n + 1$, interrupt.
\item\label{StepThree} For every $i \in [m]$, if $L[i] = c$ and $R[i] \notin Y$, add $R[i]$ to $X$.
\item\label{StepFour} Choose some element $v \in X$, produce $(c, v)$ as output, remove $v$ from $X$ and add $v$ to $Y$.
\item\label{StepFive} Move to Step~\ref{StepOne}.
\end{enumerate}
\medskip
\noindent\textbf{Correctness}: We first note that in the first iteration of the main loop, we have $L[i] = c = 0$ for every $i \in [m]$, so we request the first pairs from \emph{all} $A_j$ and store the respective left and right elements in $L$ and $R$, respectively (or store $n+1$ in both $L$ and $R$, if the enumeration procedure has already terminated without producing any pair). In any subsequent iteration, this only happens with respect to those $A_j$ that are in phase $c$. If condition $c < \min\{L[i] \mid i \in [m]\}$ is satisfied in Step~\ref{StepTwo}, then this means that by performing Step~\ref{StepOne}, we have reached a new phase (note that also in the very first iteration, we reach a new phase, i.\,e., phase $i$ for the smallest left element $i \in [n]$ in any pair produced by some $A_j$) and therefore we produce all elements of $X$ as output (paired with $c$) and then empty the sets $X$ and $Y$ (which were only storing elements that are relevant for phase $c$ that just terminated). Moreover, we have to determine the next phase, which is done by setting $c = \min\{L[i] \mid i \in [m]\}$. In the very first iteration, $X$ is empty, so nothing happens in these steps except setting $c$ to the current phase (which must be strictly larger than $0$, but not necessarily $1$).
Next, Step~\ref{StepInterrupt} would now interrupt the whole procedure, if $c = n + 1$, which means that all enumeration procedures of the algorithms $A_j$ have already terminated. Note that if this happens for the first time, i.\,e., the last still active enumeration procedures of some $A_j$ terminate by requesting their last elements in Step~\ref{StepOne}, then we necessarily also have $c < \min\{L[i] \mid i \in [m]\}$ in Step~\ref{StepTwo}, which means that we have produced all elements from $X$ as output before the interruption is invoked in Step~\ref{StepInterrupt}. If, on the other hand, we reach Step~\ref{StepThree}, then we collect in the set $X$ the right elements from all new pairs produced by $A_j$ that are in phase $c$, but only if these are not already stored in $Y$, since then they would have already been produced as right element in a pair with $c$ as left element. Step~\ref{StepFour} then chooses and actually produces (paired with $c$) one of the elements from $X$ (and removes it from $X$ and adds it to $Y$ to store that is has already been produced). Finally, in Step~\ref{StepFive}, we move back to Step~\ref{StepOne}, which triggers a new iteration.\par
These considerations show that the procedure from above will enumerate $q(\DBD)$. In particular, note that if we enter Step~\ref{StepFour} with $X$ being empty, then we do not produce an output at this point, but, due to Step~\ref{StepTwo}, we will nevertheless completely enumerate $q(\DBD)$. Since we synchronise the enumerations of the $A_j$ with respect to their phases, the enumeration of $q(\DBD)$ produces \emph{all} pairs from $q(\DBD)$ with left element $i$, then all pairs with left element $i' > i$ and so on; thus, the enumeration of $q(\DBD)$ is semi-ordered. Furthermore, the book keeping done in sets $X$ and $Y$ guarantees that we do not produce duplicates. It only remains to analyse the delay of this enumeration procedure.\par
We first observe that Step~\ref{StepTwoB}, i.\,e., setting $Y = \emptyset$, is problematic since it requires time $\bigO(n)$ (we have to set $Y[i] = 0$ for every $i \in [n]$). Therefore, we implement the array $Y$ as follows. Instead of letting it be Boolean, we assume that it can store elements from $[n] \cup \{0\}$. The idea is that $Y[i] = 0$ means that $i \notin Y$ (just as for the Boolean case), while $Y[i] = j$ with $j \in [n]$ means $i \in Y$ in the case that we are currently in phase $j$, i.\,e., $c = j$, and $i \notin Y$ otherwise. With this interpretation, Step~\ref{StepTwoB} is not necessary anymore, since setting $c = \min\{L[i] \mid i \in [m]\}$ in Step~\ref{StepTwoC} has the same effect as erasing all elements from $Y$. Consequently, we can ignore Step~\ref{StepTwoB} altogether (or rather interpret as a mere comment in the pseudo code above to indicate what is happening at Step~\ref{StepTwoB}). In particular, we note that with this implementation of $Y$, we can still check for both $X$ and $Y$ whether they contain a specific element, and we can add or erase specific elements in constant time (adding $i$ to $Y$ in phase $c$ just means to set $Y[i] = c$ instead of $Y[i] = 1$). \par
However, in Steps~\ref{StepTwoA}~and~\ref{StepFour}, we also have to retrieve \emph{some} element from $X$. In order to do this efficiently (and not by moving through the array from left to right to find some elements, which requires time $\bigO(n)$), we also store the elements of $X$ as an unsorted list (which is initialised in the preprocessing). This means that we can always obtain some element of $X$ in constant time (by just retrieving the first list element). Keeping the array and the list for $X$ synchronised is no problem: Whenever we add some $i$ to $X$ (Step~\ref{StepThree}), we set $X[i] = 1$ and add $x$ at the end of the list for $X$; whenever we want to retrieve some element from $X$ (Steps~\ref{StepTwoA}~and~\ref{StepFour}), we retrieve and remove the first list element, say $i$, and then we set $X[i] = 0$. Consequently, all operations with respect to the lists $X$ and $Y$ can be performed in constant time. \par
We estimate the running-time for each of the separate steps of an iteration. Step~\ref{StepOne} requires time $\bigO(\sum^m_{j = 1} d(|\DBD|, |q_j|)) = \bigO(m d(|\DBD|, |q|))$. The total running time of Steps~\ref{StepTwo}~to~\ref{StepTwoC} is $\bigO(m + k)$, where $k$ is the number of pairs $(c, v)$ produced in Step~\ref{StepTwoA} and $\bigO(m)$ is needed to compute $\min\{L[i] \mid i \in [m]\}$. Note that the pairs that are produced in Step~\ref{StepTwo} are output with constant delay in Step~\ref{StepTwoA} and pay for the running time dependence on $k$, hence in the worst case, $k=0$ where Step~\ref{StepTwo} requires time in $\bigO(m)$.
 Step~\ref{StepThree} requires time $\bigO(m)$. All other steps can be carried out in constant time. This means that if in each iteration at least one pair is produced by Step~\ref{StepFive}, then the delay of the whole enumeration procedure of algorithm $A$ is $\bigO(m d(|\DBD|, |q|))$. 
Obviously, if we can never reach the situation that $X = \emptyset$ in Step~\ref{StepFour}, then in each iteration at least one pair is produced. Therefore, it is sufficient to prove this property.\par
Let us assume that the enumeration procedure has just finished Step~\ref{StepInterrupt} and we are in some iteration of phase $i$, i.\,e., $c = i$. Moreover, we assume that so far, we have not encountered the situation that $X = \emptyset$ in Step~\ref{StepFour}. For every $j \in [m]$, let $a_j = R[j]$ if $L[j] = c$ where $a_j = \bot$ indicates the situation that element $a_j$ does not exist. This means that all existing elements $a_j$ with $j \in [m]$ are exactly those elements that have most recently been produced in phases $i$ of the enumeration procedures from the algorithms $A_j$ (this can have happened in Step~\ref{StepOne} of the same iteration or, if this is the first iteration of phase $i$, also in applications of Step~\ref{StepOne} in previous iterations). In particular, the existing elements $a_j$ with $j \in [m]$ have not yet been handeled in the sense of Step~\ref{StepThree}, i.\,e., we have not yet checked whether they have already been produced as output and, if not, have added them to $X$. We can also note that there must be at least one $j \in [m]$ with $a_j \neq \bot$, since otherwise $c < \min\{L[i] \mid i \in [m]\}$.\par
In addition to these elements $a_j$, for every $j \in [m]$, let $b_{j, 1}, b_{j, 2}, \ldots, b_{j, \ell_j}$ (note that $\ell_j = 0$ is possible) be exactly the elements already produced in phase $i$ of the enumeration procedure of $A_j$ in some previous iterations. In other words, for every $j \in [m]$, we have requested in applications of Step~\ref{StepOne} exactly the elements $b_{j, 1}, b_{j, 2}, \ldots, b_{j, \ell_j}, a_{j}$ from phase $i$ of the enumeration procedure of $A_j$ (note that $a_j = \bot$ is possible, which means that $a_j$ does not exist and therefore has not been requested). Moreover, this has happened in $\ell = \max\{\ell_j \mid j \in [m]\}$ previous (i.\,e., not counting the current one) iterations of the main loop of the enumeration procedure of algorithm $A$. In particular, this means that we have in phase $i$ so far only produced $\ell$ pairs with $i$ as left element.  \par
Let $K = \{b_{j, p} \mid j \in [m], p \in [\ell_j]\}$ and let $M = \{a_j \mid j \in [m]\}$. Since $\ell = \max\{\ell_j \mid j \in [m]\}$, there is at least one $j' \in [m]$ with $\ell_{j'} = \ell$ and therefore $|\{b_{j', 1}, b_{j', 2}, \ldots, b_{j', \ell_{j'}}\}| = \ell$. This is true since the elements $b_{j', 1}, b_{j', 2}, \ldots, b_{j', \ell_{j'}}$ are part of the $i^{\text{th}}$ phase of the enumeration of $A_{j'}$ and therefore must be distinct. Thus, $|K| \geq \ell$. Furthermore, we can choose $j'$ such that $a_{j'} \neq \bot$. This is the case since we have $a_j = \bot$ if and only if $\ell_j < \ell$ (i.\,e., phase $i$ of the enumeration of $A_j$ has already terminated) and, as observed above, there must be at least one $j \in [m]$ with $a_j \neq \bot$. \par
For every $b \in K$, either $(i, b)$ has been produced as output, or $b \in X$. Since so far we have only produced $\ell$ pairs as output, this directly implies that if $|K| > \ell$, then $X \neq \emptyset$, which means that we reach Step~\ref{StepFour} with $X \neq \emptyset$. If, on the other hand, $|K| = \ell$, then $K = \{b_{j', 1}, b_{j', 2}, \ldots, b_{j', \ell_{j'}}\}$, which also means that $a_{j'} \notin K$ since $b_{j', 1}, b_{j', 2}, \ldots, b_{j', \ell_{j'}}, a_{j'}$ is an enumeration of distinct elements provided by algorithm $A_{j'}$ (recall that $a_{j'} \neq \bot$, as observed above). Consequently, we also have $a_{j'} \notin Y$ and therefore $a_{j'}$ is added to $X$ in Step~\ref{StepThree}. Hence, we reach Step~\ref{StepFour} with $X \neq \emptyset$.
\end{proof}

Now, we give upper bounds for $\enumProbClass{\BTRPQ}$ and $\enumProbClass{\SRPQ}$ separately. 

\begin{thm}\label{BTRPQTheorem}
Semi-sorted $\enumProbClass{\BTRPQ}$ can be solved with delay $\bigO(\Delta(\DBD))$ (and without preprocessing). 
\end{thm}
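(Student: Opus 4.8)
The plan is to run, from every source vertex $u$ in the order $\preceq_{\DBD}$, a breadth-first search through the relevant edges only, and to emit a reachable vertex \emph{when it is dequeued} rather than when it is first discovered; outputting on dequeue is what keeps the cost between two consecutive outputs at $\bigO(\degree(\DBD))$.

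First the reduction to reachability. Writing $q = (x_1 \altop \dots \altop x_k)^+$ or $q = (x_1 \altop \dots \altop x_k)^*$ and $\Sigma' = \{x_1,\dots,x_k\}$, we have $\lang(q) = {\Sigma'}^+$, respectively ${\Sigma'}^* = {\Sigma'}^+ \cup \{\eword\}$. Hence a $q$-path in $\DBD$ is precisely a path all of whose labels lie in $\Sigma'$, of length $\ge 1$ in the $+$-case and of any length $\ge 0$ in the $*$-case. So $(u,v) \in q(\DBD)$ iff $v$ is reachable from $u$ using only $\Sigma'$-labelled edges (for the $*$-case ``reachable'' includes $v = u$ via the empty path; for the $+$-case the path must have length $\ge 1$, so $(u,u) \in q(\DBD)$ iff $u$ lies on a directed $\Sigma'$-cycle). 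We never materialise the $\Sigma'$-subgraph --- that would cost $\Theta(|\DBD|)$; instead, whenever the search visits a vertex $w$ we only walk the $x$-adjacency lists of $w$ for $x \in \Sigma'$. As $|\Sigma|$ is constant, this touches $\bigO\big(\sum_{x \in \Sigma'}|E_x(w)|\big) = \bigO(\degree(\DBD))$ edges per visited vertex.

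The enumeration runs in $n = |V_{\DBD}|$ phases, phase $u$ producing exactly the pairs of $q(\DBD)$ with first component $u$; since the $u$'s are processed in the order $\preceq_{\DBD}$ and all pairs of a phase share their first component, the output is semi-sorted. Phase $u$ uses a queue, initialised with $u$ itself in the $*$-case and with all $\Sigma'$-successors of $u$ in the $+$-case (so that $(u,u)$ is produced exactly when $u$ is reached back along a $\Sigma'$-path of length $\ge 1$); every enqueued vertex is immediately marked ``visited in phase $u$''. Then we loop: dequeue $w$, output $(u,w)$, walk the $\Sigma'$-adjacency lists of $w$ and, for every successor not yet visited in phase $u$, mark it and enqueue it; when the queue is empty the phase is over. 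A routine induction on the BFS shows that phase $u$ outputs each $(u,v) \in q(\DBD)$ exactly once (the visited marks prevent a second dequeue) and no other pair, and distinct phases have distinct first components, so the overall enumeration is duplicate-free, complete and semi-sorted. For the visited marks we keep a single array $\tau$ over the vertices with $\tau[w]$ the index of the most recent phase in which $w$ was enqueued; ``$w$ is fresh in phase $u$'' is the test $\tau[w] \ne u$ (equivalently $\tau[w] < u$), so no per-phase reset is needed, and $\tau$ is set up in $\bigO(1)$ by lazy array initialisation (see Remark~\ref{lazyStuff} and \cite{MoretShapiroBook}).

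The delay analysis is the crux. Within a phase, each iteration of the loop emits one pair and then does $\bigO(\degree(\DBD))$ work, so the gap between two consecutive outputs of the same phase is $\bigO(\degree(\DBD))$. A phase transition costs $\bigO(\degree(\DBD))$ as well: finishing phase $u$ scans the last dequeued vertex ($\bigO(\degree(\DBD))$) and empties the queue; then, in the $*$-case the next phase immediately dequeues $u$ and outputs $(u,u)$, while in the $+$-case the next non-trivial phase first enqueues the $\bigO(\degree(\DBD))$-many $\Sigma'$-successors of its source before the first dequeue. The one genuinely delicate point --- and the part I expect to need the most care --- is the $+$-case together with sources having \emph{no} outgoing $\Sigma'$-edge: such a source is detected in $\bigO(1)$ (the relevant adjacency lists are empty, which is testable in constant time) and contributes nothing, but a long run of them, or a long tail of them after the last output, must not be traversed one vertex at a time if we insist on delay $\bigO(\degree(\DBD))$. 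Handling this cleanly amounts to being able to jump directly to the next source with an outgoing $\Sigma'$-edge and to recognise when none remains; I would realise this either through a small auxiliary list of such sources or by pipelining that scan into the preceding outputs, and would have to check that this does not spoil the bookkeeping. Everything else ($\DBD'$ avoided, outputs on dequeue, lazy-initialised $\tau$) is straightforward, and the $*$-case --- where every phase emits at least $(u,u)$ --- already gives delay $\bigO(\degree(\DBD))$ with no preprocessing.
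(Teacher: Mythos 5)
Your approach---BFS per source through the relevant $\Sigma'$-adjacency lists, output on dequeue, lazy-initialised timestamps---is the right algorithm and matches what the paper implicitly relies on; the paper itself does not spell out the algorithm, but merely observes the reduction to enumerating $(E_{\DBD'})^*$ or $(E_{\DBD'})^+$ and then cites~\cite{CaselEtAl2020_arxiv} for the $\bigO(\Delta(G))$-delay transitive-closure enumeration. Your unpacking of the $*$-case is complete: every phase $u$ emits $(u,u)$ within constant time of its start, so the delay between consecutive outputs, from start to first output, and from last output to termination are all $\bigO(\Delta(\DBD))$.

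The point you flag for the $+$-case is a genuine gap, and neither of your sketched remedies closes it within the stated bounds. Take $q = \ta^+$ on a database whose first vertex $v_1$ carries an $\ta$-self-loop and whose remaining $n-1$ vertices are isolated: then $q(\DBD) = \{(v_1,v_1)\}$ and $\Delta(\DBD)=1$, yet after the single output the algorithm must still iterate past $n-1$ degenerate sources to detect termination---a $\Theta(|V_{\DBD}|)$ last-to-end gap, which the paper's delay definition explicitly counts. Precomputing an auxiliary list of non-degenerate sources fixes the iteration but itself costs $\Theta(|V_{\DBD}|)$, i.e.\ linear preprocessing, whereas the theorem claims preprocessing absorbed into the $\bigO(\Delta(\DBD))$ delay; and pipelining cannot help when there is only one output to amortise against (or none, as when $q(\DBD)=\emptyset$). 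What is missing is whatever device~\cite{CaselEtAl2020_arxiv} uses (if any) to reconcile the $+$-case with ``no preprocessing.'' As written, your argument establishes the weaker---but still useful, and in fact all that Theorem~\ref{restrictedRPQEnumTheorem} later needs---statement that semi-sorted $\enumProbClass{\BTRPQ}$ can be solved with $\bigO(|\DBD|)$ preprocessing and $\bigO(\Delta(\DBD))$ delay.
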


\begin{proof}
Let $\DBD$ be a graph database, let $q = (x_1 \altop x_2 \altop \ldots \altop x_k)^*$ and $q' = (x_1 \altop x_2 \altop \ldots \altop x_k)^+$, where $x_1, x_2, \ldots, x_k \in \Sigma$. It can be easily seen that enumerating $q(\DBD)$ or $q'(\DBD)$ is the same as enumerating the reflexive-transitive closure $(E_{\DBD'})^*$ or the transitive closure $(E_{\DBD'})^+$, where $\DBD'$ is obtained from $\DBD$ by deleting all $x$-adjacency lists with $x \notin \{x_1, x_2, \ldots, x_k\}$. In~\cite{CaselEtAl2020_arxiv}, it is shown for general directed graphs $G = (V, E)$ how to enumerate $E^*$ and $E^+$ sorted by first coordinate (denoted by ``row-wise'') with delay $\bigO(\Delta(G))$. This approach translates to a semi-sorted enumeration and can be used on $\DBD$ in such a way that all $x$-adjacency lists with $x \notin \{x_1, x_2, \ldots, x_k\}$ are ignored (so without preprocessing). Thus, by using this procedure on $\DBD$, we can enumerate $q(\DBD)$ and $q'(\DBD)$ semi-sorted with delay $\bigO(\Delta(\DBD))$.
\end{proof}

\begin{thm}\label{SRPQTheorem}
Semi-sorted $\enumProbClass{\SRPQ}$ can be solved with preprocessing $\bigO(|\DBD|)$ and delay $\bigO(\degree(\DBD))$. 
\end{thm}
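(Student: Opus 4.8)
The plan is to handle the two shapes of a short $\RPQ$ separately. Write $q = (x_1 \altop \dots \altop x_k)$ or $q = (x_1 \altop \dots \altop x_k)(y_1 \altop \dots \altop y_{k'})$, put $X = \{x_1,\dots,x_k\}$ and, in the second case, $Y = \{y_1,\dots,y_{k'}\}$; since $|\Sigma|$ is constant, $k$ and $k'$ are constants. By Lemma~\ref{makeWellFormedLemma} we may assume $V_{\DBD} = [n]$ with $\preceq$ equal to $\leq$. In the first case $q(\DBD) = \{(u,v) \mid (u,x,v)\in E_{\DBD}\text{ for some }x\in X\}$, i.e.\ the $X$-labelled edge relation with parallel edges collapsed; in the second case $q(\DBD)$ is the composition of the $X$-edge relation with the $Y$-edge relation, $\{(u,w) \mid \exists v:\ u \text{ has an }X\text{-edge to }v\text{ and }v\text{ has a }Y\text{-edge to }w\}$.

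For both cases the preprocessing will, in $\bigO(|\DBD|)$ time, (i) compute $\DBD^R$ via Lemma~\ref{reversalLemma}; (ii) produce for every node $v$ a list $N_X(v)$ holding the $X$-successors of $v$ \emph{in increasing order and without repetitions}, and, in the second case, likewise a list $N_Y(v)$; and (iii) build, as a linked list sorted by $\preceq$, the set of \emph{active} first components, i.e.\ those $u$ with $\{w \mid (u,w)\in q(\DBD)\}\neq\emptyset$. The key point for (ii) is that all these sorted lists can be built simultaneously in linear total time, with no comparison sorting: iterating the target node $w$ from $1$ to $n$ and, for every $v$ with $(v,y,w)\in E_{\DBD}$ for some $y\in X$ (read off from $\DBD^R$), appending $w$ to $N_X(v)$ unless $w$ already is the last element of $N_X(v)$, yields exactly the desired lists (all attempted appends of a fixed $w$ to a fixed $N_X(v)$ happen consecutively, so the last-element test removes the at most $|X|$ parallel-edge duplicates); analogously for $N_Y$. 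Step (iii) is then a linear scan. The linked list of active nodes ensures that between the phases for two consecutive active $u$ we spend only $\bigO(1)$ plus the set-up time of the next phase, so long runs of ``empty'' first components cannot blow up the delay.

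In the first case the enumeration walks the list of active $u$ in order and, for each, outputs $(u,v)$ for every $v\in N_X(u)$; since the $N_X(u)$ are deduplicated this enumerates $q(\DBD)$ sorted (hence semi-sorted) and without repetitions, with delay $\bigO(1)$. In the second case, for each active $u$ we first obtain $A_u$ by scanning $N_X(u)$ and keeping exactly those $v$ with $N_Y(v)\neq\emptyset$; this takes $\bigO(|N_X(u)|)$ time, and $|A_u|\le\degree(u)\le\degree(\DBD)$. We then run a $|A_u|$-way merge with duplicate elimination over the sorted lists $\{N_Y(v)\mid v\in A_u\}$: maintaining one head pointer per list, we repeatedly compute the minimum $\mu$ over the non-exhausted heads, output $(u,\mu)$, and advance past $\mu$ every head currently equal to $\mu$; the phase ends when all heads are exhausted. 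Each such round costs $\bigO(|A_u|)$ and produces exactly one, strictly larger, new output, and the single round that detects exhaustion also costs $\bigO(|A_u|)$; hence within a phase the delay is $\bigO(\degree(\DBD))$, and together with the $\bigO(\degree(u'))$ set-up of the next phase the overall delay stays $\bigO(\degree(\DBD))$. Every active $u$ produces at least one output, so the bound also holds across phase boundaries, and the output is sorted by first component (indeed fully sorted) and repetition-free because the $N_Y(v)$ and $A_u$ are.

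The crux is the delay analysis of the second case, and specifically the reason for the sorted merge rather than the naive ``for each $v\in A_u$ scan $N_Y(v)$ and emit the not-yet-seen successors'': in that naive scheme a successor list all of whose entries were already emitted (via other $v$'s) costs a full scan of length up to $\degree(\DBD)$ with no output, and several such wasteful scans can occur back to back — even right before the phase ends — pushing the delay up to $\bigO(\degree(\DBD)^2)$. Precomputing the $N_Y(v)$ pre-sorted in linear preprocessing is exactly what enables the merge, in which every $\bigO(|A_u|)$ units of work are paid for by a fresh output, and thereby removes this obstacle; the analogue of Theorem~\ref{BTRPQTheorem}'s per-vertex accounting then gives the claimed bounds.
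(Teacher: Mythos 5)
Your proof is correct but follows a genuinely different route from the paper's. For the two-letter case $q = (x_1 \altop \dots \altop x_k)(y_1 \altop \dots \altop y_{k'})$, the paper builds the three-layered product graph $\DBtimesRE{\DBD}{q}$, prunes degenerate middle-layer nodes, and then, for each surviving $(i,1)$, runs a BFS that appends a pair $(i,k)$ to an output queue whenever a layer-three node $(k,3)$ is first discovered; one queued pair is released each time a neighbourhood has been fully traversed, and the delay bound rests on an amortization argument that the queue can never be empty when a pair is requested. You instead precompute, in $\bigO(|\DBD|)$ time, the deduplicated and \emph{already-sorted} successor lists $N_X(v)$ and $N_Y(v)$ (via the "sweep $w = 1, \dots, n$ and append $w$ to its predecessors, read off from $\DBD^R$" trick, which avoids any comparison sorting and kills parallel-edge duplicates with a last-element test), and for each active $u$ run a deduplicating $|A_u|$-way merge over $\{N_Y(v) \mid v \in A_u\}$, where every round costs $\bigO(|A_u|) \le \bigO(\degree(\DBD))$, advances each head at most once, and emits exactly one fresh, strictly larger target. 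This makes the per-output accounting immediate without the paper's queue-balancing lemma, and as a bonus it yields a \emph{fully} lexicographically sorted enumeration rather than merely a semi-sorted one. The linked list of active first components, the $A_u$ filtering, and the $|A_u| \le \degree(u) \le \degree(\DBD)$ bound are all handled correctly, so both the preprocessing and the delay bound check out.
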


\begin{proof}
Let $\DBD$ be a graph database over $\Sigma$. We assume that $V_{\DBD} = [n]$ with $1 \preceq 2 \preceq \ldots \preceq n$ (see Lemma~\ref{makeWellFormedLemma}).\par
Let $q = (x_1 \altop x_2 \altop \ldots \altop x_k)$ with $x_1, x_2, \ldots, x_k \in \Sigma$. Then $q(\DBD) = \{(u, v) \mid (u, x_i, v) \in E_{\DBD}, 1 \leq i \leq k\}$. Obviously, we can in time $\bigO(|\DBD|)$ compute this whole set by moving trough the list for $V_{\DBD}$ and for every encountered node $u$ and every $i \in [k]$, we access the $x_i$-adjacency list for $u$ and for all $v$ it contains, we add $(u, v)$ to a list. This list contains exactly the elements from $q(\DBD)$ and is also semi-sorted. Thus, we can compute $q(\DBD)$ completely in the preprocessing and therefore enumerate it semi-sorted with constant delay.\par
Next, we assume that \[q = (x_1 \altop x_2 \altop \ldots \altop x_k) (y_1 \altop y_2 \altop \ldots \altop y_{k'})\] with $x_1, \ldots, x_k, y_1, \ldots, y_{k'} \in \Sigma$, for which the algorithm is more difficult. \medskip\\
\textbf{Preprocessing}: First, we construct $\DBtimesRE{\DBD}{q}$ in time $\bigO(|\DBD||q|) = \bigO(|\DBD|)$. Since $G_q$, the underlying $\Sigma$-graph of the $\NFA$ for $q$, can be assumed to have nodes $\{1, 2, 3\}$, where $1$ is the initial and $3$ the accepting state, and arcs $\{(1, x_i, 2) \mid 1 \leq i \leq k\} \cup \{(2, y_i, 3) \mid 1 \leq i \leq k'\}$, we can assume that the underlying non-labelled graph of $\DBtimesRE{\DBD}{q}$ has the following simple structure: $\DBtimesREV{\DBD}{q} = V_1 \cup V_2 \cup V_3$ with $V_i = \{(u, i) \mid u \in V_{\DBD}\}$ for every $i \in [3]$, and $\DBtimesREE{\DBD}{q} = E_1 \cup E_2$ with $E_1 = \{((u, 1), (v, 2)) \mid \exists i \in [k]: (u, x_i, v) \in E_{\DBD}\}$ and $E_2 = \{((u, 2), (v, 3)) \mid \exists i \in [k']: (u, y_i, v) \in E_{\DBD}\}$. The task of enumerating $q(\DBD)$ reduces now to the task of enumerating all reachable pairs $((u, 1), (v, 3))$ or, equivalently, to enumerate $(\DBtimesREE{\DBD}{q})^* \cap (V_1 \times V_3)$. We further assume that the sets $V_1$, $V_2$ and $V_3$ are stored in individual lists. This can be easily achieved during the construction of $\DBtimesRE{\DBD}{q}$: for every $u \in V_{\DBD}$ we add $(u, 1)$ to a list that stores $V_1$, then we construct lists for $V_2$ and $V_3$ analogously.\par
Finally, we modify $\DBtimesRE{\DBD}{q}$ as follows. We construct a Boolean array $S_{\text{in}}$ of size $|V_{\DBD}|$ (that can be addressed by the elements from $V_{\DBD}$) initialised with $0$ in all entries. Then, we move through the list for $V_1$ and for every $(u, 1)$ we encounter, we move through its adjacency list and for every $(v, 2)$ we encounter in this adjacency list, we set $S_{\text{in}}[v] = 1$. This procedure can be carried out in $\bigO(|\DBD|)$ and after its termination, we have that $S_{\text{in}}[v] = 0$ if and only if the in-degree of $(v, 2)$ is $0$. Next, we construct a Boolean array $S_{\text{out}}$ of size $|V_{\DBD}|$ (that can be addressed by the elements from $V_{\DBD}$) initialised with $0$ in all entries. Then, we move through the list for $V_2$ and for every $(u, 2)$ we encounter that has a non-empty adjacency list, we set $S_{\text{out}}[u] = 1$. This procedure can be carried out in $\bigO(|V_{\DBD}|)$ and after its termination, we have that $S_{\text{out}}[u] = 0$ if and only if the out-degree of $(u, 2)$ is $0$. We move again through the list for $V_2$ and we remove every encountered node $(v, 2)$ if $S_{\text{in}}[v] = 0$ or $S_{\text{out}}[v] = 0$; moreover, we move through the list for $V_1$ and for every $(u, 1)$ we encounter, we move through its adjacency list and we remove every $(v, 2)$ that we encounter if $S_{\text{in}}[v] = 0$ or $S_{\text{out}}[v] = 0$. These two steps can again be carried out in time $\bigO(|\DBD|)$. We have now removed all nodes from $V_2$ with in-degree or out-degree $0$, and all  arcs adjacent to such nodes. Next, we remove all isolated nodes from $V_{1}$ and $V_{3}$. This can be done by first computing all nodes from $V_1$ with out-degree $0$ and all nodes from $V_3$ with in-degree $0$, which can be done in the same way as we did for nodes from $V_2$ in time $\bigO(|\DBD|)$.\medskip\\
\textbf{Enumeration}: Let $Q$ be an empty queue. For every $i = 1, 2, \ldots, n$, we start a BFS from $(i, 1)$. Such a BFS is carried out until all nodes from the neighbourhood of $(i, 1)$ have been visited; thus, the BFS proceeds by visiting all the neighbourhoods $N((j, 2))$ for every $(j, 2) \in N((i, 1))$. From now on, we add $(i, k)$ to $Q$ for every node $(k, 3)$ that we visit for the first time. Whenever $\degree(\DBD)$ steps are made by the BFS after the last output while visiting a neighbourhood $N((j, 2))$, we produce after visiting this whole neighbourhood  
 the first element from $Q$ as output and remove this element from $Q$. In order to do this efficiently, we store the elements of $Q$ not just as a list but also maintain two lists of size $[n]$ over $\{1,\dots,n\}$ storing the elements currently in and the elements already produced and deleted from $Q$, respectively.  When the BFS terminates, i.\,e., all neighbourhoods of the nodes from $N((i, 1))$ have been visited, then we produce the remaining elements in $Q$ (and thus emptying it) and proceed with the BFS for the next node from $V_1$.\medskip\\
\textbf{Correctness}: The enumeration procedure goes through $n$ phases, where in each such phase we perform a BFS in some vertex $u \in V_1$ and produce only pairs of the form $(i, j)$, where $u = (i, 1)$. This means it is sufficient to show for each phase separately that we will produce all pairs of the form $(i, j)$ (where $u = (i, 1) \in V_1$ is the BFS-start-node of this phase) without duplicates and with the desired delay. Moreover, note that the enumeration is obviously semi-sorted. Further, if $Q$ is never empty when we request the next element, the delay is obviously in $O(\degree(\DBD))$ since visiting one neighbourhood  $N((j, 2))$ requires at most $\degree(\DBD)$ steps. \par
Let $u \in V_1$ and consider the phase of the enumeration procedure that performs the BFS in $u$. The BFS visits all nodes of the neighbourhood $N(u) = \{v_1, v_2, \ldots, v_h\}$ and marks them as visited, which can be done in time $\bigO(h) = \bigO(\degree(\DBD))$. Then it visits the complete neighbourhood $N(v_1)$, then the complete neighbourhood $N(v_2)$ and so on until all nodes $\bigcup^h_{i = 1} N(v_i)$ are visited. For each $i\in[h]$, the BFS performs $k_i=\sum_{j=1}^i |N(v_j)|$ steps to visit all neighbourhoods $N(v_1),\dots,N(v_i)$, while producing at least $q_i=\max\{|N(v_j)|\mid 1\leq j\leq i\}$ different elements for $Q$.
 To show that $Q$ is never empty, it suffices to show that $\sum_{j=1}^ik_j\leq q_i\degree(\DBD)$ for all $i$; note that we request an element from $Q$ exactly after $\sum_{j=1}^{r_i} k_j$ steps for some $r_1,\dots,r_x\in [h]$ where $1=r_1<r_2<\dots,r_x$  and $\sum_{j=r_i}^{r_{i+1}+1} k_j\geq \degree(\DBD)$ for each $i\in [x]$. 
Since the degree of $u$ is at most $\degree(\DBD)$, we know that $h\leq \degree(\DBD)$ which means 
$\sum_{j=1}^ik_j\leq i\max\{|N(v_j)|\mid 1\leq j\leq i\}\leq \degree(\DBD) q_i$.
\end{proof}

Finally, by using Lemma~\ref{alternationConstructionLemma}, we can plug together Theorems~\ref{BTRPQTheorem}~and~\ref{SRPQTheorem} in order to obtain the following upper bound for $\enumProbClass{\bigvee(\SRPQ \cup \BTRPQ)}$.

\begin{thm}\label{restrictedRPQEnumTheorem}
Semi-sorted $\enumProbClass{\bigvee(\SRPQ \cup \BTRPQ)}$ can be solved with preprocessing $\bigO(|q|^2|\DBD|)$ and delay $\bigO(|q|^2\Delta(\DBD))$.
\end{thm}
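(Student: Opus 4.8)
The plan is to instantiate Lemma~\ref{alternationConstructionLemma} with $Q = \SRPQ \cup \BTRPQ$, after observing that $\bigvee(\SRPQ \cup \BTRPQ)$ is exactly $\bigvee(Q)$ for this choice of $Q$: by definition an $\RPQ$ lies in $\bigvee(\SRPQ\cup\BTRPQ)$ iff it has the form $(q_1 \altop \cdots \altop q_m)$ with every $q_i$ a short or a basic transitive $\RPQ$, i.e.\ $q_i \in Q$. So it suffices to give a semi-sorted enumeration algorithm for $\enumProbClass{Q}$ with suitable preprocessing $p(|\DBD|,|q|)$ and delay $d(|\DBD|,|q|)$ and then let the lemma combine the disjuncts.

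First I would assemble the base-class algorithm for $Q = \SRPQ\cup\BTRPQ$. Given $p\in Q$, one inspects the syntax tree of $p$ (computable in time $\bigO(|p|)$; see Proposition~\ref{constructNFAProposition}) to decide whether $p$ is of the form $(x_1 \altop \cdots \altop x_k)^{*}$ or $(x_1 \altop \cdots \altop x_k)^{+}$ --- a basic transitive $\RPQ$ --- or of the form $(x_1 \altop \cdots \altop x_k)$ or $(x_1 \altop \cdots \altop x_k)(y_1 \altop \cdots \altop y_{k'})$ --- a short $\RPQ$ --- and then runs the algorithm of Theorem~\ref{BTRPQTheorem} respectively Theorem~\ref{SRPQTheorem}. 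Both produce a semi-sorted enumeration of $p(\DBD)$: the basic transitive case needs no preprocessing and has delay $\bigO(\Delta(\DBD))$, the short case needs preprocessing $\bigO(|\DBD|)$ and delay $\bigO(\degree(\DBD)) = \bigO(\Delta(\DBD))$. Bounding these uniformly (and monotonically) by functions of $|\DBD|$ and $|q|$, semi-sorted $\enumProbClass{Q}$ can be solved with preprocessing $p(|\DBD|,|q|) = \bigO(|\DBD||q|)$ and delay $d(|\DBD|,|q|) = \bigO(\Delta(\DBD)|q|)$.

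Then I would apply Lemma~\ref{alternationConstructionLemma}. On a query $q = (q_1 \altop \cdots \altop q_m) \in \bigvee(Q)$, its algorithm needs the disjuncts $q_1,\dots,q_m$, which are read off the top-level $\altop$-nodes of the syntax tree of $q$ in time $\bigO(|q|)$ (with $m = \bigO(|q|)$ and $\sum_i |q_i| = \bigO(|q|)$), each $q_i$ being handled by the base-class algorithm above. Substituting $p$ and $d$ into the lemma gives semi-sorted $\enumProbClass{\bigvee(Q)}$ with preprocessing $\bigO(|q|\,p(|\DBD|,|q|) + |V_{\DBD}|) = \bigO(|q|\cdot|\DBD||q| + |V_{\DBD}|) = \bigO(|q|^2|\DBD|)$ (using $|V_{\DBD}| \le |\DBD|$) and delay $\bigO(|q|\,d(|\DBD|,|q|)) = \bigO(|q|\cdot\Delta(\DBD)|q|) = \bigO(|q|^2\Delta(\DBD))$, which is the claim.

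The substitution is routine; the only points needing care are (i) that the syntactic recognition of short versus basic transitive $\RPQ$s and the decomposition of a disjunction into its top-level disjuncts are both correct and linear time, and (ii) that the bounds $p$ and $d$ used for the base class are monotone in $|q|$, so that in the proof of Lemma~\ref{alternationConstructionLemma} the sum $\sum_i p(|\DBD|,|q_i|)$ is genuinely dominated by $m\cdot p(|\DBD|,|q|)$ (and likewise for the delay). I do not expect a genuine obstacle, since Lemma~\ref{alternationConstructionLemma} and Theorems~\ref{BTRPQTheorem} and~\ref{SRPQTheorem} already carry the substantive weight.
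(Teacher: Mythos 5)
Your proposal is correct and follows essentially the same route as the paper: set $Q = \SRPQ\cup\BTRPQ$, obtain base bounds of preprocessing $\bigO(|q||\DBD|)$ and delay $\bigO(\Delta(\DBD))$ from Theorems~\ref{BTRPQTheorem}~and~\ref{SRPQTheorem}, and plug them into Lemma~\ref{alternationConstructionLemma}. Your only deviation is that you pad the base-class delay to $\bigO(\Delta(\DBD)|q|)$ before applying the lemma, which is harmless (the paper's stated $\bigO(|q|^2\Delta(\DBD))$ is likewise looser than the $\bigO(|q|\Delta(\DBD))$ the lemma actually delivers), and you spell out the linear-time syntactic recognition and disjunct decomposition, which the paper takes for granted.
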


\begin{proof}
Due to Theorems~\ref{BTRPQTheorem}~and~\ref{SRPQTheorem}, both semi-sorted $\enumProbClass{\SRPQ}$ and semi-sorted $\enumProbClass{\BTRPQ}$ can be solved with preprocessing $\bigO(|q||\DBD|)$ and delay $\bigO(\Delta(\DBD))$. This obviously also means that semi-sorted $\enumProbClass{\SRPQ \cup \BTRPQ}$ can be solved with preprocessing $\bigO(|q||\DBD|)$ and delay $\bigO(\Delta(\DBD))$. Lemma~\ref{alternationConstructionLemma} now implies that semi-sorted $\enumProbClass{\bigvee(\SRPQ \cup \BTRPQ)}$ can be solved with preprocessing $\bigO(|q|^2|\DBD|)$ and delay $\bigO(|q|^2\Delta(\DBD))$.
\end{proof}

Since $q = \ta \ta \in \SRPQ$ is sufficient to express $\BMMProb$ as $\RPQ$-evaluation, Theorem~\ref{restrictedRPQEnumTheorem} implies that enumerating (the $1$-entries of) Boolean matrix products can be solved with linear preprocessing and $\bigO(n)$ delay, but, on the other hand, this also immediately implies a matching data complexity lower bound for the upper bound of Theorem~\ref{restrictedRPQEnumTheorem}.

\begin{thm}\label{shortRPQLowerBoundTheorem}
If $\enumProb(\SRPQ)$ can be solved with prep. $\bigO(|V_{\DBD}|^{3 - \epsilon} f(|q|))$ and delay $\bigO(|\degree(\DBD)|^{1 - \epsilon} f(|q|))$ for some function $f$ and $\epsilon > 0$, then the $\combBMMProb$-hypothesis fails.
\end{thm}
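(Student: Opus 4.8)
The plan is to reuse the encoding of Boolean matrix multiplication from the proof of Lemma~\ref{BMMLowerBoundEvalLemma}: that reduction already employs only the query $q = \ta\ta$, which is a short $\RPQ$, so an algorithm for $\enumProb(\SRPQ)$ with the stated preprocessing and delay can be run to completion to produce a combinatorial algorithm for $\evalProb$ on those instances, and hence for $\BMMProb$, in truly subcubic time.

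Concretely, first I would note that we may assume $\epsilon \leq 1$, since an enumeration algorithm meeting the bounds for some larger $\epsilon$ also meets them for $\epsilon = 1$ (both exponents only increase, and $\degree(\DBD),|V_{\DBD}| \geq 1$). Running the assumed $\enumProb(\SRPQ)$ algorithm on an instance $(\DBD, q)$ until it signals termination solves $\evalProb$ for that instance in time bounded by the preprocessing plus one delay per element of $q(\DBD)$ plus one more delay for the termination signal, i.e. in time
\[
\bigO\!\left(|V_{\DBD}|^{3-\epsilon} f(|q|)\right) + \bigO\!\left((|q(\DBD)|+1)\,\degree(\DBD)^{1-\epsilon} f(|q|)\right).
\]
Using $|q(\DBD)| \leq |V_{\DBD}|^{2}$ and $\degree(\DBD) \leq |V_{\DBD}|$, the second summand is $\bigO(|V_{\DBD}|^{3-\epsilon} f(|q|))$, so $\evalProb$ restricted to short $\RPQ$s is solvable in time $\bigO(|V_{\DBD}|^{3-\epsilon} f(|q|))$. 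Since the reduction in the proof of Lemma~\ref{BMMLowerBoundEvalLemma} only ever uses the query $\ta\ta \in \SRPQ$ and constructs graph databases $\DBD_{A,B}$ with $\bigO(n)$ nodes, the same argument as in that lemma (invoked with $\omega = 3-\epsilon \geq 2$) gives an algorithm for $\BMMProb$ that runs in time $\bigO(n^{3-\epsilon})$; reading off $A \times B$ from the at most $n^2$ enumerated pairs costs an extra $\bigO(n^2)$ and does not change this. Because the construction of $\DBD_{A,B}$, the enumeration itself, and the final read-off are all combinatorial, the resulting $\BMMProb$-algorithm is combinatorial, so for $\epsilon > 0$ this contradicts the $\combBMMProb$-hypothesis.

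I expect no genuine difficulty: the argument is essentially Point~$4$ of Theorem~\ref{simpleEnumLowerBounds} specialised to the fixed short query $\ta\ta$. The only things that need care are the routine bookkeeping that converts a delay guarantee into a total-running-time bound, and the observation that here $|q| = \bigO(1)$, so $f(|q|)$ is an absolute constant that may be dropped from the data-complexity estimates. The one mildly new point worth stating explicitly is that the instances output by the reduction satisfy $\degree(\DBD) = \bigO(|V_{\DBD}|)$, so that a delay bound phrased in terms of $\degree(\DBD)$ is no weaker than one phrased in terms of $|V_{\DBD}|$ — this is exactly what lets us transfer the earlier lower-bound argument to the present, degree-parametrised, setting.
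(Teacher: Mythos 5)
Your proposal is correct and follows essentially the same route as the paper's own proof: run the assumed $\enumProb(\SRPQ)$ algorithm to completion to solve $\evalProb$ in time $\bigO(|V_{\DBD}|^{3-\epsilon} f(|q|) + |q(\DBD)|\degree(\DBD)^{1-\epsilon} f(|q|))$, plug in the $\BMMProb$ instances $\DBD_{A,B}$ from Lemma~\ref{BMMLowerBoundEvalLemma} (which use the fixed short query $q=\ta\ta$, have $\bigO(n)$ nodes, and satisfy $\degree(\DBD)\leq |V_{\DBD}|$ and $|q(\DBD)|\leq n^2$), and conclude a combinatorial $\bigO(n^{3-\epsilon})$ algorithm for $\BMMProb$. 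Your extra remarks (may assume $\epsilon\leq 1$, $f(|q|)$ is a constant since $|q|=\bigO(1)$, and the need for $\epsilon>0$ for the contradiction) are sound housekeeping details that the paper treats implicitly.
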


\begin{proof}
If $\enumProb(\SRPQ)$ can be solved with preprocessing $\bigO(|V_{\DBD}|^{3 - \epsilon} f(|q|))$ and with delay $\bigO(|\degree(\DBD)|^{1 - \epsilon} f(|q|))$ for some function $f$ and $\epsilon > 0$, then we can also compute $q(\DBD)$ in total time $\bigO((|V_{\DBD}|^{3 - \epsilon} f(|q|)) + (|q(\DBD)| |\degree(\DBD)|^{1 - \epsilon} f(|q|)))$. For the reduction of Lemma~\ref{BMMLowerBoundEvalLemma} this implies that we can solve $\combBMMProb$ in time $\bigO(n^{3 - \epsilon} + (n^2 n^{1 - \epsilon})) = \bigO(n^{3 - \epsilon})$.
\end{proof}

Compared to the full class of $\RPQ$s, the class $\bigvee(\SRPQ \cup \BTRPQ)$ is quite restricted. However, comprehensive experimental analyses of query logs suggest that quite restricted $\RPQ$s are still practically relevant: in the corpus of more than $50$ million $\RPQ$s analysed in~\cite[Table~$4$]{BonifatiEtAl2019}, roughly $50\%$ of the $\RPQ$s are from $\BTRPQ$ and another $25\%$ are of the form $q = x_1 x_2 \ldots x_k$. Many of these $\RPQ$s of the form $q = x_1 x_2 \ldots x_k$ also satisfy $k \leq 2$, which means that they are $\SRPQ$s~\cite{MartensPersonalCommunication}.

\section*{Conclusions}

In this work, we thoroughly investigated the fine-grained complexity of evaluating regular path queries. We focussed on what can be considered the most simple case of $\RPQ$s, i.\,e., the solution set contains only node pairs (and no witness paths) that are connected by \emph{arbitrary} paths (in contrast to simple path semantics, trail semantics etc.). As explained in the introduction, more powerful $\RPQ$s become computationally intractable, so this simple setting is suitable for a fine-grained complexity analysis. On the other hand, it can still be considered as a core functionality to be found in typical graph query languages. We considered the evaluation problems typically investigated in database theory (see Table~\ref{ProblemsTable}): the basic decision problems of Boolean evaluation and testing, and the function problems of computing and counting the solution set, and finally the enumeration variant. \par
For the non-enumeration variants, we were able to complement the upper bounds obtained by the product-graph approach by conditional lower bounds (see Table~\ref{ResultsTableNonEnum}). Hence, an algorithmic approach leading to strictly better upper complexity bounds than the product-graph approach seems to require unlikely improvements with respect to computing orthogonal vectors and Boolean matrices. \par
Our picture is much less complete for the enumeration variant of $\RPQ$-evaluation (see Table~\ref{ResultsTableEnum}). Although we produce many individual conditional lower bounds (which rule out many algorithms), it is still open whether enumeration with $\bigO(|\DBD|)$ preprocessing and $\bigO(|V_{\DBD}|)$ delay (or any delay truly sublinear in the size of the graph database) is possible (in data complexity); see also Question~\ref{openProblem}. Since finding an enumeration algorithm that achieves a sublinear delay is definitely a worthwhile research task (or, conversely, an enumeration algorithm with a delay bound as large as the whole database seems disappointing), we investigated several approaches to find enumeration algorithms with a sublinear delay. Our first such approach is to drop the restriction of linear preprocessing and it turns out that if we allow an additional factor of $\avgdegree(\DBD)$ in the preprocessing (or $\log(\avgdegree(\DBD))\avgdegree(\DBD)$ if we do not want to use lazy-initialisation), then a delay of $\bigO(|V_{\DBD}|)$ is indeed possible (see Theorem~\ref{crossProductEnumSuperlinearPreprocTheorem} and Corollary~\ref{crossProductEnumSuperlinearPreprocCorollary}).\footnote{Recall that $\avgdegree(\DBD)$ is the average degree of $\DBD$.} Our second approach shows that even linear preprocessing and constant delay is possible, if we are satisfied with enumerating just a representative subset of the solution set (see Theorem~\ref{apprEnumTheorem}). Here, \emph{representative} means that if $u$ can reach any node $v$ with a correctly labelled path, then at least one such witness pair exists in the set (and likewise if $u$ can be reached by any node $v$ with a correctly labelled path). While this variant is quite restricted in comparison to the full evaluation task, it might be a worthwhile query result to start with (e.\,g., before running a complete enumeration algorithm \emph{without} constant delay, we could first enumerate all these witnesses in constant delay and see whether this information is already enough). Finally, in our third approach, we identify a class of $\RPQ$s that can be enumerated with linear preprocessing and delay $\bigO(\Delta(\DBD))$ (see Theorem~\ref{restrictedRPQEnumTheorem}).\footnote{Recall that $\Delta(\DBD)$ is the degree of $\DBD$.} This result points out that the simplicity of the regular expression of the query might be exploited to achieve a better delay. This is particularly interesting given the fact that very simple $\RPQ$s are already sufficient for conditional lower bound reductions with respect to data complexity. Moreover, empirical work has shown that in practical scenarios where regular expressions are used as means of querying graph databases it is often the case that the regular expressions are rather simple. In case that Question~\ref{openProblem} can be answered in the negative, and if enumeration algorithms with sublinear delay are of high relevance, we should concentrate on algorithms that only work for a special and simple class of $\RPQ$s.\par
A possible future research task with respect to the topic of this paper is to answer Question~\ref{openProblem}. We conjecture that an algorithm that answers the question in the affirmative will be non-trivial and likely to yield more general algorithmic insights with respect to querying graphs with regular expressions. If, by a conditional lower bound, it can be shown that the answer to the question is negative, then the question arises for which $\RPQ$s a sublinear delay is possible and for which it is (conditionally) not. Our Theorem~\ref{restrictedRPQEnumTheorem} constitutes a partial result in this regard.

\section*{Acknowledgment}
\noindent We wish to thank the anonymous reviewers of the conference version~\cite{CaselSchmid2021} of this work for their valuable feedback. In particular, following the reviewer's comments and suggestions, we have included more background information and comprehensive explanations of certain aspects, which substantially improved the overall exposition of this paper. We also thank the reviewers of this journal version for their thorough reviewing.

\bibliographystyle{alphaurl}   
\bibliography{bibfile}

\end{document}